\title{Factor pre-training in Bayesian multivariate logistic models}
\date{}
\author{Lorenzo Mauri}
\author{David B. Dunson}
\affil{Department of Statistical Science, Duke University, Durham, NC, 27708, U.S.A.}
\newtheorem{theorem}{Theorem}
\newtheorem{corollary}{Corollary}
\newtheorem{lemma}{Lemma}
\newtheorem{proposition}{Proposition}
\newtheorem{assumption}{Assumption}
\theoremstyle{definition}
\newtheorem{remark}{Remark}
\newcites{Supp}{References}
\begin{document}

\maketitle

\begin{abstract}
This article focuses on inference in logistic regression for high-dimensional binary outcomes. A popular approach induces dependence across the outcomes by including latent factors in the linear predictor. Bayesian approaches are useful for characterizing uncertainty in inferring the regression coefficients, factors and loadings, while also incorporating hierarchical and shrinkage structure. However, Markov chain Monte Carlo algorithms for posterior computation face challenges in scaling to high-dimensional outcomes. Motivated by applications in ecology, we exploit a blessing of dimensionality to motivate pre-estimation of the latent factors. Conditionally on the factors, the outcomes are modeled via independent logistic regressions. We implement Gaussian approximations in parallel in inferring the posterior on the regression coefficients and loadings, including a simple adjustment to obtain credible intervals with valid frequentist coverage. We show posterior concentration properties and excellent empirical performance in simulations. The methods are applied to insect biodiversity data in Madagascar.
\end{abstract}

Ecology; 
Factor analysis; 
High-dimensional; 
Joint species distribution model; 
Latent variable model; 
Multivariate logistic regression

\section{Introduction}

High-dimensional binary observations are frequently recorded in ecological studies where the presence or absence of a large number of species $p$ is documented at $n$ sampling sites \citep{abrego_18, gssp, airborne_dna_fungi, coral}. Data consist of a $n \times p$ matrix $Y=[y_{ij}]_{ij}$, where $y_{ij}=1$ if the $j$-th species was observed in the $i$-th sample or $0$  otherwise. Ecologists are often interested in inferring relationships between sample-specific covariates $x_i=(x_{i1},\ldots,x_{iq})^\top$, such as temperature and precipitation, and species occurrences, while also inferring across-species dependence in occurrence. Such inferences are based on joint species distribution models, corresponding to multivariate binary regression models for $y_i=(y_{i1},\ldots,y_{ip})^\top$ given 
$x_i$ \citep{jsdm_warton, hmsc_time_series, hmsc, jsdm, jsdm_gpu, trace}. 

A popular approach induces dependence in species occurrence through a type of generalized linear latent variable model \citep{gllvm}, which lets
\begin{eqnarray}
\mbox{pr}(y_{ij}=1|x_i,\eta_i) = 
h(x_i^\top \beta_j + \lambda_j^\top \eta_i),\quad 
\eta_i \sim N_k(0, I_k),\quad 
(i=1,\ldots,n),\label{eq:gllvm}
\end{eqnarray}
where $h:\mathbb{R} \to (0,1)$ is a link function,
$B = \begin{bmatrix}
    \beta_{1} & \cdots & \beta_{p}
\end{bmatrix}^\top\in \mathbb R^{p \times q}$ is
a matrix of regression coefficients,
 $\eta_i \in \mathbb R^k$ are sample-specific latent factors, $\Lambda = \begin{bmatrix}
    \lambda_{1} & \cdots & \lambda_{p}
\end{bmatrix}^\top \in \mathbb R^{p \times k}$ are factor loadings and $p \gg k$. The factor term induces across-outcome dependence in a parsimonious manner. 

Calculating the likelihood of $y_i|x_i$ with $\eta_i$ marginalized out typically involves an intractable integral, motivating a literature on likelihood approximations. 
\citet{niku17,niku19} developed an efficient implementation of the Laplace approximation proposed by \citet{huber_laplace}, while
\citet{gllvm_va, gllvm_R, gllvm_eva} developed variational approximations. 
These methods take a few hours for each model fit for $p \approx 1,000$, leading to computational problems in our motivating applications, which have  $p=10,000-100,000$.
\citet{gmf} (\texttt{GMF}, henceforth) proposed a penalized quasi-likelihood estimator with better scalability, but their method takes up to several hours for each model fit, requires multiple fits for hyperparameter tuning, and does not provide uncertainty quantification without adding substantially to computation - for example, the authors suggest using bootstrap. The experiments in Section \ref{sec:simulation} and Section \ref{sec:additional_experiments} of the Supplementary Materials provide more details on running times. 

Bayesian inference can bypass calculating such integrals relying on data augmentation Gibbs samplers \citep{polya_gamma, albert_chib}, alternating between sampling the loading matrix, $\Lambda$, the latent factors, $\eta_i$'s, and other parameters including $B$ \citep{hmsc}. However, these algorithms suffer from poor mixing and slow convergence in high dimensions. Although pseudo-marginal algorithms can potentially be used relying on a Monte Carlo approximation to the marginal likelihood \citep{andrieu_roberts_09, doucet_et_al_15}, such algorithms have not been sufficiently scalable in our experiments. 
Alternatively, one can rely on analytic approximations to the posterior \citep{vi_review, ad_vi, bb_vi, stoch_vi, single_cell_fm, emp_bmf, vi_mfa}. For instance, variational inference
approximates the posterior with a more tractable distribution, but typically with little theoretical guarantees and severe under estimation of uncertainty.

An issue rendering frequentist and Bayesian methods impractical in high dimensions is the need to integrate out the latent factors. An alternative is to estimate latent factors, loadings, and coefficients jointly. Joint maximum likelihood estimates \citep{birnbaum_68} treat latent factors as fixed unknown parameters and can be computationally efficient using alternating optimization. Such approaches produce inconsistent estimates in the classic asymptotic regime where $n$ diverges and $p$ remains fixed \citep{haberman_77}. 
However, many applications are characterized by high-dimensional data with $p \gg n$ making the big $n$ fixed $p$ argument less relevant. \citet{chen_jmle, chen_identfiability} show consistency of joint maximum likelihood estimates as both $n$ and $p$ diverge, but without considering general covariate matrices or providing uncertainty quantification. 
Alternatively, \citet{fable, blast} propose fast approaches for Bayesian inference in linear single- and multi-study factor models for Gaussian data in which latent factors are estimated by singular value decomposition and loadings and residual variances are given a conjugate prior given the factors. They proved that the induced posterior on the covariance concentrates at the true values and has entry-wise credible intervals with correct coverage.  

Motivated by the above literature, 
we propose Fast multivariate Logistic Analysis for Inference in Regression (\texttt{FLAIR}). We first compute joint \textit{maximum a posteriori} estimates for latent factors, factor loadings, and regression coefficients. The posterior for the loadings and regression coefficients given the factors has a simple product form across the outcomes, and we approximate each term in this product via a Gaussian distribution. 
By including a careful variance inflation, we obtain credible intervals with valid frequentist 
for the regression coefficients and induced covariance of the linear predictor. Fixing latent variables at a point estimate in approximating the posterior of $\Lambda$ and $B$ can be justified because as $p$ grows, an increasing number of variables load on the latent factors and their marginal posterior concentrates. The product form allows \texttt{FLAIR} to be parallelized, making the implementation highly efficient on multicore machines. Hyperparameters are selected in a data-driven and automated manner. 

An anonymous referee pointed out the parallel development by \citet{lee24} (\texttt{LVHML}, henceforth). \texttt{LVHML} consider a longitudinal version of model \eqref{eq:gllvm} where the intercept term can vary over time. Hence, for the $j$-th element in the $i$-th sample and the $t$-th time,  \texttt{LVHML} lets 
\begin{equation*}
    pr(y_{ijt} = 1 \mid x_i) = h(\alpha_{jt} + x_i^\top \beta_j + \eta_i^\top \lambda_j), \quad (i=1, \dots, n; j=1, \dots, p; t=1, \dots, T), 
\end{equation*}
where $T$ is the total number of time points. \texttt{LVHML} assumes latent factors ($\{\eta_i\}_{i=1}^n$) are fixed unknown constants and imposes orthogonality between latent factors and covariates for identifiability. Fixed factors prevent the interpretation of $\Lambda \Lambda^\top$ as latent covariance between outcomes. \citet{lee24} provide central limit theorems (Theorems 3 and 6 in \citet{lee24}) to quantify uncertainty about point estimates. However, in our experiments, the confidence intervals obtained using these results suffered from non-negligible undercoverage, which seems to be persistent at different values of $n$ and $p$. We refer to Section \ref{subsec:longitudinal_experiments} in the Supplementary Material for additional details on the performance of \texttt{FLAIR} and \texttt{LVHML} in longitudinal scenarios.

Our contributions include: (i) an alternating optimization scheme to compute a point estimate, which is substantially faster than current alternatives for very large $p$ scenarios, with comparable or better accuracy, while requiring minimal or no hyperparameter tuning, (ii) theoretical support for our methodology showing consistency of joint {\em maximum a posteriori} and posterior point estimates as $n$ and $p$ diverge and posterior contraction around the truth, and (iii) a method to accurately quantify uncertainty without computationally expensive Markov chain Monte Carlo. Although we are motivated by ecology applications, our method is useful in a wide range of settings, from genetics \citep{stegle} to psychology \citep{skrondal}.

\section{Methodology}\label{sec:method}
\subsection{Notation}\label{subsec:notation}
We start by establishing the notation used in the paper. For a matrix $A$, we denote by $||A||_2$, $||A||_F$, $||A||_*$, $||A||_{\infty}$  its spectral, Frobenius, nuclear and entry-wise infinity norm, respectively, and by $s_l(A)$ its $l$-th largest singular value. For a vector $v$, we denote by $||v||$, $||v||_\infty$ its Euclidean and entry-wise infinity norm, respectively. 
Moreover, for two sequences $(a_n)_{n \leq 1}$, $(b_n)_{n \leq 1}$, we say $a_n \lesssim b_n$ if there exist two constants $N_0 < \infty$ and $C < \infty$, such that $a_n \leq C b_n$ for every $n > N_0$. We say $a_n \asymp b_n$ if and only if $a_n \lesssim b_n$ and $b_n \lesssim a_n$.

\subsection{General Approach}
We consider data generated from model 
\eqref{eq:gllvm} with 
$h^{-1}(\pi)=\log\{\pi/(1-\pi)\}$ the logit link function. Letting $z_{ij}=x_i^\top\beta_j + \eta_i^\top \lambda_j$ denote the linear predictor, and marginalizing out $\eta_i$, we get 
\begin{equation}
\label{eq:model_methodology_integrated}
    \begin{aligned}
        y_{ij} \mid z_{ij} & \sim \text{\text{Ber}}\big\{h(z_{ij})\big\}, \quad z_{i}\sim N_{p}\big(B x_i, \Lambda \Lambda^\top \big),
    \end{aligned}
    \quad (j=1, \dots, p; i=1, \dots, n),
\end{equation}
where $z_{i} = \left(z_{i1}, \dots, z_{ip}\right)^\top \in \mathbb R^p.$
The linear predictor for sample $i$, $z_i$, follows a $p$ dimensional singular Gaussian distribution with a rank $k$ covariance, $\Lambda \Lambda^\top$, which models across column dependence. Thus, marginal and co-occurrence probabilities depend uniquely on $B$ and $\Lambda \Lambda^\top$, with $\Lambda \Lambda^\top$ characterizing across outcome dependence not captured by covariate effects.
If the latent factors $M =  \begin{bmatrix}  \eta_1 & \cdots & \eta_n\end{bmatrix}^\top \in \mathbb R^{n \times k}$ were known, inference on the rows of $\Lambda$ and $B$ ($\{\lambda_j, \beta_j\}_{j=1}^p$) could be carried out by $p$ independent logistic regressions using the augmented covariate matrix $[X ~ M]$, since elements of $y_{i}$ are independent conditionally on the latent factor $\eta_i$.

Motivated by this consideration, we develop a computationally efficient approach to approximate the posterior of $B,\Lambda$. We first obtain a joint maximum {\em a posteriori} estimate for the latent factors, loadings, and regression coefficients using a combination of matrix factorization and optimization techniques. Then, given the estimated latent factors $\tilde M$, we characterize the uncertainty of $\Lambda$ and $B$ by their conditional posterior distribution. This conditional posterior is equivalent to the product of posteriors for $\{\lambda_j,\beta_j\}$ over $j=1,\ldots,p$. Each of these component posteriors can be calculated in parallel and accurately approximated with Gaussian distributions.

For high-dimensional data with large $p$ and $p\gg k$, many variables tend to load on each factor, leading to posterior concentration for each $\eta_i$. This blessing of dimensionality reduces concern about under-estimation of uncertainty due to fixing latent factors at a point estimate. We introduce an analytic inflation factor to the variance of the posterior, which can be calculated without tuning, ensuring valid frequentist coverage on average across credible intervals in all the experiments we considered, with coverage very close to the nominal level in each individual experiment. The complete procedure is reported in Algorithm \ref{alg:FLAIR}.

\subsection{Joint Maximum a Posteriori Estimate}\label{subsec:pre-estimation}
We assume truncated normal priors on the $\lambda_j$'s and $\beta_j$'s, \begin{equation}\label{eq:priors_lambda_beta} \begin{aligned} \lambda_j \mid \tau_{\lambda_j} \sim  TN_k\big(0, \tau_{\lambda_j }^2 I_k, [-c_\Lambda, c_\Lambda]^k \big), \quad \beta_j \mid \tau_{\beta_j} \sim TN_q\big(0, \tau_{\beta_j}^2 I_q, [-c_B, c_B]^q \big), \end{aligned}\end{equation}
for $j=1, \dots, p$, where $TN_{m}(\mu, \Sigma, C)$ denotes a $m$-dimensional truncated normal distribution with mean $\mu$ and covariance $\Sigma$ supported on the set $C$.
In addition, we use a truncated normal distribution for the latent factors,
\begin{equation}\label{eq:prior_factors}
   \eta_i \sim TN_{k}\big(0, I_k, [-2 \sqrt{\log(kn)}, 2 \sqrt{\log(kn)}]^k\big), \quad (i=1, \dots, n).
\end{equation}
$c_\Lambda$ and $c_B$ are user specified parameters that control the infinity norm of loadings and regression coefficients, respectively. Constraining the infinity norms of the model parameters and latent factors is useful for obtaining theoretical support for our methodology.
In our experiments, $c_\Lambda$ and $c_B$ are set by default to 10 to obtain a weak constraint. 
Computing the constrained joint maximum {\em a posteriori} estimate for $(M, \Lambda, B)$ under the priors specified in \eqref{eq:priors_lambda_beta}--\eqref{eq:prior_factors} is equivalent to solving the following constrained optimization problem,
\begin{equation}
\begin{aligned}
     (\hat M, \hat \Lambda, \hat B)  = &\arg\max_{M, \Lambda, B} \log p(M,\Lambda, B \mid  Y, X)\\
     &\text{s.t. } ||M||_{\infty} \leq 2 \log^{1/2}(kn), ||\Lambda||_{\infty} \leq c_\Lambda, ||B||_{\infty} \leq c_B,
\end{aligned}
  \label{eq:cjmap}
\end{equation}
where 
\begin{equation}\label{eq:log_joint_posterior}
\begin{aligned}
     \log p(M,\Lambda, B \mid  Y, X) =  & C + \sum_{i=1}^n \sum_{j=1}^p \log p(y_{ij} \mid x_i,  \lambda_j, \beta_j, \eta_i) \\
     &- \frac{1}{2}||M||_2^2 - \frac{1}{2}tr(\Lambda^\top \Sigma_{\Lambda}^{-1} \Lambda) - \frac{1}{2}tr(B^\top \Sigma_{B}^{-1} B),
\end{aligned}
\end{equation}
with 
$\Sigma_{\Lambda} = \text{diag}(\tau_{\lambda_1}^2, \dots, \tau_{\lambda_p}^2)$, $\Sigma_{B} = \text{diag}(\tau_{\beta_1}^2, \dots, \tau_{\beta_p}^2)$, and $C$ is a constant.
We solve \eqref{eq:cjmap} by iterating between the following steps until convergence.
\begin{enumerate}
    \item  Given the estimate for the latent factors $\hat M$, we update $B$ and $\Lambda$ via
    \begin{equation}\label{eq:optimization_params}
        (\hat \Lambda, \hat B)= \arg\max_{\Lambda, B} \log p(\hat M,\Lambda, B \mid  Y, X) \quad \text{s.t. } ||\Lambda||_{\infty} \leq c_\Lambda, ||B||_{\infty} \leq c_B,
    \end{equation}
     \item Given the estimate for the loadings and regression matrices $(\hat \Lambda, \hat B)$, we update $M$ via
    \begin{equation}\label{eq:optimization_factors}
        \hat M  = \arg\max_{M}\log p( M, \hat \Lambda, \hat B \mid  Y, X)  \quad \text{s.t. } ||M||_{\infty} \leq 2 \log^{1/2}(kn).
    \end{equation}
\end{enumerate}
Each step in the optimization algorithm is parallelizable across columns or rows of $Y$, is solved via a projected Newton-Raphson method, and has a cost of $\mathcal O\{np(k+q)^3 \texttt{max\_iter}\}$, where $\texttt{max\_iter}$ is an upper bound on the number of Newton steps of each routine. We stop iterations once the relative increase in the log-posterior is smaller than a small threshold $\epsilon$; in our experiments, we set $\epsilon=0.001$. 
The starting point for the algorithm is found via an initialization based on singular value decomposition adapted from \citet{chen_jmle}. This initialization was shown to provide consistent estimates for the loadings in \citet{ifa_svd}, when $X = 1_n$. 
More details are provided in the Supplementary Material. 
The solution to \eqref{eq:cjmap}$, (\hat M, \hat \Lambda, \hat B)$, is post-processed and transformed into the triplet $(\tilde M, \tilde \Lambda, \tilde B)$, so that $X \hat B ^\top + \hat M\hat \Lambda^\top = X\tilde B^\top+ \tilde M\tilde \Lambda^\top $, $\tilde M^\top \tilde M = nI_k$, and $\tilde M^\top X = 0$. This procedure is detailed in the Supplementary Material and leaves the value of the linear predictor unchanged, while enforcing the matrix product of the transpose of the latent factors with itself and with the matrix of covariates to be equal to their expectation. We take $\tilde M$ as our final estimate for $M$. 
We show that $\tilde \Lambda \tilde \Lambda^\top$ and $\tilde B$ are consistent in terms of the relative Frobenius error in Theorem \ref{thm:recovery_Z}. In the next section, we propose an approach for uncertainty quantification.

\subsection{Posterior Computation}\label{subsec:posterior}
In a Bayesian setting, uncertainty in $\Lambda$ and $B$ is encoded in their posterior distribution
\begin{equation}    \label{eq:true_posterior}
\begin{aligned}
    p(\Lambda, B \mid Y, X) 
    & \propto \int p(Y \mid X, \Lambda, B, M) p(B) p(\Lambda) p(M) d M \\
    & \propto \int p(\Lambda, B \mid Y, X, M)p(M \mid Y, X)dM. 
\end{aligned}
\end{equation}
As described in the introduction, approximating  \eqref{eq:true_posterior} via Markov chain Monte Carlo sampling is often impractical and existing alternatives are not satisfactory. We are motivated by the consideration that, when $p$ is large, estimates of latent factors become more accurate and their marginal posterior distribution, $p(M \mid Y, X)$, concentrates. Hence, we ignore uncertainty in $M$ and approximate the posterior distribution of $(\Lambda, B)$ via their conditional posterior distribution given the estimate for $M$, $\tilde M$,
\begin{equation}\label{eq:posterior_approx_1}
     p(\Lambda, B \mid Y, X) \approx p(\Lambda, B \mid Y, X, \tilde M).
\end{equation}

Conditionally on the latent factors, columns of $Y$ are independent. 
 Thus, for independent priors on rows of $\Lambda$ and $B$, the right hand side of \eqref{eq:posterior_approx_1} factorizes into the product of $p$ terms,
    \begin{equation}\label{eq:full_conditional}
     p(\Lambda, B \mid Y,  X, \tilde M) =  \prod_{j=1}^p p(\lambda_j, \beta_j \mid Y^{(j)},X, \tilde M),
    \end{equation}
    where $Y^{(j)}$ is the $j$-th column of $Y$, and 
    \begin{equation*}
       p(\lambda_j, \beta_j \mid Y^{(j)},X, \tilde M) \propto p(\lambda_j \mid \tau_{\lambda_j})p(\beta_j \mid \tau_{\beta_j}) \prod_{i=1}^n h \left(x_i^\top \beta_j + \tilde \eta_i^\top \lambda_j \right)^{y_{ij}}\left\{1-h \left(x_i^\top \beta_j + \tilde \eta_i^\top \lambda_j \right)\right\}^{1-y_{ij}}.
    \end{equation*} 
To speed-up posterior computation, we replace  conditional posterior distributions of $\theta_j =\left(\lambda_j, \beta_j\right)$'s given $M = \tilde M$, terms on the right hand side in \eqref{eq:full_conditional}, with Gaussian approximations,
\begin{equation*}\label{eq:pseudo_posterior_theta_j}
    \tilde \Pi_j( \theta_j) = N_{k+q}(\theta_j \mid  \tilde \theta_j, \rho^2 \tilde V_j), \quad (j=1, \dots, p),
\end{equation*} 
where $\tilde \theta_j = (\tilde \lambda_j^\top, \tilde \beta_j^\top)^\top$ is the estimate described in the previous section, 
\begin{equation}
\label{eq:V_j_tilde}
\tilde V_j = \bigg[-\frac{\partial^2}{\partial \theta_j \partial \theta_j^\top}\log p( Y^{(j)} \mid X, \tilde M, \lambda_j, \beta_j) + \log p(\lambda_j \mid \hat \tau_{\lambda_j}) + \log p(\beta_j \mid \hat \tau_{\beta_j}) \mid_{\theta_j = \tilde \theta_j} \bigg]^{-1}
\end{equation}
is the corresponding inverse negative Hessian and $\rho>1$ is a variance inflation factor that is fixed in advance to achieve correct frequentist coverage of posterior credible intervals. Section \ref{subsec:rho} of the Supplementary Materials presents a closed-form formula for $\rho$. In our experiments, the credible intervals obtained from $\tilde \Pi$ for $\Lambda \Lambda^\top$ and $B$ had accurate frequentist coverage.

We approximate the posterior for $(\Lambda,B)$ in \eqref{eq:true_posterior} by approximating
\eqref{eq:full_conditional} via
\begin{equation}\label{eq:pi_tilde}
    \tilde \Pi(\theta_1, \dots, \theta_p) = \prod_{j=1}^p \tilde \Pi_j(\theta_j).
\end{equation}
Hyperparameters $\tau_B =(\tau_{\beta_1}, \dots, \tau_{\beta_p})$ and $\tau_\Lambda =(\tau_{\lambda_1}, \dots, \tau_{\lambda_p})$ are selected using a data-driven strategy reported in the Supplementary Material.

\subsection{Choice of the Number of Latent Factors}\label{subsec:choice_k}
To select the number of factors, we opt for the joint likelihood-based information criterion introduced in \cite{chen_jic} that takes the form
\begin{equation*}\label{eq:JIC}
    \text{JIC}(k) = -2 l_k + k \max(n,p) \log \{\min(n,p)\},
\end{equation*}
where $l_k$ is the value of the joint log-likelihood computed at the joint maximum likelihood estimate when the latent dimension is equal to $k$. To avoid computing the joint maximum likelihood estimate for every value of $k$, we approximate $ l_k$ with $ l_k \approx \hat l_k = \log p(Y \mid X, \hat M_k, \hat \Lambda_k, \hat B_k)$, where $( \hat M_k, \hat \Lambda_k, \hat B_k)$ are obtained via the matrix-factorization technique described in Section \ref{subsec:svd_init} of the supplemental. Thus, we set
\begin{equation}\label{eq:k_hat}
    \hat k = \arg \min_{k=1, \dots, k_{max}} \hat{ \text{JIC}}(k), \quad \hat{ \text{JIC}}(k)= -2 \hat l_k + k \max(n,p) \log\{\min(n,p)\},
\end{equation}
where $k_{max}$ is an upperbound to the latent dimension. The criterion in \eqref{eq:k_hat} always picked the correct number of latent factors in the simulations reported in Section \ref{sec:simulation}.

\begin{algorithm}
\caption{\texttt{FLAIR} procedure to obtain $N_{MC}$ approximate posterior samples.}\label{alg:FLAIR}
\begin{algorithmic}
   \Require The data matrix $Y \in \mathbb{R}^{n \times p}$, the design matrix $X$, the number of Monte Carlo samples $N_{MC}$, the upper bound on the number of factors $k_{max}$, and the upper bound on the magnitude of the regression coefficients and factor loadings $c_B, c_\Lambda$.\State \State \textbf{Step 1:} Estimate the number of latent factors via equation \eqref{eq:k_hat}. 

   \State \textbf{Step 2:} Obtain initial estimates $\hat B,\hat \Lambda$ as described in Section \ref{subsec:svd_init} in the supplemental and let $\hat \tau_{\lambda_j} = \mathcal{T}\left( k^{-1/2}\left|\left|\hat \lambda_j\right|\right|\right)$ and $\hat \tau_{\beta_j} = \mathcal{T}\left(  k^{-1/2}\left|\left|\hat \beta_j\right|\right|\right)$, where $\mathcal{T}$ is defined as $\mathcal{T}(x) = x \mathtt{1}_{\{x \in (l,u) \}} + l\mathtt{1}_{\{x \leq l \}} + u \mathtt{1}_{\{x \geq u \}}$ and $l = 0.5, u=20$.  
\State \textbf{Step 3:} Compute 
    $ (\hat M, \hat \Lambda, \hat B)$ by solving 
    \eqref{eq:cjmap}. 
    \State \textbf{Step 4:} Post-process $ (\hat M, \hat \Lambda, \hat B)$ to obtain $ \left(\tilde M, \tilde \Lambda, \tilde B\right)$  as described in Section \ref{subsec:postprocessing} of the Supplementary Material.
    \State \textbf{Step 5:} Estimate the coverage-correction factor $\rho$ as described in Section \ref{subsec:rho} of the Supplementary Material.
\State \textbf{Step 6:}\For{$j=1$ to $p$ \textbf{in parallel}}
    \For{$s=1$ to $N_{MC}$}
    \State Sample independently $\theta_j^{(s)} = (\beta_j^{(s)}, \lambda_j^{(s)})$ from
$\theta_j^{(s)} \sim N_{k+q}\left( \tilde \theta_j, \rho^2 \tilde V_j\right)$, $\tilde \theta_j = (\tilde \lambda_j^\top, \tilde \beta_j^\top)^\top$, $\tilde V_j$ is defined \\\hspace{1.1cm} in \eqref{eq:V_j_tilde}.
\State Form $B^{(s)} = \big\{
    \beta_1^{(s)}, \cdots, \beta_p^{(s)}
\big\}^\top$ and  
$\Lambda^{(s)} = \big\{
    \lambda_1^{(s)}, \cdots, \lambda_p^{(s)}
\big\}^\top$.
\State Compute the corresponding sample for the latent covariance matrix as $\Lambda^{(s)}\Lambda^{(s)\top}$.
 \EndFor
\EndFor
\State \textbf{Output:} $N_{MC}$ samples of the covariance matrix $\Lambda^{(1)}\Lambda^{(1)\top}, \ldots, \Lambda^{(N_{MC})}\Lambda^{(N_{MC})\top}$ and of
 the regression coefficients matrix $B^{(1)}, \ldots,B^{(N_{MC})} $.

\end{algorithmic}
\end{algorithm}

\section{Theoretical Support}\label{sec:theory}
 Before stating the theoretical results, we enumerate some regularity conditions.
\begin{assumption}\label{assumption:dgm}
 The data are generated according 
 \eqref{eq:gllvm} with 

true parameters $B=B_0 = 
    (\beta_{01} \cdots \beta_{0p})^\top\in \mathbb R^{p \times q}$ and $\Lambda=\Lambda_0 = (
    \lambda_{01} \cdots  \lambda_{0p})^\top \in \mathbb R^{p \times k}$. We denote by $M_0$ the matrix whose rows are given by the true latent factors $\eta_{0i}$'s and define $Z_0 = M_0 \Lambda^\top + X B_0^\top$.
\end{assumption}

\begin{assumption}\label{assumption:p}Data dimensionality $p = p_n$ satisfies $p_n \to \infty$, $\log(p_n)/n = o(1)$, $p_n \gtrsim n^{1/2}$.
\end{assumption}
Assumption \ref{assumption:p} ensures that the number of outcomes $p$ grows asymptotically with $n$ at most at a polynomial rate and not slower than its square root.
\begin{assumption}\label{assumption:lambda_sv}
   The true loadings $\Lambda_0$ satisfy $s_k(\Lambda_0) \asymp p_n^{1/2}$ as $n\to \infty$ and $||\Lambda_0||_{\infty} \leq c_\Lambda < \infty$.
\end{assumption}

\begin{assumption} \label{assumption:B}
The true regression coefficients matrix $B_0$ satisfies $||B_0||_{\infty} \leq c_B$.
\end{assumption}
\begin{assumption}\label{assumption:X}
The design matrix $X$ satisfies $||X||_{\infty} \leq c_x \log^{1/2}(qn)$ with probability at least $1 - C'/n$ and $s_q \left(X^\top X\right) \asymp n$, where $C'$ is a positive constant not depending on $n$ and $p_n$.
\end{assumption}
Assumption \ref{assumption:X} holds if, for instance, the $x_i$'s are independent samples from a sub-Gaussian distribution.
Assumptions \ref{assumption:lambda_sv} -- \ref{assumption:X} ensure $||Z_0||_{\infty} \lesssim \log^{1/2}\{(k \vee q) n\}$, with high probability, since, as we show in the supplement $||M_0||_{\infty} \leq 2\log^{1/2}(kn)$, with probability at least $1- 2/n$. 

\begin{assumption}\label{assumption:hyperparametes}
    The hyperparameters $\tau_B, \tau_\Lambda, k, \rho$ are fixed constants. 
\end{assumption}

First, we show that the accuracy of point estimates improves as $n,p$ diverge. Treating the distribution arising from \eqref{eq:pi_tilde} as our  posterior distribution, the posterior mean of $\Lambda \Lambda^\top$ and $B$ are available in closed form once the $\tilde \lambda_j$'s, $\tilde V_j$'s, and $\tilde \beta_j$'s have been computed and are given by
\begin{equation}\label{eq:expectation_tilde}
   \tilde \Sigma =  E_{\tilde \Pi}(\Lambda \Lambda^\top) = \tilde \Lambda \tilde \Lambda^\top + \rho^2 \tilde D,  \quad   E_{\tilde \Pi}(B) = \tilde B,
\end{equation}
where $E_{\tilde \Pi}$ denotes the expectation under $\tilde \Pi$, $\tilde D = \text{diag}\big\{ tr(\tilde V_{\lambda_1}), \dots, tr(\tilde V_{\lambda_p}) \big\}$,
$\tilde V_{\lambda_j}$ is the marginal posterior variance of $\lambda_j$ from $\tilde V_j$, and $\tilde \Lambda$ and $\tilde B$ are matrices whose $j$-th rows are 
$\tilde \lambda_j$ and $\tilde \beta_j$ respectively. We show that $\tilde \Sigma$ and $\tilde B$ are consistent estimates; $\tilde{\Sigma}$ has a
low-rank plus diagonal form. An alternative low-rank estimator can be obtained by discarding the diagonal part and estimating
$\Lambda_0 \Lambda_0^\top$ via $\tilde \Lambda \tilde \Lambda^\top$; this estimator has comparable asymptotic performance as shown below.

\begin{theorem}[Accuracy of Point Estimates]
    \label{thm:recovery_Z}
    Suppose Assumptions \ref{assumption:dgm} -- \ref{assumption:hyperparametes} hold and define $\hat Z =X \hat B^\top + \hat M \hat \Lambda^\top$, where $(\hat M, \hat \Lambda, \hat B)$ is the solution to \eqref{eq:cjmap}. Then, with probability at least $1- C/n$,
    \begin{equation}\label{eq:parameters_cjmap}
      \frac{1}{(np)^{1/2}}\left|\left|\hat Z - Z_0\right|\right|_F \lesssim   e^{c_Z \log^{1/2}\{(k \vee q)n\}} \bigg\{\frac{1}{n^{1/2}} + \frac{\log^{1/2}(kn)}{p_n^{1/2}} \bigg\}.
    \end{equation}
    Moreover, define $(\tilde M, \tilde \Lambda, \tilde B)$ as the triplet obtained post-processing the joint maximum a posteriori estimate $(\hat M, \hat \Lambda, \hat B)$ solution to \eqref{eq:cjmap}. 
    Then, with probability at least $1- C/n$, we have
    \begin{align}
     \min_{R \in \mathbb R^{k \times k}: R^\top R = I_k} \frac{1}{n} \left|\left|\tilde M R - M_0\right|\right|_F 
     &\lesssim   e^{c_Z \log^{1/2}\{(k \vee q)n\}} \bigg\{\frac{1}{n^{1/2}} + \frac{\log^{1/2}(kn)}{p_n^{1/2}}\bigg\},\label{eq:recovery_P_U}\\
      \frac{\left|\left|\bar \Sigma - \Lambda_0 \Lambda_0^\top \right|\right|_F}{\left|\left| \Lambda_0 \Lambda_0^\top \right|\right|_F}&\lesssim e^{c_Z \log^{1/2}\{(k \vee q)n\}} \bigg\{\frac{1}{n^{1/2}} + \frac{\log^{1/2}(kn)}{p_n^{1/2}} \bigg\} \label{eq:posterior_mean_Lambda_outer_accuracy},\\
      \frac{\left|\left|\tilde \Lambda \tilde \Lambda^\top - \Lambda_0 \Lambda_0^\top \right|\right|_F}{\left|\left| \Lambda_0 \Lambda_0^\top \right|\right|_F}&\lesssim  e^{c_Z \log^{1/2}\{(k \vee q)n\}} \bigg\{\frac{1}{n^{1/2}} + \frac{\log^{1/2}(kn)}{p_n^{1/2}} \bigg\}\label{eq:recovery_Lambda_outer}, \\
       \frac{1}{(p_nq)^{1/2}}\left|\left|\tilde B - B_0\right|\right|_F&\lesssim  e^{c_Z \log^{1/2}\{(k \vee q)n\}} \bigg\{\frac{1}{n^{1/2}} + \frac{\log^{1/2}(kn)}{p_n^{1/2}} \bigg\}\label{eq:recovery_B},
    \end{align}
where $C$ and $c_Z$  are universal constants not depending on $n$ and $p_n$.
\end{theorem}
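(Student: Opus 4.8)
The plan is to derive the Frobenius bound \eqref{eq:parameters_cjmap} for the joint \emph{maximum a posteriori} estimate first, and then propagate it to \eqref{eq:recovery_P_U}--\eqref{eq:recovery_B} through the post-processing constraints and standard perturbation bounds. For \eqref{eq:parameters_cjmap} I would use a basic-inequality argument on the objective in \eqref{eq:log_joint_posterior}. Write $\ell(Z)=\sum_{i,j}\{y_{ij}z_{ij}-\log(1+e^{z_{ij}})\}$ for the Bernoulli log-likelihood and $\mathrm{pen}(M,\Lambda,B)$ for the nonnegative quadratic penalty in \eqref{eq:log_joint_posterior}. Under Assumptions \ref{assumption:lambda_sv}--\ref{assumption:X}, the truth $(M_0,\Lambda_0,B_0)$ is feasible for \eqref{eq:cjmap} with probability at least $1-C/n$ (using $\|M_0\|_\infty\le 2\log^{1/2}(kn)$ with probability $1-2/n$, as stated in the text), so optimality of $(\hat M,\hat\Lambda,\hat B)$ gives $\ell(\hat Z)-\ell(Z_0)\ge \mathrm{pen}(\hat M,\hat\Lambda,\hat B)-\mathrm{pen}(M_0,\Lambda_0,B_0)\ge -\mathrm{pen}(M_0,\Lambda_0,B_0)$, and $\mathrm{pen}(M_0,\Lambda_0,B_0)\lesssim nk+p_nk+p_nq$ with high probability. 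A second-order Taylor expansion of $\ell$ in the natural parameter, whose Hessian is diagonal, yields $\ell(\hat Z)-\ell(Z_0)=\langle Y-h(Z_0),\hat Z-Z_0\rangle-\tfrac12\sum_{i,j}h'(\check z_{ij})(\hat z_{ij}-z_{0ij})^2$. The crucial input is a uniform curvature lower bound: the $\ell_\infty$ constraints in \eqref{eq:cjmap} together with the bound $\|Z_0\|_\infty\lesssim\log^{1/2}\{(k\vee q)n\}$ established before Assumption \ref{assumption:hyperparametes} force $\|\hat Z\|_\infty\vee\|Z_0\|_\infty\lesssim\log^{1/2}\{(k\vee q)n\}$, hence $h'(\check z_{ij})\gtrsim e^{-c_Z\log^{1/2}\{(k\vee q)n\}}$; this is precisely where the exponential prefactor enters.

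Combining these two facts, $e^{-c_Z\log^{1/2}\{(k\vee q)n\}}\,\|\hat Z-Z_0\|_F^2\lesssim \langle Y-h(Z_0),\hat Z-Z_0\rangle+(nk+p_nk+p_nq)$. The residual $G=Y-h(Z_0)$ has independent, mean-zero, $[-1,1]$-valued entries, and $\Delta:=\hat Z-Z_0=X(\hat B-B_0)^\top+(\hat M\hat\Lambda^\top-M_0\Lambda_0^\top)$ has rank at most $q+2k$ and $\|\Delta\|_\infty\lesssim\log^{1/2}\{(k\vee q)n\}$, so $\langle G,\Delta\rangle\le (q+2k)^{1/2}\|G\|_2\|\Delta\|_F$ with $\|G\|_2\lesssim (n+p_n)^{1/2}$ up to logarithmic factors, with probability $1-C/n$, by a rectangular random matrix / matrix Bernstein bound. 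A finer split of $\langle G,\Delta\rangle$ into the covariate part $\langle G^\top X,\hat B-B_0\rangle$ (controlled using $\|X\|_\infty\lesssim\log^{1/2}(qn)$) and the low-rank factor part is what produces the $\log^{1/2}(kn)$ factor multiplying $p_n^{-1/2}$. Treating the resulting inequality as a scalar quadratic in $\|\Delta\|_F$ and solving gives $\|\hat Z-Z_0\|_F\lesssim e^{c_Z\log^{1/2}\{(k\vee q)n\}}\{p_n^{1/2}+(n\log(kn))^{1/2}\}$, which is \eqref{eq:parameters_cjmap} after dividing by $(np_n)^{1/2}$.

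For the remaining displays I would exploit that the post-processing leaves the linear predictor unchanged, so $\tilde Z:=X\tilde B^\top+\tilde M\tilde\Lambda^\top=\hat Z$, together with $\tilde M^\top X=0$ and $\tilde M^\top\tilde M=nI_k$. Writing $P_X=X(X^\top X)^{-1}X^\top$, these identities give exactly $\tilde M\tilde\Lambda^\top=(I-P_X)\tilde Z$ and $\tilde B^\top=(X^\top X)^{-1}X^\top\tilde Z$. Projecting $Z_0=XB_0^\top+M_0\Lambda_0^\top$ the same way and using $\|P_XM_0\Lambda_0^\top\|_F\le\|P_XM_0\|_F\|\Lambda_0\|_2\lesssim q^{1/2}p_n^{1/2}$ (since $\|P_XM_0\|_F^2=\mathrm{tr}(M_0^\top P_XM_0)\lesssim q$ by concentration of $X^\top M_0$, and $\|\Lambda_0\|_2\lesssim p_n^{1/2}$ by Assumption \ref{assumption:lambda_sv}), and analogously $\|(X^\top X)^{-1}X^\top M_0\Lambda_0^\top\|_F\lesssim (p_nq/n)^{1/2}$, gives $\|\tilde M\tilde\Lambda^\top-M_0\Lambda_0^\top\|_F\lesssim\|\hat Z-Z_0\|_F+p_n^{1/2}$ and $\|\tilde B-B_0\|_F\lesssim n^{-1/2}\|\hat Z-Z_0\|_F+(p_nq/n)^{1/2}$. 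Dividing the latter by $(p_nq)^{1/2}$ yields \eqref{eq:recovery_B}. Since $\mathrm{col}(\tilde U)=\mathrm{col}(\tilde M\tilde\Lambda^\top)$, $\mathrm{col}(U_0)=\mathrm{col}(M_0\Lambda_0^\top)$, both products have rank $k$, and $s_k(M_0\Lambda_0^\top)\asymp(np_n)^{1/2}$ dominates the perturbation, Wedin's $\sin\Theta$ theorem gives $\|P_{\tilde U}-P_{U_0}\|_F\lesssim\|\tilde M\tilde\Lambda^\top-M_0\Lambda_0^\top\|_F/(np_n)^{1/2}$, which is \eqref{eq:recovery_P_U}. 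For \eqref{eq:recovery_Lambda_outer} I would use $\tilde\Lambda\tilde\Lambda^\top=n^{-1}(\tilde M\tilde\Lambda^\top)^\top(\tilde M\tilde\Lambda^\top)$ and $(M_0\Lambda_0^\top)^\top(M_0\Lambda_0^\top)=n\Lambda_0\Lambda_0^\top+\Lambda_0(M_0^\top M_0-nI_k)\Lambda_0^\top$ with $\|M_0^\top M_0-nI_k\|_2\lesssim(n\log n)^{1/2}$ with high probability, together with the identity $A^\top A-B^\top B=A^\top(A-B)+(A-B)^\top B$ and $\|\tilde M\tilde\Lambda^\top\|_2\vee\|M_0\Lambda_0^\top\|_2\lesssim(np_n)^{1/2}$; dividing by $\|\Lambda_0\Lambda_0^\top\|_F\asymp p_n$ gives the bound. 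Finally \eqref{eq:posterior_mean_Lambda_outer_accuracy} follows since $\bar\Sigma$ differs from $\tilde\Lambda\tilde\Lambda^\top$ only by the diagonal term $\rho^2\tilde D$, and $\|\tilde D\|_F\lesssim p_n^{1/2}e^{c_Z\log^{1/2}\{(k\vee q)n\}}/n$: indeed $\mathrm{tr}(\tilde V_{\lambda_j})\le k\,\lambda_{\max}(\tilde V_j)\lesssim k\,e^{c_Z\log^{1/2}\{(k\vee q)n\}}/n$ because the per-column negative Hessian in \eqref{eq:V_j_tilde} has smallest eigenvalue $\gtrsim e^{-c_Z\log^{1/2}\{(k\vee q)n\}}\,s_{\min}\{[\tilde M~X]^\top[\tilde M~X]\}\asymp e^{-c_Z\log^{1/2}\{(k\vee q)n\}}n$, using again $\tilde M^\top X=0$, $\tilde M^\top\tilde M=nI_k$ and Assumption \ref{assumption:X}.

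The main obstacle is \eqref{eq:parameters_cjmap}. Two points there require real care: (i) the uniform curvature lower bound, which depends entirely on the a priori $\ell_\infty$ control of all linear predictors supplied by the constraints and Assumptions \ref{assumption:lambda_sv}--\ref{assumption:X}; and (ii) the concentration / empirical-process bound for the linear term $\langle Y-h(Z_0),\Delta\rangle$ over the class of rank-$(q+2k)$, $\ell_\infty$-bounded perturbations, with logarithmic dependence sharp enough to match the stated rate. Once \eqref{eq:parameters_cjmap} and the post-processing constraints $\tilde M^\top X=0$, $\tilde M^\top\tilde M=nI_k$ are available, the remaining steps are mechanical linear algebra and textbook $\sin\Theta$ perturbation bounds.
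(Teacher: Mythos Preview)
Your proposal is correct and follows the same route as the paper: basic inequality from optimality of the MAP over the feasible truth, second-order Taylor expansion of $\ell$, uniform curvature lower bound from the $\ell_\infty$ constraints (producing the $e^{c_Z\log^{1/2}\{(k\vee q)n\}}$ prefactor), control of the linear term via $\langle G,\Delta\rangle\le\sqrt{\mathrm{rank}(\Delta)}\,\|G\|_2\|\Delta\|_F$ with $\|G\|_2\lesssim n^{1/2}+p_n^{1/2}$, and then a scalar quadratic in $\|\Delta\|_F$. The post-processing steps, the $\sin\Theta$ argument for \eqref{eq:recovery_P_U}, the identity $\tilde\Lambda\tilde\Lambda^\top=n^{-1}(\tilde M\tilde\Lambda^\top)^\top(\tilde M\tilde\Lambda^\top)$ for \eqref{eq:recovery_Lambda_outer}, and the Hessian eigenvalue bound on $\tilde V_j$ for \eqref{eq:posterior_mean_Lambda_outer_accuracy} all match the paper.

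One small correction: the $\log^{1/2}(kn)$ factor attached to $p_n^{-1/2}$ in \eqref{eq:parameters_cjmap} does \emph{not} come from a finer split of the linear term. In the paper it arises from the penalty side: on the high-probability event $\|M_0\|_\infty\le 2\log^{1/2}(kn)$ one bounds $\|M_0\|_F^2\le 4kn\log(kn)$, and this term survives as $\sqrt{kn\log(kn)}/(np_n)^{1/2}\asymp\log^{1/2}(kn)/p_n^{1/2}$ after solving the quadratic. The paper handles the linear term by the single operator-norm inequality you wrote, with no split; your proposed decomposition into covariate and factor parts is unnecessary (and would not by itself generate that logarithm). Your stated penalty bound $\mathrm{pen}(M_0,\Lambda_0,B_0)\lesssim nk+p_nk+p_nq$ is in fact slightly sharper than what the paper uses and would give the theorem without the $\log^{1/2}(kn)$, so nothing is lost.
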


\begin{proof}
    The proofs of all theoretical results are reported in the Supplementary Material.
\end{proof}
\begin{remark}[Blessing of Dimensionality]
    The magnitude of the errors of the joint maximum {\em a posteriori} estimates decrease at a rate $\mathcal O\Big(\frac{1}{n^{1/2}} + \frac{1}{p_n^{1/2}}\Big)$ up to subpolynomal factors. Hence, we have a blessing of dimensionality with consistency holding only if both $n$ and $p_n$ diverge.
\end{remark}

\begin{remark}[Convergence Rate]
The bound in Theorem \ref{thm:recovery_Z} is less tight than the related bounds in \cite{chen_jmle, lee24}. This is due to different assumptions on latent factors and covariates. These works assume that $\eta_i$' s and $x_i$' s lie in a compact space that is not dependent on $n$ and $p$, while our assumptions are more general, leading to the factor $ e^{c_Z \log^{1/2}\{(k \vee q)n\}}$ on the right-hand side of Theorem \ref{thm:recovery_Z}. We refer to the note after the proof of Theorem \ref{thm:recovery_Z} in the Supplementary Material for a more detailed explanation.
\end{remark}

The right-hand side of \eqref{eq:parameters_cjmap} is better than the rate in \citet{chen_jmle} for the joint maximum likelihood estimate and the same up to subpolynomial terms as \citet{chen_identfiability}, 
which focuses on confirmatory factor analysis with $X = 1_n$. The result in \eqref{eq:recovery_P_U} bounds the Procrustes error of the estimate for the latent factors $\tilde M$. In \eqref{eq:posterior_mean_Lambda_outer_accuracy}--\eqref{eq:recovery_Lambda_outer} and \eqref{eq:recovery_B}, we normalize the left-hand side by dividing by the norm of $\Lambda_0 \Lambda_0^\top$ and $(p_n q)^{1/2}$ to make the estimation error comparable as the dimension $p_n$ increases.

Next, we characterize the contraction of the posterior distribution $\tilde \Pi$ around the true parameters. 
\begin{theorem}[Posterior Contraction]\label{thm:posterior_contraction}
  Suppose Assumptions \ref{assumption:dgm} -- \ref{assumption:hyperparametes} hold. Then, for $M \in \mathbb R$ sufficiently large, we have 
     \begin{align}
        \text{pr} \bigg[ \tilde \Pi\bigg\{\frac{\big|\big| \Lambda  \Lambda^\top - \Lambda_0 \Lambda_0^\top \big|\big|_F}{\big|\big| \Lambda_0 \Lambda_0^\top \big|\big|_F}> M\epsilon_n\bigg\} \leq C/n\bigg]&\geq 1 - C/n\\
        \text{pr} \bigg[ \tilde \Pi\bigg\{\frac{1}{\sqrt{p_n q}} \big|\big| B - B_0\big|\big|_F>  M\epsilon_n \bigg\} \leq C/n\bigg]&\geq 1 - C/n 
    \end{align}
    where  $\epsilon_n=e^{c_Z \log^{1/2}\{(k \vee q)n\}}\left\{\frac{\log^{1/2}(p_n)}{n^{1/2}} + \frac{\log^{1/2}(kn)}{p_n^{1/2}} \right\}$,
    $\text{pr}$ and $\tilde \Pi$ denote the true data generating probability measure and the posterior probability measure induced by \eqref{eq:pi_tilde} respectively, and $C$ and $c_Z$ are universal constants not depending on $n$ and $p_n$.
\end{theorem}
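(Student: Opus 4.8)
The plan is to split the posterior deviation into a \emph{bias} part, already controlled by Theorem~\ref{thm:recovery_Z}, and a \emph{stochastic} part coming from the spread of the Gaussian components $\tilde\Pi_j$; the latter is handled by Gaussian tail bounds together with a union bound over the $p_n$ outcomes. I describe the argument for $\Lambda\Lambda^\top$, the one for $B$ being analogous and simpler. Under $\tilde\Pi$ write $\Lambda=\tilde\Lambda+E$, where $E$ has independent rows $e_j\sim N(0,\rho^2\tilde V_{\lambda_j})$ and $\tilde V_{\lambda_j}$ is the $\lambda_j$-block of $\tilde V_j$. Expanding $\Lambda\Lambda^\top-\Lambda_0\Lambda_0^\top=(\tilde\Lambda\tilde\Lambda^\top-\Lambda_0\Lambda_0^\top)+\tilde\Lambda E^\top+E\tilde\Lambda^\top+EE^\top$, the triangle and sub‑multiplicativity inequalities give $||\Lambda\Lambda^\top-\Lambda_0\Lambda_0^\top||_F\le ||\tilde\Lambda\tilde\Lambda^\top-\Lambda_0\Lambda_0^\top||_F+2||\tilde\Lambda||_2||E||_F+||E||_F^2$. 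On the event of Theorem~\ref{thm:recovery_Z}, of probability at least $1-C/n$, the first term is at most $C\epsilon_n||\Lambda_0\Lambda_0^\top||_F$ by \eqref{eq:recovery_Lambda_outer} (note the bound there is already $\le C\epsilon_n$ after using $\log^{1/2}(p_n)\ge 1$); moreover Assumption~\ref{assumption:lambda_sv} with $||\Lambda_0||_\infty\le c_\Lambda$ forces every singular value of $\Lambda_0$ to be $\asymp p_n^{1/2}$, so $||\Lambda_0\Lambda_0^\top||_F\asymp p_n$, and \eqref{eq:recovery_Lambda_outer} then also yields $||\tilde\Lambda||_2\lesssim p_n^{1/2}$. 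It remains to bound $||E||_F$ under $\tilde\Pi$.

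The key step is a uniform-in-$j$ control of $\tilde V_{\lambda_j}$. Writing $\tilde u_i=(x_i^\top,\tilde\eta_i^\top)^\top$ and $w_{ij}=h(\tilde z_{ij})\{1-h(\tilde z_{ij})\}$ with $\tilde z_{ij}$ the post-processed linear predictor, the inverse negative Hessian in \eqref{eq:V_j_tilde} satisfies $\tilde V_j^{-1}=\sum_{i=1}^n w_{ij}\tilde u_i\tilde u_i^\top+\Omega_j$ with $\Omega_j\succeq 0$ the prior-precision term, so $\tilde V_j^{-1}\succeq(\min_i w_{ij})\sum_i\tilde u_i\tilde u_i^\top$. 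After the post-processing, $\tilde M^\top\tilde M=nI_k$ and $\tilde M^\top X=0$, hence $\sum_i\tilde u_i\tilde u_i^\top=\mathrm{diag}(X^\top X,\,nI_k)\succeq c\,n\,I_{k+q}$ by Assumption~\ref{assumption:X}. Since the post-processing leaves the linear predictor unchanged, $\tilde z_{ij}=\hat z_{ij}$, and $||\hat Z||_\infty\lesssim\log^{1/2}\{(k\vee q)n\}$ with probability at least $1-C/n$ by the constraints in \eqref{eq:cjmap} together with Assumptions~\ref{assumption:lambda_sv}--\ref{assumption:X} (this is part of the proof of Theorem~\ref{thm:recovery_Z}); using $h(t)\{1-h(t)\}\ge\tfrac14 e^{-|t|}$ gives $\min_i w_{ij}\gtrsim e^{-c_Z\log^{1/2}\{(k\vee q)n\}}$. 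Therefore $\tilde V_{\lambda_j}\preceq C e^{c_Z\log^{1/2}\{(k\vee q)n\}}n^{-1}I_k$ uniformly in $j$, and likewise $\tilde V_{\beta_j}\preceq C e^{c_Z\log^{1/2}\{(k\vee q)n\}}n^{-1}I_q$; here $k,q,\rho$ are fixed by Assumption~\ref{assumption:hyperparametes}.

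Set $\sigma_n^2=Ce^{c_Z\log^{1/2}\{(k\vee q)n\}}/n$. Each $e_j$ is Gaussian with covariance $\preceq\rho^2\sigma_n^2 I_k$ and $k$ fixed, so $\tilde\Pi(||e_j||>t)\le C\exp(-ct^2/\sigma_n^2)$; a union bound over $j=1,\dots,p_n$ with the right-hand side forced to be at most $C/n$ needs $t\asymp\sigma_n\{\log(p_n n)\}^{1/2}\asymp\sigma_n\log^{1/2}(p_n)$, using $\log n\lesssim\log p_n$ from Assumption~\ref{assumption:p}. Hence, with $\tilde\Pi$-probability at least $1-C/n$, $\max_j||e_j||\lesssim\sigma_n\log^{1/2}(p_n)$ and so $||E||_F\le p_n^{1/2}\max_j||e_j||\lesssim e^{c_Z\log^{1/2}\{(k\vee q)n\}/2}\{p_n\log(p_n)/n\}^{1/2}$. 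Inserting this together with $||\tilde\Lambda||_2\lesssim p_n^{1/2}$ and $||\Lambda_0\Lambda_0^\top||_F\asymp p_n$ into the display of the first paragraph, and using $\log(p_n)/n=o(1)$ from Assumption~\ref{assumption:p} to absorb the quadratic term, the relative error is $\lesssim e^{c_Z'\log^{1/2}\{(k\vee q)n\}}\log^{1/2}(p_n)/n^{1/2}\le M\epsilon_n$ for $M$ large. For $B$ the same reasoning gives, under $\tilde\Pi$ with probability at least $1-C/n$, $(p_n q)^{-1/2}||B-\tilde B||_F\le q^{-1/2}\max_j||\beta_j-\tilde\beta_j||\lesssim e^{c_Z\log^{1/2}\{(k\vee q)n\}/2}\log^{1/2}(p_n)/n^{1/2}$, and combining with \eqref{eq:recovery_B} yields the claim. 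Intersecting the data-side event (Theorem~\ref{thm:recovery_Z} and the $\ell_\infty$ bound on $\hat Z$) with the $\tilde\Pi$-side events, each of probability at least $1-C/n$, completes the proof.

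The main obstacle is the uniform lower bound on $\tilde V_j^{-1}$ in the second paragraph: it relies on the post-processing orthogonalization and normalization to make $\sum_i\tilde u_i\tilde u_i^\top$ block diagonal with both blocks of order $n$, and on transferring the $\ell_\infty$ control of $\tilde Z=\hat Z$ into a lower bound on the logistic weights $w_{ij}$, which is exactly where the sub-polynomial factor $e^{c_Z\log^{1/2}\{(k\vee q)n\}}$ enters. A secondary point is that driving the $\tilde\Pi$-failure probability all the way down to $C/n$, rather than a fixed constant, is what forces the union bound over the $p_n$ outcomes and thereby produces the extra $\log^{1/2}(p_n)$ in $\epsilon_n$ relative to the point-estimate rate of Theorem~\ref{thm:recovery_Z}.
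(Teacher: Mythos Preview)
Your proof is correct and follows essentially the same route as the paper: split into a bias term handled by Theorem~\ref{thm:recovery_Z} and a stochastic term coming from the Gaussian spread of $\tilde\Pi$, bound $\tilde V_j$ uniformly in $j$ via the post-processing identities $\tilde M^\top\tilde M=nI_k$, $\tilde M^\top X=0$ and the logistic-weight lower bound $w_{ij}\gtrsim e^{-c_Z\log^{1/2}\{(k\vee q)n\}}$, then apply a Gaussian tail bound with a union bound over the $p_n$ outcomes to produce the extra $\log^{1/2}(p_n)$ factor. The paper packages your second paragraph as a separate ``Note on the Posterior Variance'' and is terser about the $\Lambda\Lambda^\top$ decomposition, but the content is the same.
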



For both $\Lambda_0 \Lambda_0^\top$ and $B_0$, we rescale the distance by suitable quantities to take into account the growing dimension of the parameter space. The contraction rate is given by $\frac{1}{n^{1/2}} + \frac{1}{p_n^{1/2}}$ up to subpolynomal factors and is the same, modulo a logarithmic term, as rates for the estimates.

\section{Numerical Experiments} \label{sec:simulation}
We conduct a simulation study to illustrate the performance of \texttt{FLAIR} in estimation accuracy and uncertainty quantification for 
 $\Lambda_0 \Lambda_0^\top$ and $B_0$, as well as computing time.
We simulate data from model \eqref{eq:gllvm}, where parameters are generated as follows
\begin{equation*}\begin{aligned}
\lambda_{0jl} &\sim 0.5 \delta_0  + 0.5 TN(0, \sigma^2, [-5, 5]), \quad \beta_{0jl'} \sim  0.5 \delta_0  + 0.5TN(0,\sigma^2, [-5, 5])\\ 
\end{aligned}  \end{equation*} for $j=1, \dots, p$, $l=1, \dots, k$, $l'=1, \dots, q$. 
We let the sample and outcome sizes be $(n, p) \in \{500, 1000\} \times \{1000, 5000, 10000\}$, and set $\sigma^2 = 0.5$, $k=q=100$. 

For each configuration, we replicate the experiments 50 times. In each replicate, covariates and latent factors are generated as 
\begin{equation*}
    x_i = (1, x_{i2}, \dots x_{iq}), \quad  x_{ij}\sim N(0,1),
    \quad \eta_i\sim N_k(0,I_k) \quad (j=2, \dots, q; i=1, \dots, n)
\end{equation*}
We evaluate estimation accuracy for $\Lambda_0 \Lambda_0^\top$ and $B_0$ via the Frobenius norm of the difference of the estimate and true parameter scaled by $\left|\left| \Lambda_0 \Lambda_0^\top\right| \right|_F$ and $(pq)^{1/2}$ respectively, as in Section \ref{sec:theory}. We evaluated uncertainty quantification through the average frequentist coverage of equal-tail $95\%$ credible intervals for individual parameters. For \texttt{FLAIR}, we use posterior means as point estimates, but other possible estimates mentioned above had similar performance.

We compare to \texttt{GMF} using code  at \href{https://github.com/kidzik/gmf}{https://github.com/kidzik/gmf}, using either a Newton method with a simplified Hessian or alternating iteratively reweighted least squares. These two approaches had substantially different computing time and estimation accuracy, so we report results for both. For each replicate, we performed a random $80\%/20\%$ train-test split and chose the hyperparameters to maximize the test set area under the ROC curve; then we re-fitted the model with the full data. In the supplemental, we consider scenarios with lower-dimensional parameters, longitudinal data, and without covariates, where we also include a comparison with standard implementations of generalized linear latent variable models \citep{gllvm_va, gllvm_eva}, \texttt{LVHML}, and \citet{chen_jmle}'s method, respectively.
\begin{table}
\centering	
{		\begin{tabular}{crrrcrrr}
		&  \multicolumn{7}{c}{$p=1000$}  \\
& \multicolumn{3}{c}{$n=500$}& &\multicolumn{3}{c}{$n=1000$} \\
		Method & $\Lambda \Lambda^\top$ & $B$ & time (s)  & & $\Lambda \Lambda^\top$ & $B$ & time (s) \\
		\texttt{GMF - Newton} & $44.36^{1.15}$  & $14.49^{0.07}$ & $35.50^{0.69}$ & & $29.13^{0.88}$ &  $10.19^{0.09}$ & $76.16^{1.63}$\\ 
  		\texttt{GMF - Airwls} & $41.79^{0.18}$ & $14.01^{0.03}$ & $221.45^{11.83}$ &  &$>100$  & $>100$ & $495.30^{243.58}$  \\ 
		\texttt{FLAIR} & $38.95^{0.12}$  & $14.35^{0.04}$ & $5.41^{0.21}$ & & $27.29^{0.07}$ & $10.17^{0.03}$ &$12.05^{0.42}$ \\ 
 &  \multicolumn{7}{c}{$p=5000$}  \\
& \multicolumn{3}{c}{$n=500$}& &\multicolumn{3}{c}{$n=1000$} \\
 Method & $\Lambda \Lambda^\top$ & $B$ & time (s) & & $\Lambda \Lambda^\top$ & $B$ & time (s) \\
		\texttt{GMF - Newton} & $41.46^{1.00}$ & $14.27^{0.06}$& $122.17^{3.65}$ &  & $28.31^{0.88}$ & $9.95^{0.09}$& $140.78^{4.92}$  \\ 
  		\texttt{GMF - Airwls} & $41.20^{0.14}$ & $14.05^{0.03}$ & $1729.77^{149.25}$  &  & $>100$ 
    &  $>100$ & $3854.37^{299.71}$ \\ 
		\texttt{FLAIR} & $39.30^{0.12}$&  $14.26^{0.04}$ &$19.52^{0.73}$ &  & $27.33^{0.07}$  & $10.09^{0.03}$ & $25.25^{0.85}$\\ 
&  \multicolumn{7}{c}{$p=10000$}  \\
& \multicolumn{3}{c}{$n=500$}& &\multicolumn{3}{c}{$n=1000$} \\
 Method & $\Lambda \Lambda^\top$ & $B$ & time (s) & & $\Lambda \Lambda^\top$ & $B$ & time (s) \\
		\texttt{GMF - Newton} & $41.44^{0.56}$ & $14.24^{0.04}$& $175.00^{9.93}$ & & $29.97^{1.45}$ & $10.10^{0.13}$ & $450.86^{9.36}$  \\ 
  		\texttt{GMF - Airwls} &  $44.53^{0.45}$& $14.38^{0.05}$ & $7662.75^{283.39}$  & &$28.81^{0.48}$ & $10.22^{0.29}$ &$19418.37^{590.68}$ \\ 
		\texttt{FLAIR} & $39.46^{0.12}$ & $14.26^{0.04}$  & $35.72^{0.92}$& & $27.44^{0.07}$  & $10.09^{0.03}$ &$53.17^{1.11}$\\ 
	\end{tabular}}
	\label{tablelabel}
	\caption{Comparison of the methods in terms of estimation accuracy.    Root normalized squared error for $\Lambda \Lambda^\top$ and $B$, and running time. Estimation errors have been multiplied by $10^2$. We report mean and standard error over 50 replications. \texttt{GMF - Newton} and \texttt{GMF - Airwls} denote \citet{gmf}'s method fitted via the quasi Newton method and via alternating iteratively reweighted least square algorithm respectively.}
	\label{tab:accuracy}
\end{table}

Table \ref{tab:accuracy} reports a comparison in terms of estimation accuracy and computational time. \texttt{FLAIR} is remarkably faster than \texttt{GMF}, even with the results in the table focusing only on model fitting time after hyperparameter tuning. In terms of estimation accuracy, \texttt{FLAIR} 
has a better performance in estimating $\Lambda \Lambda^\top$ while being comparable in estimating $B$.
\texttt{GMF} fitted via iterated least squares had extremely poor accuracy in some replicates affecting the overall performance when $(n, p) \in \{1000\} \times\{1000, 5000\}$.
Table \ref{tab:uq} reports the coverage of credible intervals on average across the entries of $B$ and $\Lambda \Lambda^\top$. These results provide
strong support for \texttt{FLAIR} in terms of providing well-calibrated credible intervals. 

\begin{table}\centering	
{	\begin{tabular}{crrcrr}
&  \multicolumn{5}{c}{$p=1000$}  \\
& \multicolumn{2}{r}{$n=500$}& &\multicolumn{2}{r}{$n=1000$} \\
		Method & $\Lambda \Lambda^\top$ & $B$ & & $\Lambda \Lambda^\top$ & $B$ \\ 
		\texttt{FLAIR} & $96.70^{0.10}$ & $95.31^{0.09}$ &  & $96.42^{0.09}$ & $95.24^{0.09}$ \\ 
		vanilla \texttt{FLAIR} ($\rho=1$) & $92.57^{0.14}$ & $90.99^{0.10}$ & & $92.23^{0.13}$  & $90.08^{0.11}$\\ 
 &  \multicolumn{5}{c}{$p=5000$}  \\& 
 \multicolumn{2}{r}{$n=500$}& &\multicolumn{2}{r}{$n=1000$}\\
 		Method & $\Lambda \Lambda^\top$ & $B$ & & $\Lambda \Lambda^\top$ & $B$ \\ 
 \texttt{FLAIR} & $96.47^{0.10}$ &  $94.04^{0.10}$ &  & $96.16^{0.09}$ &$93.75^{0.09}$  \\ 
		vanilla \texttt{FLAIR} ($\rho=1$) & $92.85^{0.13}$  & $89.81^{0.10}$ & & $92.68^{0.12}$  & $89.66^{0.10}$ \\ 
   &  \multicolumn{5}{c}{$p=10000$}  \\
& \multicolumn{2}{r}{$n=500$}& &\multicolumn{2}{r}{$n=1000$}\\
		Method & $\Lambda \Lambda^\top$ & $B$ & & $\Lambda \Lambda^\top$ & $B$ \\ 
\texttt{FLAIR} & $96.17^{0.07}$ & $95.20^{0.08}$  &  & $95.89^{0.08}$& $95.02^{0.06}$\\ 
		vanilla \texttt{FLAIR} ($\rho=1$) & $92.71^{0.13}$ & $91.79^{0.10}$ & & $92.63^{0.12}$  & $91.70^{0.09}$\\ 
	\end{tabular}}
	\label{tablelabel}
\caption{Frequentist coverage of 95\% credible intervals for individual parameters by \texttt{FLAIR} with and without applying the correction factor $\rho$ to the posterior variance. Average frequentist coverage for entries of a random $100\times 100$ submatrix of $\Lambda \Lambda^\top$ and $B$ for equi-tailed 95\% credible intervals of \texttt{FLAIR} with and without applying the correction factor $\rho$ to the posterior variance. We report mean and standard error over 50 replications. All values have been multiplied by $10^2$.}
	\label{tab:uq}
\end{table}

\section{Application to Madagascar Arthropod Data}\label{sec:application}
We analyze data from \citet{coral} measuring arthropod co-occurence. Arthropods are a vital component of any ecosystem, and characterizing their co-occurrence is of paramount importance in studying factors related to community assembly and biodiversity. Data were collected from 284 samples at 53 sampling sites in Madagascar. At each sampling site, arthropods were collected in Malaise traps and categorized according to their DNA through COI metabarcoding \citep{coi_meta} and the OptimOTU pipeline \citep{dna_barcoding}. This produced 254312 operational taxonomic units, which we refer to as \say{species}. Most are ultra-rare, with 211187 of these species present in $\le 2$ samples. As covariates,
we included log-transformed sequencing depth, mean precipitation and temperature, their interaction and squares, and four trigonometric terms to adjust for seasonal effects ($\cos(2l\pi d_i/365), \sin(2l\pi d_i/365)$ with $l=1,2$, where $d_i$ denotes the day of sampling for the $i$-th observation). We standardized continuous covariates to have zero mean and unit standard deviation. 

To allow comparisons with less computationally efficient alternatives, we initially focused our analysis on the $5656$ species that were observed at least 15 times. Using the approach of Section \ref{subsec:choice_k}, the estimated number of latent factors was $\hat k = 7$. We applied a random stratified $80\%-20\%$ split to the data set, with stratification ensuring that the holdout set contains roughly the same proportions of 0 and 1s as the training set. To choose the hyperparameters of \texttt{GMF}, we divided the holdout set into half into test and validation sets. All details are in the Supplement. 

Obtaining \texttt{FLAIR} estimates took $\sim 15$ minutes while the average running time of \texttt{GMF} for each hyperparameter configuration was approximately 1 hour with the quasi-Newton algorithm and more than 5 hours with the iterated least squares algorithm. 
\texttt{FLAIR} obtained an area under the curve on the validation set of 96.53\%, while \texttt{GMF} with the best hyperparameter configuration yielded 95.30\% and 78.28\% for the Newton method and iterated least squares algorithm respectively. Hence, \texttt{FLAIR} achieved better predictive performance with considerable less computing time. 

We reanalyzed the data including the $43125$ species that were observed at least 3 times. For \texttt{GMF}, we did not optimize the hyperparameters again and fitted the model using the configurations chosen in the common species analysis described above and focused on the faster and more accurate quasi-Newton algorithm. \texttt{FLAIR} had considerably better out-of-sample predictive performance, having an area under the curve of 94.18\% compared to the 87.86\% for \texttt{GMF}. As expected, performance dropped off somewhat compared to the above common species analysis, since rare species are more difficult to predict.

\begin{figure}
\centering

\begin{subfigure}{1\textwidth}
    \centering
    \includegraphics[width=0.8\linewidth]{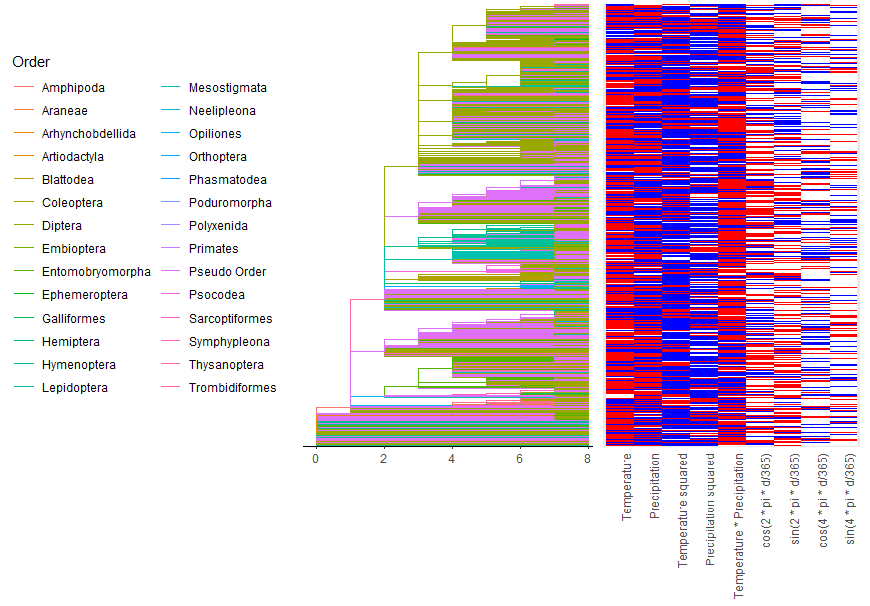}
    \label{fig:Beta_50_zeros}
\end{subfigure}
\vfill
\centering
\caption{Taxonomical tree of observed species (left panel) and responses of the species to measured covariates (right panel). Responses that were estimated to be positive (red) or negative (blue) with at least 95\% posterior probability.}
\label{fig:insects_Beta}
\end{figure}

In conducting inferences on the cross-species dependence in occurrence and covariate effects, we focused on an analysis of the complete data on the 5656 common species. Figure \ref{fig:insects_Beta} illustrates species responses to measured covariates. The results imply that most arthropod species are sensitive to climate, including both temperature and precipitation, and seasonality. In addition, the impact of temperature and precipitation tends to be nonlinear. However, there is substantial heterogeneity among species in the signs of the coefficients, suggesting that ideal climate conditions are species-specific. There is no clear taxonomic clustering in the signs, suggesting that even closely related species may have different ideal climate conditions. 

\begin{figure}
\centering
\begin{subfigure}{1\textwidth}
    \centering
    \includegraphics[width=0.95\linewidth]{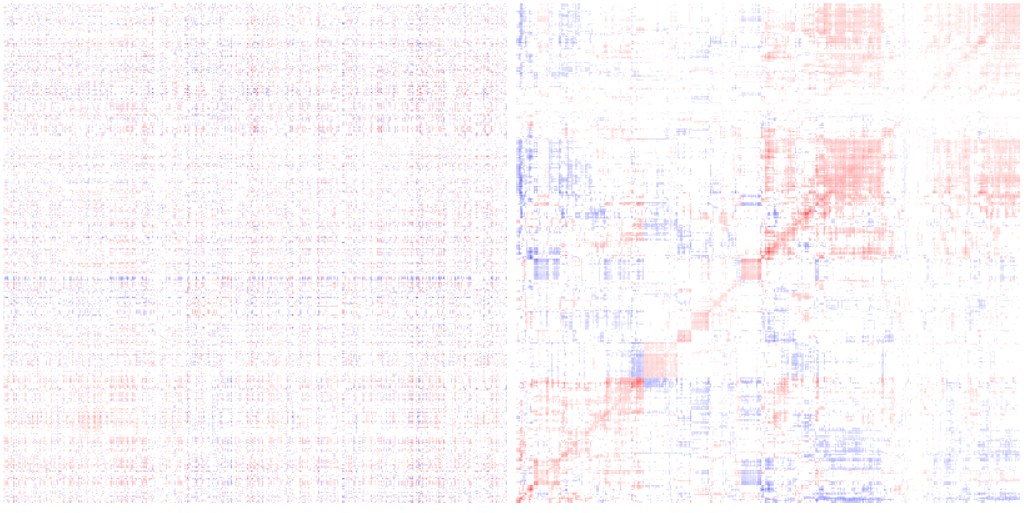}
\end{subfigure}
\vfill
\centering
\caption{Posterior mean of $\Lambda \Lambda^\top$ where entries for which the $95\%$ credible interval included $0$ were set to $0$ for 1000 species ordered according the taxonomical tree (left panel) and to their pairwise latent covariance (right panel). Red (blue) denotes positive (negative) values.}
\label{fig:insects_Lambda}
\end{figure}

We are also interested in cross-species dependence in co-occurrence, which is captured by the $\Lambda \Lambda^\top$ matrix. If the $j,j'$ entry of this matrix is positive, then that pair of species may prefer similar habitat conditions or may have beneficial interactions. If the entry is negative, the two species may have adversarial interactions in competing for the same resources or having a predator-prey relationship, or may favor different conditions. Figure \ref{fig:insects_Lambda} shows the posterior mean of $\Lambda \Lambda^\top$ for 1000
randomly selected species; entries for which the $95\%$ credible interval included $0$ were set to $0$. If we order species according to the taxonomical tree (left panel), no particular structure is notable. If instead we reorder species using a dendrogram where pairwise similarities are measured by the posterior mean of $\Lambda \Lambda^\top$ (right panel), interesting patterns arise. For instance, there are blocks with positive pairwise dependence along the diagonal with mostly negative dependence off the blocks. This suggests the presence of groups of species that are not taxonomically closely related and have positive and negative interactions and/or relationships with latent environmental conditions. 

\begin{figure}
\centering
\begin{subfigure}{1\textwidth}
    \centering
    \includegraphics[width=0.6\linewidth]{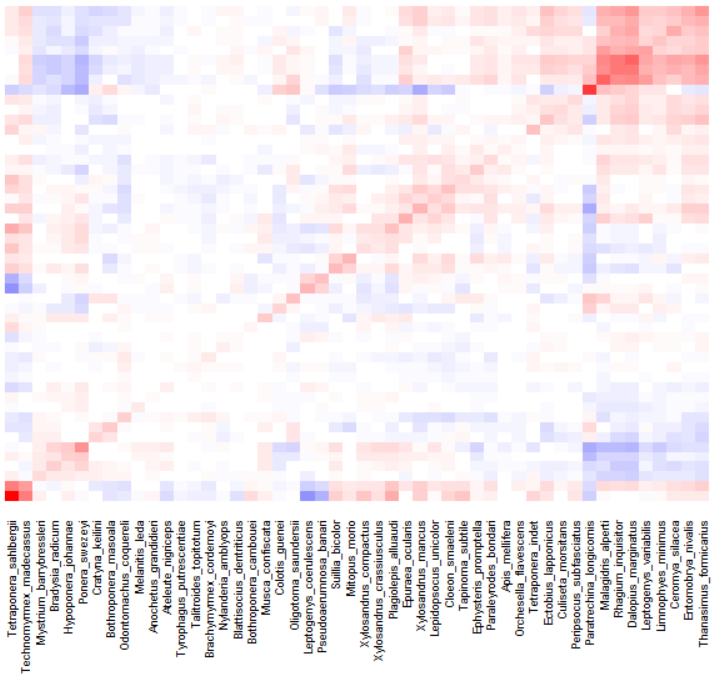}
\end{subfigure}
\vfill
\centering
\caption{Posterior mean of $\Lambda \Lambda^\top$ where entries for which the $95\%$ credible interval included $0$ were set to $0$ for 50 species ordered according to their pairwise latent covariance.}
\label{fig:insects_Lambda_50}
\end{figure}
Figure \ref{fig:insects_Lambda_50} zooms in on 50 species for which species' names are available. In the upper right corner we observe a group of species with positive pairwise dependence supporting again the existance of positive interactions of species which are not taxonomically close. For instance, this group includes various beetles belonging to the Coleoptera order (the Rhagium inquisitor, Dalopius marginatusn and Thanasimus formicarius) as well as species from different orders, such as Entomobrya nivalis, a species of slender springtails, and the Ceromya silacea, a species of fly.

\section{Discussion}\label{sec:conclusion}

We presented a method for fast estimation with accurate uncertainty quantification for multivariate logistic regression models with high-dimensional outcomes. There are several important directions for future research. An interesting avenue is to extend factor pre-estimation to any generalized linear latent variable model and to latent space models for network data \citep{hoff_02, durante_14}, while formalizing the theory on accuracy relative to the true posterior and frequentist coverage of the corresponding posterior approximations. In particular, it is worthwhile to develop a similar coverage correction strategy to bypass the need for expensive Gibbs sampling routines to quantify uncertainty, while allowing fast and accurate Bayesian inference in many important applied settings.

It is also interesting to increase flexibility by developing extensions to (1) non-linear latent factor models and (2) more complex and structured priors. 
For (1), an estimate of latent factors might be obtained via a suitable adaptation of a non-linear dimensionality reduction technique \citep{gpllvm, diffusion_map} and inference on the non-linear map could proceed adopting some non-parametric prior. Regarding (2), it is desirable to consider a hierarchical prior formulation, for instance incorporating phylogenetic information of species, shrinking regression coefficients and factor loadings of similar species towards a common estimate. This seems particularly important in our motivating ecological application, where we expect estimates for rare species to benefit from borrowing of information. 

Finally, applying \texttt{FLAIR} to other high-dimensional data sets measuring biodiversity
to assess the impact of climate and environmental disruption on species communities while uncovering interesting interactions between species is an important direction. Indeed, we expect \texttt{FLAIR} to transform practice in joint species distribution modeling of highly diverse groups, such as arthropods and fungi, since current methods fail to scale sufficiently to the sized datasets that are now being routinely collected. 

\section*{Acknowledgement}
This research was partially supported by the National Institutes of Health (grant ID R01ES035625), by the Office of Naval Research (Grant N00014-24-1-2626), by the European Research Council under the European Union’s Horizon 2020 research and innovation programme (grant agreement No 856506), and by Merck \& Co., Inc., through its support for the Merck Biostatistics and Research Decision Sciences (BARDS) Academic Collaboration. The authors thank Otso Ovaskainen and Shounak Chattopadhyay for helpful comments and the Lifeplan team for generously providing the motivating data.

\bibliographystyle{apalike}
\bibliography{references}

\newpage

\setcounter{page}{1}
\setcounter{section}{0}
\setcounter{table}{0}
\setcounter{figure}{0}
\setcounter{equation}{0}
\renewcommand{\theHsection}{SIsection.\arabic{section}}
\renewcommand{\theHtable}{SItable.\arabic{table}}
\renewcommand{\theHfigure}{SIfigure.\arabic{figure}}
\renewcommand{\thepage}{S\arabic{page}}  
\renewcommand{\thesection}{S\arabic{section}}   
\renewcommand{\theequation}{S\arabic{equation}}   
\renewcommand{\thetable}{S\arabic{table}}   
\renewcommand{\thefigure}{S\arabic{figure}}
\renewcommand{\thetheorem}{S\arabic{theorem}}
\renewcommand{\thelemma}{S\arabic{lemma}}
\renewcommand{\theproposition}{S\arabic{proposition}}
\renewcommand{\thecorollary}{S\arabic{corollary}}
\renewcommand{\theequation}{S\arabic{equation}}

\appendix
\section*{Supplementary Material of “Factor pre-training in Bayesian multivariate logistic models”}

\section{Proofs of the Main Results}
\begin{proof}[Proof of Theorem \ref{thm:recovery_Z}]
For a matrix $Z \in \mathbb R^{n \times p}$, where $Z = [z_{ij}]_{ij}$, we denote by $\mathcal L(Z) = \sum_{i=1}^n \sum_{j=1}^p y_{ij} \log\{h^{-1}(z_{ij})\} + (1-y_{ij})\log\{1-h^{-1}(z_{ij})\}$, for brevity.
    Then, for $(\hat M, \hat \Lambda, \hat B)$ the solution to \eqref{eq:cjmap}, consider the following decomposition \begin{equation*}
    \begin{aligned}
         \log p(\hat M, \hat \Lambda, \hat B\mid Y, X ) &- \log p\left ( M_0, \Lambda_0, B_0\mid Y, X \right) \\
         &= \mathcal L(\hat Z) - \mathcal L(Z_0) + \log p (\hat M) - \log p(M_0) + \log p (\hat \Lambda) - \log p(\Lambda_0) \\
         & \quad \quad + \log p(\hat B) - \log p(B_0)\\
         &= \mathcal L(\hat Z) - \mathcal L(Z_0) + \frac{1}{2}\big( \left|\left| M_0 \right|\right|_F^2-  \left|\left| \hat M\right|\right|_F^2\big) + \frac{1}{2}\big\{ tr(\Lambda_0^\top \Sigma_{\Lambda}^{-1} \Lambda_0)- tr(\hat \Lambda^\top \Sigma_{\Lambda}^{-1} \hat \Lambda)\big\} \\
         & \quad \quad +\frac{1}{2}\big\{tr(B_0^\top \Sigma_{B}^{-1} B_0)- tr(\hat B^\top \Sigma_{B}^{-1} \hat B)\big\}.
    \end{aligned}
\end{equation*}
Since, $ \log p(\hat M, \hat \Lambda, \hat B \mid Y, X) \geq \log p( M_0, \Lambda_0, B_0 \mid Y, X)$, we have
\begin{equation*}
\mathcal L(Z_0) - \mathcal L(\hat Z)   \leq \frac{1}{2} \big| || M_0 ||_F^2- || \hat M ||_F^2 +  tr(\hat \Lambda^\top \Sigma_{\Lambda}^{-1} \hat \Lambda) - tr( \Lambda_0^\top \Sigma_{\Lambda}^{-1}  \Lambda_0)  + tr(B_0^\top \Sigma_{B}^{-1} B_0)- tr(\hat B^\top \Sigma_{B}^{-1} \hat B)
 \big|.
\end{equation*}
First, we consider the event with high probability where the infinity norm of $M_0$ and $X$ can be suitably bounded. Define the events 
\begin{equation*}
\begin{aligned}
      A_1 &= \big\{||\eta_i||_{\infty} \leq 2 \log^{1/2}(kn), \ i=1, \dots, n\big\} = \big\{||M_0||_{\infty} \leq 2 \log^{1/2}(kn) \big\},\\
       A_2 &= \big\{||x_i||_{\infty} \leq c_x \log^{1/2}(qn), \ i=1, \dots, n\big\} = \big\{||X||_{\infty} \leq  c_x \log^{1/2}(qn) \big\}.
\end{aligned}  
\end{equation*}
By corollary \ref{corr:tail_max_gaus} and Assumption \ref{assumption:X}, we have $ \text{pr}(A_1) = \text{pr}\big\{||M_0||_{\infty} \leq 2 \log^{1/2}(kn) \big\} \geq 1- \frac{2}{n}$ and $\text{pr}(A_2) = 1 - C'/n$. 
Next, we restrict our analysis to the event $A_1 \cap A_2$. 
On the event $A_1 \cap A_2$, and under Assumptions \ref{assumption:lambda_sv}--\ref{assumption:X}, we have
\begin{equation*}
\left|\left|Z_0\right|\right|_{\infty} \leq \gamma_n, \quad \left|\left|\hat Z\right|\right|_{\infty} \leq \gamma_n, \quad \gamma_n = q c_B c_X \log^{1/2}(qn) +2 k c_\Lambda \log^{1/2}(kn)\lesssim \log^{1/2}\{(k \vee q) n\}
\end{equation*}
and
\begin{equation*}
    \begin{aligned}
        \big| || M_0 ||_F^2-  || \hat M||_F^2\big| &\leq  4 kn\log(kn),\\
           \big| tr(\Lambda_0^\top \Sigma_{\Lambda}^{-1} \Lambda_0)- tr(\hat \Lambda^\top \Sigma_{\Lambda}^{-1} \hat \Lambda)  \big| &\leq \frac{2}{\tau_{\Lambda, min}^2} c_\Lambda^2 kp,\\
            \big| tr(B_0^\top \Sigma_{B}^{-1} B_0)- tr(\hat B^\top \Sigma_{B}^{-1} \hat B)  \big| &\leq \frac{2}{\tau_{B, min}^2} c_B^2 qp,\\
    \end{aligned}
\end{equation*}
where $\tau_{\Lambda, min} = \min\left\{\tau_{\lambda_1}, \dots, \tau_{\lambda_p}\right\}$ and $\tau_{B, min} = \min\left\{\tau_{\beta_1}, \dots, \tau_{\beta_p}\right\}$.
Moreover, letting $b(z) = \log \left(1 + e^{z}\right)$, 
\begin{equation*}
    \begin{aligned}
        \mathcal L(\hat Z) - \mathcal L(Z_0) &= \sum_{i,j} \left[y_{ij}\left(\hat z_{ij} - z_{0ij}\right) - \left\{b\left(\hat z_{ij}\right) - b\left( z_{0ij}\right)\right\}\right] \\
        &= \sum_{i,j}  \left\{y_{ij} - b'\left(z_{0ij}\right)\right\}
        \left(\hat z_{ij} - z_{0ij}\right) - \left\{b\left(\hat z_{ij}\right) - b\left( z_{0ij}\right) -b'\left(z_{0ij}\right) 
        \left(\hat z_{ij} - z_{0ij}\right)  \right\} \\ 
        &=\sum_{i,j} \left\{y_{ij} - b'\left(z_{0ij}\right)\right\}\left(\hat z_{ij} - z_{0ij}\right) -  \frac{1}{2}b''\left(\delta_{ij}\right)\left(\hat z_{ij} - z_{0ij}\right)^2\\
       & \leq \sum_{i,j} \left\{y_{ij} - b'\left(z_{0ij}\right)\right\}\left(\hat z_{ij} - z_{0ij}\right) -\frac{1}{2} \inf_{|\delta| \leq \gamma_n }b''\left(\delta\right)\left(\hat z_{ij} - z_{0ij}\right)^2
    \end{aligned}
\end{equation*}
Hence,
\begin{equation}\label{eq:norm_sq_diff_Z}
    || \hat Z - Z_0||_F^2 \leq 2 \left\{\inf_{|\delta| \leq \gamma_n }b''\left(\delta\right)\right\}^{-1} \left[\sum_{i,j} \left\{y_{ij} -  b'\left(z_{0ij}\right) \right\}\left(\hat z_{ij} - z_{0ij}\right) - \mathcal L(\hat Z) - \mathcal L(Z_0)\right].
\end{equation}
Next, consider the following bound
\begin{equation*}
    \begin{aligned}
        \sum_{i,j} & \left\{y_{ij} - b'\left(z_{0ij}\right)\right\}\left(\hat z_{ij} - z_{0ij}\right) = tr\left\{\Psi^\top (\hat Z - Z_0)\right\} \leq  ||\Psi||_2  \left|\left| \hat Z - Z_0\right|\right|_F rank\left(\hat Z - Z_0\right),
    \end{aligned}
\end{equation*}
where $[\Psi]_{ij} = y_{ij} -  b'\left(z_{0ij}\right)$. 
Note that $b''(\delta) = \frac{e^{\delta}}{1 + e^{\delta}}  \frac{1}{1 + e^{\delta}}$, and $\inf_{|\delta| \leq \gamma_n }b''\left(\delta\right) \geq \frac{1}{2} \frac{1}{1+e^{\gamma_n}}$.
Hence, we get
\begin{equation}\label{eq:fr_ineq}
    \begin{aligned}
 \left|\left| \hat Z - Z_0\right|\right|_F^2 \leq 2 \left(1 + e^{\gamma_n}\right) \left\{\sqrt{2(k+q)}||\Psi||_2  \left|\left| \hat Z - Z_0\right|\right|_F +   2 kn\log(kn) + \frac{1}{\tau_{\Lambda, min}^2}c_\Lambda^2 kp +\frac{1}{\tau_{B, min}^2} c_B^2 qp\right\}.
    \end{aligned}
\end{equation}

Notice, we can rewrite \eqref{eq:fr_ineq} as a second-order inequality
\begin{equation}\label{eq:2_order_eq}
    x^2 - C_1 x - C_2 \leq 0,
\end{equation}
where $C_1 = 2 \left(1 + e^{\gamma_n}\right) \sqrt{2(k+q)}||\Psi||_2$, $C_2 = 2 \left(1 + e^{\gamma_n}\right) \left\{2 kn\log(kn) +  \frac{1}{\tau_{\Lambda, min}}c_\Lambda^2 kp +\frac{1}{\tau_{B, min}} c_B^2 qp\right\}$, and $x =  \left|\left| \hat Z - Z_0\right|\right|_F$. The positive root for \eqref{eq:2_order_eq} is given by
\begin{equation*}
    \begin{aligned}
        x &=\frac{1}{2} \left(C_1 + \sqrt{C_1^2 + 4C_2}\right ) \leq C_1 + \sqrt{C_2}\\
        & \leq  2 \left( 1 + e^{\gamma_n}\right)  \left\{\sqrt{2(k+q)}||\Psi||_2  + \sqrt{2 kn\log(kn)}   + \frac{1}{\tau_{\Lambda, min}^2}c_\Lambda \sqrt{kp} +\frac{1}{\tau_{B, min}^2} c_B \sqrt{qp}\right\}
    \end{aligned}
\end{equation*}
Finally, we bound $||\Psi||_2$. 
By Lemma \ref{lemma:psi}, with probability greater than $1 - 2/n$, we have
\begin{equation}\label{eq:s_1_Psi}
    \begin{aligned}
        s_1(\Psi) = ||\Psi||_2 \leq n^{1/2} + C_3 p^{1/2} + C_4\log^{1/2}(n) \lesssim n^{1/2} + p^{1/2}
    \end{aligned}
\end{equation}
where $C_3$ and $C_4$ are some constants not depending on $n$ and $p$.  We define $A_3$ as the event where \eqref{eq:s_1_Psi} holds.
Hence, under $A_1 \cap A_2 \cap A_3$, which has probability at least $1 - \text{pr}(A_1^c) - \text{pr}(A_2^c) - \text{pr}(A_3^c) = 1  - C/n$, where $C$ is an absolute constant,
\begin{equation*}
    \begin{aligned}
        \frac{1}{(np)^{1/2}} \left|\left| \hat Z - Z_0\right|\right|_F & \lesssim  e^{c_Z \log^{1/2}\{(k \vee q)n\}} \left\{\frac{1}{n^{1/2}} + \frac{\log^{1/2}(kn)}{p_n^{1/2}} \right\},
        \end{aligned}
\end{equation*} proving the result in \eqref{eq:parameters_cjmap}. 
Next, by Proposition \ref{prop:recovery_all}, \eqref{eq:recovery_Lambda_outer} and \eqref{eq:recovery_B} follow. \eqref{eq:posterior_mean_Lambda_outer_accuracy} follows from 
  \begin{equation*}
     \left|\left| \tilde\Sigma - \tilde\Lambda \tilde\Lambda^\top  \right| \right|_F = \rho^2 \left|\left|  D  \right| \right|_F \asymp p^{1/2}  e^{c_Z \log^{1/2}\{(k \vee q)n\}} \frac{1}{n},
  \end{equation*}
  combined with an application of the triangle inequality.
Finally, consider the following
    \begin{equation*}
         \left|\left|P_{\tilde U} - P_{U_0}\right|\right|_F^2 = 2k - 2 \textit{tr}\left(P_{\tilde U} P_{U_0}\right) = 2\left|\left| \sin{\left(\Theta_{\tilde U, U_0}\right)}\right|\right|_F^2,
    \end{equation*}
    where $\tilde U, U_0 \in \mathbb R^{n \times k}$ are the matrices of left singular vectors of $\tilde M \tilde \Lambda^\top$ and $M_0 \Lambda_0^\top$ respectively, $P_W$ denotes the orthogonal projection onto the column space of $W$, $P_W = W\left(W^\top W\right)^{-1}W^\top$, and $\sin{(\Theta_{\tilde U, U_0})}$ denote the sine of the angles between the subspaces spanned by $\tilde U$ and $U_0$.  
Moreover, by Theorem 20 in \citet{orourke18}, we have
\begin{equation*}
    \sin{(\Theta_{\tilde U, U_0})} \leq 2  \frac{  \left|\left|\tilde M \tilde \Lambda^\top - M_0 \Lambda_0^\top \right|\right|_2}{s_k\left( M_0 \Lambda_0^\top\right)}.
\end{equation*}
Note that $s_k\left( M_0 \Lambda_0^\top\right) \geq s_k\left( M_0 \right)s_k\left(\Lambda_0\right) \gtrsim (np)^{1/2}$ since $s_k\left( M_0 \right) \gtrsim C_5 n^{1/2}$ with probability $1 - o(1)$ by Lemma \ref{lemma:sv_gaussian}.
Hence, 
 \begin{equation*}
         \left|\left|P_{\tilde U} - P_{U_0}\right|\right|_F \lesssim \frac{\left|\left|\tilde M \tilde \Lambda^\top - M_0 \Lambda_0^\top \right|\right|_2}{(np)^{1/2}}
\end{equation*}
proving \eqref{eq:recovery_P_U}. By Davis-Kahan theorem \citep{davis_kahan} we have 
    \begin{equation*}
          \min_{R \in R^{k \times k}: R^\top R = I_k}||\tilde U - U_0 R ||_F = ||\tilde U - U_0 \hat R  ||_F \lesssim  ||P_{\tilde U} - P_{U_0}||_F 
    \end{equation*}
    where $\hat R$ achieves the minimum on the left hand side.
    Consider the singular value decomposition of $M_0$, $\bar U_0 D_0 V_0^\top$, where $\bar U_0 = U_0 \bar R^\top$, for some orthogonal matrix $\bar R \in \mathbb R^{k \times k}$. 
  Recalling that $\tilde M = \sqrt{n}\tilde U$ and letting $\tilde R = \mathbf{V}_0  \bar R \hat R$, we have \begin{equation*}
        \begin{aligned}
            ||\tilde M-  M_0 {\tilde R^\top}  || &= ||\sqrt{n} \tilde U - \bar U_0 D_0 V_0^\top {\tilde R} ||  = ||\sqrt{n} \tilde U -  U_0 \bar R^\top D_0 \bar R \hat R || \\ 
           & \leq  ||\sqrt{n} (\tilde U - U_0 {\hat R} ) || + ||\sqrt{n} U_0 {\hat R} -  U_0\bar R^\top D_0 \bar R {\hat R}  ||\\
            &\leq  ||\sqrt{n} (\tilde U - U_0 {\hat R} ) ||  + \max_{1 \leq l \leq k_0} |\sqrt{n} - d_{0l}|
        \end{aligned}
    \end{equation*}where $d_{0l}$ is the $l$-th largest singular value of $D_0$. Moreover, by corollary 5.35 of \citet{Vershynin_2012}, we have $|d_{0l} - \sqrt{n}| \lesssim \sqrt{k_0}$ with probability at least $1-o(1)$. The result follows from all of the above.
\end{proof}

\begin{remark}[Note on the bound of Theorem \ref{thm:recovery_Z}]
As discussed in the main paper, the bound in Theorem \ref{thm:recovery_Z} is less tight than similar bounds in related works. This is due to different assumptions on latent factors and covariates. Notice the presence of $\left\{\inf_{|\delta| \leq \gamma_n }b''\left(\delta\right)\right\}^{-1}$ in \eqref{eq:norm_sq_diff_Z} where $\gamma_n$ is an upper bound to the absolute value of the linear predictor. Similarly, the constant in (S.2) in the proof of Lemma 1 of \citet{lee24} contains the same factor, and Theorem 1 in \citet{davenport_1_bit} presents the same term for a similar bound. As $\gamma_n$ diverges, we have $\left\{\inf_{|\delta| \leq \gamma_n }b''\left(\delta\right)\right\}^{-1} \asymp e^{\gamma_n}$, while $\left\{\inf_{|\delta| \leq \gamma_n }b''\left(\delta\right)\right\}^{-1}$ is bounded for any finite value of $\gamma_n$.  Under the assumption $\eta_i \sim N_k (0, I_k)$, combined with our assumption on $X$ (Assumption \ref{assumption:X} of the main article), we can only bound the maximum of the absolute value of the linear predictor by some multiple of $\log^{1/2}\{(k \vee q) n\}$, which then determines the presence of the term $e^{c_z\log^{1/2}\{(k \vee q) n\}}$ in our results. \citet{lee24} assume covariates and latent factors fall in compact spaces that do not depend on $n$ and $p$ (Assumption 1 of \citet{lee24}). Consequently, they upper bound the absolute value of the linear predictor, and, in turn, upper and lower bound the second derivative of the log-partition function by a constant not dependent on $n$, which determines their sharper result. 
These assumptions in \citet{lee24} are restrictive in, for example, excluding cases in which covariates are independent samples from sub-Gaussian distributions. Moreover, even if the covariates are restricted to lie in some hypercube, assuming that latent factors are sampled as $\eta_i \sim N_k (0, I_k)$, as is common in the literature on random factor analysis \citep{West2003} and crucial for interpreting $\Lambda \Lambda^\top$ as latent covariance between outcomes, induces a factor of $e^{c_z\log^{1/2}(n)}$, since we can only bound the maximum of the absolute value of latent factors with some multiple of $\log^{1/2}(kn)$ with high probability (Lemma \ref{lemma:tail_max_subgassian}).
 Finally, we note that, for $n \lesssim p_n$, our bound would be asymptotically dominated by $n^{-1/2 + \epsilon}$, for any $\epsilon>0$, producing close to the parametric rate of convergence.   
 \end{remark}

\begin{proof}[Proof of Theorem \ref{thm:posterior_contraction}]
To prove posterior contraction, we show that the posterior distribution for $\Lambda \Lambda^\top$ and $B$ contract around $\tilde \Lambda \tilde \Lambda^\top$ and $\tilde B$ respectively.
Consider a sample for the posterior for $\Lambda$ and $B$. 
Due to the note on the posterior covariance (Section \ref{subsec:note_covariance}), we have
\begin{equation*}
\begin{aligned}
     ||\lambda_j - \tilde \lambda_j|| & \lesssim \frac{e^{c_Z /2\log^{1/2}\{(k \vee q)n\}}}{n^{1/2}}\log^{1/2}(p),\\
     ||\beta_j- \beta_j || & \lesssim \frac{e^{c_Z /2\log^{1/2}\{(k \vee q)n\}}}{n^{1/2}}\log^{1/2}(p)
\end{aligned}\quad j=1, \dots, p
\end{equation*}
with posterior probability at least $1 - 2/n$. 
Thus, with posterior probability at least $1 - 2/n$,
\begin{equation*}
\begin{aligned}
    \left|\left|\Lambda \Lambda^\top - \tilde \Lambda \tilde \Lambda^\top \right| \right|_F &\lesssim \frac{e^{c_Z /2\log^{1/2}\{(k \vee q)n\}}}{n^{1/2}}\log^{1/2}(p) p\\
\left|\left|B- \tilde B \right| \right|_F &\lesssim\frac{e^{c_Z /2\log^{1/2}\{(k \vee q)n\}}}{n^{1/2}}\log^{1/2}(p) (pq)^{1/2},
\end{aligned} \quad j=1, \dots, p,
\end{equation*}
where $\Lambda$ and $B$ are samples from $\tilde \Pi$.
An application of the triangle inequality combined with $\left|\left| \Lambda_0 \Lambda_0^\top\right| \right|_F \asymp p$ proves the result.
\end{proof}

\section{Auxiliary Results and Lemmas}
\subsection{Auxiliary Results} \label{subsec:additional_results}

\begin{proposition}
[Recovery of the factor analytic and linear predictor terms]
    \label{prop:recovery_M_Lambda}
    Define $\left(\tilde M, \tilde \Lambda, \tilde B\right)$ as the triplet obtained post-processing the joint maximum a posteriori estimate $\left(\hat M, \hat \Lambda, \hat B\right)$ solution to \eqref{eq:cjmap} via the procedure described in Section \ref{subsec:postprocessing}. Under the same assumption of Theorem \ref{thm:recovery_Z}, with probability at least $1-o(1)$, we have
    \begin{equation*}\label{eq:M_Lambda_cjmap}
    \begin{aligned}
    \frac{1}{(np_n)^{1/2}}\left|\left| \tilde M\tilde \Lambda^\top - M_0 \Lambda_0^\top \right|\right|_F &\leq \frac{\left|\left| \hat{Z} - Z_0 \right|\right|_F}{(np_n)^{1/2}} + C_1 \frac{kq}{n^{1/2}}\\
     \frac{1}{(np_n)^{1/2}}\left|\left| X \tilde B^\top - X B_0^\top\right|\right|_F &\leq \frac{\left|\left| \hat{Z} - Z_0 \right|\right|_F}{(np_n)^{1/2}} + C_2 \frac{kq}{n^{1/2}}\\
    \end{aligned}
    \end{equation*}
    where $C_1$ and $C_2$ are universal constants not depending on $n, p$.
\end{proposition}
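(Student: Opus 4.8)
The plan rests on the observation that the post-processing of Section \ref{subsec:postprocessing} leaves the linear predictor unchanged and enforces $\tilde M^\top X = 0$, so that $\hat Z = X\tilde B^\top + \tilde M\tilde\Lambda^\top$ with the columns of $\tilde M$ orthogonal to those of $X$. Let $P_X = X(X^\top X)^{-1}X^\top$ be the orthogonal projection onto the column span of $X$ and $P_X^\perp = I_n - P_X$. Since $P_X\tilde M = 0$, we have $P_X\hat Z = X\tilde B^\top$ and $P_X^\perp\hat Z = \tilde M\tilde\Lambda^\top$; since $Z_0 = XB_0^\top + M_0\Lambda_0^\top$, we have $P_X Z_0 = XB_0^\top + P_X M_0\Lambda_0^\top$ and $P_X^\perp Z_0 = P_X^\perp M_0\Lambda_0^\top$. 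Subtracting and rearranging gives the exact identities
\begin{equation*}
X\tilde B^\top - XB_0^\top = P_X(\hat Z - Z_0) + P_X M_0\Lambda_0^\top, \qquad \tilde M\tilde\Lambda^\top - M_0\Lambda_0^\top = P_X^\perp(\hat Z - Z_0) - P_X M_0\Lambda_0^\top.
\end{equation*}
Because orthogonal projections are contractions in Frobenius norm, the triangle inequality bounds both left-hand sides by $\|\hat Z - Z_0\|_F + \|P_X M_0\Lambda_0^\top\|_F$; dividing by $(np_n)^{1/2}$, the Proposition follows once one shows $\|P_X M_0\Lambda_0^\top\|_F \lesssim kq\, p_n^{1/2}$ with probability $1-o(1)$.

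To control $\|P_X M_0\Lambda_0^\top\|_F$, I would factor $\|P_X M_0\Lambda_0^\top\|_F \le \|X(X^\top X)^{-1}\|_2\,\|X^\top M_0\|_F\,\|\Lambda_0\|_2 = s_q(X)^{-1}\|X^\top M_0\|_F\,\|\Lambda_0\|_2$. Assumption \ref{assumption:X} gives $s_q(X)\asymp n^{1/2}$, and Assumption \ref{assumption:lambda_sv} together with $\|\Lambda_0\|_\infty\le c_\Lambda$ gives $\|\Lambda_0\|_2 \le \|\Lambda_0\|_F \lesssim (kp_n)^{1/2}$. The remaining ingredient is a concentration bound for the cross term $X^\top M_0 = \sum_{i=1}^n x_i\eta_{0i}^\top$: conditionally on $X$ on the high-probability event $A_2$ of the proof of Theorem \ref{thm:recovery_Z}, the columns of $X^\top M_0$ are independent mean-zero (sub-)Gaussian vectors with covariance dominated by $X^\top X$, so $E\|X^\top M_0\|_F^2 \lesssim k\,\mathrm{tr}(X^\top X) \lesssim knq$, and a Hanson--Wright/$\chi^2$ tail bound yields $\|X^\top M_0\|_F \lesssim (nqk)^{1/2}$ with probability $1-o(1)$. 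Combining the three estimates gives $\|P_X M_0\Lambda_0^\top\|_F \lesssim n^{-1/2}(nqk)^{1/2}(kp_n)^{1/2} \lesssim kq\, p_n^{1/2}$; all events invoked (the $A_1\cap A_2$ events of Theorem \ref{thm:recovery_Z} and the concentration event for $X^\top M_0$) hold simultaneously with probability $1-o(1)$.

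I expect the cross-term step to be the main obstacle: one must show that after rescaling the design matrix and the latent factors are \emph{nearly orthogonal}, i.e. that $\|X^\top M_0\|_F$ is of order $(nqk)^{1/2}$ rather than $O(n)$. The cancellation built into $X^\top M_0$ — mean-zero latent factors independent of $X$ — is what produces the $n^{-1/2}$ gain, and it is essential that $X^\top M_0$ not be bounded crudely by $\|X\|_2\|M_0\|_2$; a careful second-moment computation plus a sub-exponential tail bound (or a matrix Bernstein inequality applied to the rank-one summands $x_i\eta_{0i}^\top$) is needed. Everything else — the projection identities and sub-multiplicativity of the Frobenius norm — is routine, and the singular-value inputs $s_q(X)\asymp n^{1/2}$ and $\|\Lambda_0\|_2\asymp p_n^{1/2}$ come directly from Assumptions \ref{assumption:X} and \ref{assumption:lambda_sv}.
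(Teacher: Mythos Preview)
Your proposal is correct and matches the paper's approach: the same projection identities $X\tilde B^\top = P_X\hat Z$ and $\tilde M\tilde\Lambda^\top = P_X^\perp\hat Z$, the same triangle-inequality reduction to bounding $\|P_X M_0\Lambda_0^\top\|_F$, and the same final $kq\,p_n^{1/2}$ order for that cross term. The only cosmetic difference is that the paper writes $P_X = U_X U_X^\top$ with $U_X^\top U_X = I_q$ and observes that, conditionally on $X$, $U_X^\top M_0\in\mathbb{R}^{q\times k}$ has i.i.d.\ standard normal entries, so $\|P_X M_0\Lambda_0^\top\|_F = \|U_X^\top M_0\Lambda_0^\top\|_F \le \|U_X^\top M_0\|_F\|\Lambda_0\|_F$ and the concentration step becomes a direct $\chi^2_{kq}$ tail bound (Lemma \ref{lemma:tail_norm_gaussian}), bypassing the separate use of $s_q(X)$ and the heavier Hanson--Wright/matrix-Bernstein machinery you sketch.
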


\begin{proposition}
[Recovery of the regression coefficients matrix $B_0$]
    \label{prop:recovery_B}
    If Assumption \ref{assumption:X} holds, 
\begin{equation*}\label{eq:B_tilde_accuracy}
    \begin{aligned}
    \left|\left| \tilde B-  B_0 \right|\right|_F &\lesssim \frac{1}{n^{1/2}}
     \left|\left| X \tilde B^\top - X B_0^\top\right|\right|_F .
    \end{aligned}
    \end{equation*}
\end{proposition}

\begin{proposition}\label{prop:accuracy_Lambda_outer}
If $\tilde M$ is such that $\tilde M^\top \tilde M = n I_k$,   Assumptions \ref{assumption:lambda_sv}, \ref{assumption:B}, \ref{assumption:X} hold, and
 \begin{equation*} 
    \frac{1}{(np_n)^{1/2}}\left|\left|\tilde M \tilde \Lambda^\top - M_0 \Lambda_0 \right|\right|  \lesssim \delta_n
\end{equation*}
with $\delta_n \to 0$ and $\delta_n \gtrsim n^{-1/2}$. Then, with probability at least $1- o(1)$, we have
\begin{equation*}
    \begin{aligned}
            \frac{ \left|\left|  \tilde \Lambda \tilde \Lambda^\top -   \Lambda_0 \Lambda_0^\top \right|\right|_F}{\left|\left|  \Lambda_0 \Lambda_0^\top \right|\right|_F} &\lesssim \delta_n.
    \end{aligned}
\end{equation*}
\end{proposition}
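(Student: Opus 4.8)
The plan is to exploit the constraint $\tilde M^\top \tilde M = nI_k$, which makes $\tilde\Lambda\tilde\Lambda^\top$ a deterministic function of the single product $\tilde M\tilde\Lambda^\top$, and then to treat the hypothesis as a Frobenius perturbation bound on that product. Write $A = \tilde M\tilde\Lambda^\top$ and $A_0 = M_0\Lambda_0^\top$, so the hypothesis reads $\|A-A_0\|_F \lesssim (np_n)^{1/2}\delta_n$. Left-multiplying $A$ by $n^{-1}\tilde M^\top$ and using $\tilde M^\top\tilde M = nI_k$ gives $\tilde\Lambda^\top = n^{-1}\tilde M^\top A$, hence the exact identity $\tilde\Lambda\tilde\Lambda^\top = n^{-1}A^\top A$. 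The analogous object built from the truth is $n^{-1}A_0^\top A_0 = \Lambda_0(n^{-1}M_0^\top M_0)\Lambda_0^\top$, which differs from $\Lambda_0\Lambda_0^\top$ only through $E := n^{-1}M_0^\top M_0 - I_k$. By the triangle inequality,
\[
\big\|\tilde\Lambda\tilde\Lambda^\top - \Lambda_0\Lambda_0^\top\big\|_F \;\le\; \tfrac{1}{n}\big\|A^\top A - A_0^\top A_0\big\|_F + \big\|\Lambda_0 E \Lambda_0^\top\big\|_F ,
\]
and since $\|\Lambda_0\Lambda_0^\top\|_F \asymp p_n$ — the upper bound from $\|\Lambda_0\|_F \le (p_nk)^{1/2}c_\Lambda$ with $k$ fixed, the lower bound from $s_k(\Lambda_0)\asymp p_n^{1/2}$ in Assumption~\ref{assumption:lambda_sv} — it suffices to bound each term on the right by a multiple of $p_n\delta_n$.

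For the first term I would record crude spectral bounds. Since $\|M_0\|_2 \lesssim n^{1/2}$ with probability $1-o(1)$ by Lemma~\ref{lemma:sv_gaussian} and $\|\Lambda_0\|_2 \le \|\Lambda_0\|_F \lesssim p_n^{1/2}$, we get $\|A_0\|_2 \le \|M_0\|_2\|\Lambda_0\|_2 \lesssim (np_n)^{1/2}$, and then $\|A\|_2 \le \|A_0\|_2 + \|A-A_0\|_2 \le \|A_0\|_2 + \|A-A_0\|_F \lesssim (np_n)^{1/2}$ because $\delta_n = o(1)$. Writing $A^\top A - A_0^\top A_0 = A^\top(A-A_0) + (A-A_0)^\top A_0$ and applying $\|UV\|_F \le \|U\|_2\|V\|_F$ to the first summand and $\|UV\|_F \le \|U\|_F\|V\|_2$ to the second gives $\|A^\top A - A_0^\top A_0\|_F \le (\|A\|_2 + \|A_0\|_2)\|A-A_0\|_F \lesssim (np_n)^{1/2}\cdot (np_n)^{1/2}\delta_n$, so $n^{-1}\|A^\top A - A_0^\top A_0\|_F \lesssim p_n\delta_n$.

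For the second term, the true latent factors $\eta_{0i}$ are i.i.d.\ $N_k(0,I_k)$ under Assumption~\ref{assumption:dgm}, so Gaussian concentration (Lemma~\ref{lemma:sv_gaussian}) yields $\|E\|_2 = \|n^{-1}M_0^\top M_0 - I_k\|_2 \lesssim n^{-1/2}$ with probability $1-o(1)$; consequently $\|\Lambda_0 E\Lambda_0^\top\|_F \le \|\Lambda_0\|_2\|E\|_2\|\Lambda_0\|_F \lesssim p_n^{1/2}\cdot n^{-1/2}\cdot p_n^{1/2} = p_n n^{-1/2} \le p_n\delta_n$, using the hypothesis $\delta_n \gtrsim n^{-1/2}$. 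Adding the two bounds, dividing by $\|\Lambda_0\Lambda_0^\top\|_F \asymp p_n$, and intersecting the finitely many high-probability events for $M_0$ and $n^{-1}M_0^\top M_0$ (probability $1-o(1)$) gives the claim.

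The only genuinely probabilistic input beyond the stated hypothesis is the control of $M_0$: bounding $\|M_0\|_2$ and, more delicately, $\|n^{-1}M_0^\top M_0 - I_k\|_2$ at the $n^{-1/2}$ scale on an event of probability $1-o(1)$. This is the step I expect to require the most care, since the union bound needed for probability $1-o(1)$ can cost a logarithmic factor; one then has to check that this factor is absorbed — which it is under the condition $\delta_n \gtrsim n^{-1/2}$ combined with the subpolynomial factor carried by $\delta_n$ in every application of the proposition. Everything else is a deterministic operator-norm perturbation computation.
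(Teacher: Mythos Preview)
Your proof is correct and follows essentially the same route as the paper's: both exploit $\tilde M^\top\tilde M = nI_k$ to write $\tilde\Lambda\tilde\Lambda^\top = n^{-1}A^\top A$, expand the quadratic $A^\top A - A_0^\top A_0$ (the paper via the three terms $E^\top E + E^\top A_0 + A_0^\top E$ with $E=A-A_0$, you via the equivalent $A^\top(A-A_0)+(A-A_0)^\top A_0$), and then absorb the residual $\Lambda_0(n^{-1}M_0^\top M_0 - I_k)\Lambda_0^\top$ using Lemma~\ref{lemma:sv_gaussian}. Your closing caveat about a possible slowly diverging factor in $\|n^{-1}M_0^\top M_0 - I_k\|_2$ is well-placed---the paper simply asserts the $n^{-1/2}$ bound without comment---but since $k$ is fixed one may let the deviation parameter in Lemma~\ref{lemma:sv_gaussian} grow arbitrarily slowly, so this causes no trouble.
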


\begin{proposition}\label{prop:recovery_all}
    If $\tilde M$ is such that $\tilde M^\top \tilde M = n I_k$, Assumptions \ref{assumption:lambda_sv}, \ref{assumption:B}, \ref{assumption:X} hold, \begin{equation*} 
    \frac{1}{(np_n)^{1/2}}\left|\left|\tilde M \tilde \Lambda^\top - M_0 \Lambda_0 \right|\right|  \lesssim \delta_n
\end{equation*}
with $\delta_n \to 0$ and $\delta_n \gtrsim n^{-1/2}$. Then,  with probability at least $1- o(1)$, 
\begin{equation*}
    \begin{aligned}
          \frac{\left|\left|  \tilde \Lambda \tilde \Lambda^\top -   \Lambda_0 \Lambda_0^\top \right|\right|_F }{\left|\left|  \Lambda_0 \Lambda_0^\top \right|\right|_F }&\lesssim \delta_n, \quad  
           \frac{1}{(pq)^{1/2}}\left|\left| \tilde B-  B_0 \right|\right|_F &\lesssim \delta_n.
    \end{aligned}
\end{equation*}
\end{proposition}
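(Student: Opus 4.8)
The plan is to obtain the two displayed bounds by assembling the component results established above, rather than reproving anything from scratch. The first bound requires no new work at all: the hypotheses here — $\tilde M^\top \tilde M = n I_k$, Assumptions \ref{assumption:lambda_sv}, \ref{assumption:B}, \ref{assumption:X}, and $(np_n)^{-1/2}\|\tilde M \tilde\Lambda^\top - M_0\Lambda_0\| \lesssim \delta_n$ with $\delta_n \to 0$, $\delta_n \gtrsim n^{-1/2}$ — are exactly those of Proposition \ref{prop:accuracy_Lambda_outer}, which already yields $\|\tilde\Lambda\tilde\Lambda^\top - \Lambda_0\Lambda_0^\top\|_F / \|\Lambda_0\Lambda_0^\top\|_F \lesssim \delta_n$ on an event of probability $1 - o(1)$.

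For the regression-coefficient bound I would proceed in two steps. First, pass from $\tilde B$ to the covariate part of the linear predictor via Proposition \ref{prop:recovery_B}, giving $\|\tilde B - B_0\|_F \lesssim n^{-1/2}\|X\tilde B^\top - X B_0^\top\|_F$; this is the step that consumes the lower eigenvalue bound $s_q(X^\top X) \asymp n$ from Assumption \ref{assumption:X}. Second, control $\|X\tilde B^\top - X B_0^\top\|_F$. Since the triplet here is the post-processed joint MAP solution, the linear predictor is unchanged, $\hat Z = \tilde M\tilde\Lambda^\top + X\tilde B^\top$, so with $Z_0 = M_0\Lambda_0^\top + X B_0^\top$ one has the identity $X\tilde B^\top - X B_0^\top = (\hat Z - Z_0) - (\tilde M\tilde\Lambda^\top - M_0\Lambda_0^\top)$, and the triangle inequality gives $\|X\tilde B^\top - X B_0^\top\|_F \le \|\hat Z - Z_0\|_F + \|\tilde M\tilde\Lambda^\top - M_0\Lambda_0^\top\|_F$. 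The second term is $\lesssim (np_n)^{1/2}\delta_n$ by hypothesis, and the first is $\lesssim (np_n)^{1/2}\delta_n$ by \eqref{eq:parameters_cjmap} of Theorem \ref{thm:recovery_Z} (whose assumptions are in force in every application of this proposition, and against whose rate $\delta_n$ is calibrated). Equivalently, invoke Proposition \ref{prop:recovery_M_Lambda} directly, which bounds $(np_n)^{-1/2}\|X\tilde B^\top - X B_0^\top\|_F$ by $(np_n)^{-1/2}\|\hat Z - Z_0\|_F + C_2 kq/n^{1/2}$, the residual $O(kq/n^{1/2})$ being $\lesssim n^{-1/2} \lesssim \delta_n$ because $k$ and $q$ are fixed constants (Assumption \ref{assumption:hyperparametes}).

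Combining the two steps, $\|\tilde B - B_0\|_F \lesssim n^{-1/2}(np_n)^{1/2}\delta_n = p_n^{1/2}\delta_n$, hence $(p_n q)^{-1/2}\|\tilde B - B_0\|_F \lesssim q^{-1/2}\delta_n \lesssim \delta_n$ since $q$ is constant; intersecting the finitely many $1 - o(1)$ events coming from the invoked results still leaves probability $1 - o(1)$, which closes the argument.

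I expect the only place needing genuine care to be this second (regression) bound: because the stated hypothesis constrains $\tilde M\tilde\Lambda^\top$ rather than $X\tilde B^\top$ directly, one has to recover control of the covariate component of the linear predictor without introducing a term of larger order than $(np_n)^{1/2}\delta_n$, and this is exactly why it matters that $k$ and $q$ are constant and that $\delta_n \gtrsim n^{-1/2}$ (so the $O(kq/n^{1/2})$ remainders are absorbed). The $\Lambda_0\Lambda_0^\top$ bound is a straight citation of Proposition \ref{prop:accuracy_Lambda_outer}; were that result not already available, the substantive work would live there — transferring closeness of $\tilde M\tilde\Lambda^\top$ to $M_0\Lambda_0^\top$ into closeness of $\tilde\Lambda\tilde\Lambda^\top = n^{-1}(\tilde M\tilde\Lambda^\top)^\top(\tilde M\tilde\Lambda^\top)$ to $\Lambda_0\Lambda_0^\top$ through a $\sin\Theta$-type perturbation argument that uses the singular-value gaps $s_k(\Lambda_0) \asymp p_n^{1/2}$ and $s_k(M_0) \asymp n^{1/2}$ to align $\tilde M$ with $M_0$.
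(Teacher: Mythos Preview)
Your proposal is correct and follows essentially the same route as the paper, whose proof is the one-line citation ``Follows from Proposition \ref{prop:recovery_M_Lambda}, \ref{prop:recovery_B} and \ref{prop:accuracy_Lambda_outer}.'' You have simply unpacked how these three pieces chain together, and your second option for controlling $\|X\tilde B^\top - XB_0^\top\|_F$ (directly citing Proposition \ref{prop:recovery_M_Lambda}) is exactly the paper's intended path; your only minor embellishment is making explicit that the $O(kq/n^{1/2})$ remainder is absorbed by $\delta_n \gtrsim n^{-1/2}$.
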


\subsection{Proofs of Auxiliary Results}

\begin{proof}[Proof of Proposition \ref{prop:recovery_M_Lambda}]
    First, notice $\tilde M \tilde \Lambda^\top = (I-P_X) \hat Z$ and $M_0 \Lambda_0^\top = (I-P_X) Z_0 + P_XM_0\Lambda_0^\top$, where $P_X = X\left(X^\top X\right)^{-1}X$. Similarly, $X \tilde B^\top = P_X \hat Z$ and $X B_0^\top = P_X Z_0 - P_X M_0 \Lambda_0^\top$. Thus, we have
\begin{equation*}
    \begin{aligned}
       \left|\left|\tilde M \tilde \Lambda^\top  -M_0\Lambda_0^\top\right|\right|_F & =   \left|\left| (I-P_X) \hat Z  - (I-P_X) Z_0 + P_X M_0 \Lambda_0^\top \right|\right|_F \\
       &\leq \left|\left| (I-P_X) \left(\hat Z  -Z_0 \right)\right|\right|_F  +\left|\left| P_X M_0 \Lambda_0^\top \right|\right|_F \\
        & \leq \left|\left| \hat Z  -Z_0 \right|\right|_F  +\left|\left| P_X M_0 \Lambda_0^\top \right|\right|_F 
    \end{aligned}
\end{equation*}
Moreover, consider $P_X = U_X U_X^\top$, where $U_X \in \mathbb R^{n \times q}$ and $U_X^\top U_X = I_q$. Then, $\left|\left| P_X M_0 \Lambda_0^\top \right|\right|_F  = \left|\left| U_X^\top M_0 \Lambda_0^\top \right|\right|_F$ and elements of 
$U_X^\top M_0$ are independent standard normal random variables. Hence, we have
$\left|\left| U_X^\top M_0 \Lambda_0^\top \right|\right|_F \leq \left|\left| U_X^\top M_0  \right|\right|_F \left|\left|\Lambda_0 \right|\right|_F \lesssim kqp^{1/2}$ with probability $1- o(1)$, since $\left|\left| U_X^\top M_0  \right|\right|_F \lesssim \sqrt{kq}$ with probability $1- o(1)$ by Lemma \ref{lemma:tail_norm_gaussian}.
With similar steps, we can obtain
\begin{equation*}
    \begin{aligned}
        \frac{1}{(np)^{1/2}}\left|\left|X \tilde B^\top - X B_0^\top \right|\right|_F & \lesssim \frac{1}{(np)^{1/2}}\left|\left| \hat Z  -Z_0 \right|\right|_F  + \frac{k}{n^{1/2}}.
    \end{aligned}
\end{equation*}
\end{proof}

\begin{proof}[Proof of Proposition \ref{prop:recovery_B}]The result follows from
    \begin{equation*}\label{eq:B_tilde_accuracy}
    \begin{aligned}
    \left|\left| \tilde B-  B_0 \right|\right|_F &=
     \left|\left|(X^\top X)^{-1}X^\top \left( X \tilde B^\top - X B_0^\top\right)\right|\right|_F \leq  \left|\left|(X^\top X)^{-1}X^\top \right|\right|_F \left|\left| X \tilde B^\top - X B_0^\top\right|\right|_F \\
     &\leq \left|\left|(X^\top X)^{-1}\right|\right|_F \left|\left|X  \right|\right|_F  \left|\left| X \tilde B^\top - X B_0^\top\right|\right|_F\\
     & \lesssim  \frac{k}{n^{1/2}} \left|\left| X \tilde B^\top - X B_0^\top\right|\right|_F,
    \end{aligned}
    \end{equation*}
    where the last inequality follows from Assumption \ref{assumption:X}.
\end{proof}

\begin{proof}[Proof of Proposition \ref{prop:accuracy_Lambda_outer}]

Define $E = \tilde M \tilde \Lambda^\top - M_0 \Lambda_0^\top$. Then,
\begin{equation*}
    \begin{aligned}
        \tilde \Lambda \tilde \Lambda^\top &= \frac{1}{n}  \tilde \Lambda  \tilde M^\top  \tilde M  \tilde \Lambda^\top = \frac{1}{n} \left(\Lambda_0 M_0^\top M_0 \Lambda_0^\top + E^\top E + E^\top M_0 \Lambda_0^\top + \Lambda_0 M_0^\top E \right)
    \end{aligned}
\end{equation*}
and
\begin{equation*}
    \Lambda_0 \Lambda_0^\top  =\frac{1}{n} \Lambda_0 M_0^\top M_0 \Lambda_0^\top  + \Lambda_0 \left(I - \frac{1}{n} M_0^\top M_0\right) \Lambda_0^\top 
\end{equation*}
Moreover, 
\begin{equation*}
    \begin{aligned}
        \left|\left| E^\top E \right|\right|_F &\leq \left|\left| E \right|\right|_F^2 \lesssim \delta_n^2 np, \\%
         \left|\left| E^\top M_0 \Lambda_0^\top \right|\right|_F &\leq   \left|\left| E\right|\right|_F  \left|\left| M_0 \right|\right|_2  \left|\left|\Lambda_0^\top \right|\right|_F \lesssim \delta_n np,\\
         \left|\left| \Lambda_0 \left(I - \frac{1}{n} M_0^\top M_0\right) \Lambda_0^\top \right|\right|_F &\leq  \left|\left| \Lambda_0  \right|\right|_F^2  \left|\left|I - \frac{1}{n} M_0^\top M_0\right|\right|_2 \lesssim \left|\left| \Lambda_0  \right|\right|_F^2 \frac{\sqrt{k}}{n^{1/2}}
    \end{aligned}
\end{equation*}
since, by Lemma \ref{lemma:sv_gaussian}, with probability $1  - o(1)$, we have $||M_0||_2 \lesssim n^{1/2}$ and $\left|\left|I - \frac{1}{n} M_0^\top M_0\right|\right|_2  \lesssim \frac{k}{n^{1/2}}$.
Thus,
\begin{equation*}
    \begin{aligned}
          \left|\left|  \tilde \Lambda \tilde \Lambda^\top -   \Lambda_0 \Lambda_0^\top \right|\right|_F \lesssim
          \delta_n^2 p +  \delta_n p + \left|\left| \Lambda_0  \right|\right|_F^2 \frac{\sqrt{k}}{n^{1/2}}.
    \end{aligned}
\end{equation*}
The result follows from $\left|\left|\Lambda_0 \Lambda_0^\top\right|\right|_2 \asymp p$.

\end{proof}

\begin{proof}[Proof of Proposition \ref{prop:recovery_all}]
    Follows from Proposition \ref{prop:recovery_M_Lambda}, \ref{prop:recovery_B} and \ref{prop:accuracy_Lambda_outer}.
\end{proof}

\subsection{Auxiliary Lemmas}

\begin{lemma}[Tail Probability of the Maximum of Sub-Gaussian Random Variables]\label{lemma:tail_max_subgassian}
Let $X_i$ be independent and identically distributed for $i=1, \dots, n$ $\sigma^2$-sub-Gaussian random variables.
Then,
\begin{equation*}
    \text{pr} \Big[ \max_{i=1,\dots, n} X_i > \left[2\sigma^2\left\{\log (n) +t \right\}\right]^{1/2}\Big]\leq e^{-t}
\end{equation*}
\end{lemma}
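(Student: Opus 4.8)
The statement to prove is the standard maximal inequality for the maximum of $n$ i.i.d. $\sigma^2$-sub-Gaussian random variables: $\text{pr}[\max_i X_i > (2\sigma^2\{\log n + t\})^{1/2}] \le e^{-t}$.

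My plan:

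\begin{proof}[Proof sketch of Lemma~\ref{lemma:tail_max_subgassian}]
The plan is to combine a union bound with the standard single-variable sub-Gaussian tail bound. First I would recall that a $\sigma^2$-sub-Gaussian random variable $X$ satisfies $\text{pr}(X > u) \le \exp\{-u^2/(2\sigma^2)\}$ for every $u > 0$; this follows from the Chernoff/Markov argument applied to $E[e^{\lambda X}] \le e^{\lambda^2 \sigma^2/2}$ and optimizing over $\lambda > 0$, which yields the exponent $-u^2/(2\sigma^2)$. Next, by the union bound over $i = 1, \dots, n$,
\begin{equation*}
\text{pr}\Big[\max_{i=1,\dots,n} X_i > u\Big] \le \sum_{i=1}^n \text{pr}(X_i > u) \le n\exp\left\{-\frac{u^2}{2\sigma^2}\right\}.
\end{equation*}
Finally, I would substitute $u = [2\sigma^2\{\log(n) + t\}]^{1/2}$, so that $u^2/(2\sigma^2) = \log(n) + t$ and the right-hand side becomes $n \exp\{-\log(n) - t\} = e^{-t}$, which is exactly the claimed bound.

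There is essentially no obstacle here: the only points requiring a word of care are (i) stating precisely which convention of ``$\sigma^2$-sub-Gaussian'' is used (moment generating function bound versus tail bound), since these are equivalent up to universal constants but the clean exponent $u^2/(2\sigma^2)$ requires the MGF definition $E[e^{\lambda(X - EX)}] \le e^{\lambda^2\sigma^2/2}$ together with $EX = 0$ (or absorbing the mean), and (ii) noting that the bound is one-sided, so no extra factor of $2$ appears. If the paper's convention includes a nonzero mean, one would center first; given the usage elsewhere in the excerpt (bounding $\max_i \|\eta_i\|_\infty$ for mean-zero Gaussians), the mean-zero case is what is needed and the argument above applies verbatim.
\end{proof}
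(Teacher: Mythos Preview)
Your proof is correct and mirrors the paper's own argument exactly: a union bound combined with the single-variable sub-Gaussian tail inequality $\text{pr}(X_i>u)\le e^{-u^2/(2\sigma^2)}$, followed by the substitution $u=[2\sigma^2\{\log(n)+t\}]^{1/2}$. The paper presents only the displayed union-bound inequality and omits the substitution step you wrote out, but the approach is identical.
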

\begin{proof}[Proof Lemma \ref{lemma:tail_max_subgassian}]
The result follows from
    \begin{equation*}
          \text{pr} \Big(\max_{i=1,\dots, n} X_i > u\Big) \leq \sum_{i=1}^n \text{pr} (X_i > u) \leq n e^{-\frac{u^2}{2\sigma^2}}.
    \end{equation*}
\end{proof}
\begin{corollary}[Corollary of Lemma \ref{lemma:tail_max_subgassian}]\label{corr:tail_max_gaus}
    Letting $M_0 \in \mathbb R^{n \times k}$, with $[M_0]_{ij} \sim N(0,1)$ independent,
    \begin{equation*}
         \text{pr} \left[||M_0||_{\infty} > \{2\log(kn) + 2\log(n)\}^{1/2} \right]\leq 2e^{-\log(n)} = \frac{2}{n}.
    \end{equation*}
    \end{corollary}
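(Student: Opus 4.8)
The plan is to express $\|M_0\|_\infty$ as the maximum of $nk$ independent random variables and then invoke Lemma~\ref{lemma:tail_max_subgassian}, after first handling the absolute value built into the entrywise infinity norm. First I would note that $\|M_0\|_\infty = \max_{1\le i \le n,\,1\le j\le k} |[M_0]_{ij}|$, and, since each $[M_0]_{ij}\sim N(0,1)$ is symmetric, $-[M_0]_{ij}$ has the same distribution; a union bound over the two signs then gives, for any $u>0$,
\[
\text{pr}\big(\|M_0\|_\infty > u\big) \le \text{pr}\Big(\max_{i,j}[M_0]_{ij} > u\Big) + \text{pr}\Big(\max_{i,j}\big(-[M_0]_{ij}\big) > u\Big) \le 2\,\text{pr}\Big(\max_{i,j}[M_0]_{ij} > u\Big).
\]

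Next I would apply Lemma~\ref{lemma:tail_max_subgassian} to the collection $\{[M_0]_{ij}\}$, which consists of $nk$ i.i.d.\ standard Gaussians and is therefore a family of $1$-sub-Gaussian variables: taking sample size $nk$, $\sigma^2=1$, and the free parameter $t=\log(n)$ makes the threshold $[2\{\log(nk)+t\}]^{1/2} = \{2\log(kn)+2\log(n)\}^{1/2}$ and yields $\text{pr}\big(\max_{i,j}[M_0]_{ij} > \{2\log(kn)+2\log(n)\}^{1/2}\big) \le e^{-\log n} = 1/n$. Substituting this into the displayed inequality produces the claimed bound $2/n$.

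I do not expect any genuine obstacle here. The only two points that need attention are remembering that $\|\cdot\|_\infty$ involves absolute values — which is what forces the factor-of-two union bound over the sign — and correctly matching the number of entries $nk$ against the $\log(kn)$ term appearing in the stated threshold, which is exactly what pins down the choice $t = \log(n)$ when specializing the lemma.
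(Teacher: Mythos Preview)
Your proposal is correct and is exactly the intended derivation: the paper states this result as an immediate corollary of Lemma~\ref{lemma:tail_max_subgassian} without giving a separate proof, and your argument---union-bounding over the two signs to pass from $|[M_0]_{ij}|$ to $[M_0]_{ij}$, then applying the lemma with $nk$ variables, $\sigma^2=1$, and $t=\log(n)$---is precisely the direct specialization that justifies it.
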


\begin{lemma}[Lemma 1 of \citet{laurent_massart} (Tail Probability of the Norm of a Gaussian Vector)]\label{lemma:tail_norm_gaussian}
Consider $X \sim N_p(0,  \Sigma)$, then
    \begin{equation*}
        \text{pr} \left(||X||_2^2 > tr(\Sigma) +2\sqrt{t}||\Sigma||_F +2t ||\Sigma||_2\right) \leq e^{-t}.
    \end{equation*}
Hence, for $X \sim N_p(0, \sigma^2 I_p)$, then
    \begin{equation*}
        \text{pr} \left(||X||_2 > t\right) \leq 2 \exp\left(- \frac{t^2}{2p \sigma^2}\right).
    \end{equation*}
\end{lemma}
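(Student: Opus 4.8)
The plan is to treat the first display as a standard concentration inequality for a Gaussian quadratic form and to prove it by the Chernoff (Laplace-transform) method, after which the corollary follows by specialization. First I would diagonalize the covariance: writing $X = \Sigma^{1/2} W$ with $W \sim N_p(0, I_p)$ and $\Sigma = U \operatorname{diag}(\lambda_1, \dots, \lambda_p) U^\top$ (eigenvalues $\lambda_i \ge 0$), the rotation $G = U^\top W$ is again standard Gaussian, so that $\|X\|_2^2 = W^\top \Sigma W = \sum_{i=1}^p \lambda_i G_i^2$ is a weighted sum of independent $\chi^2_1$ variables. I would then record the dictionary $\operatorname{tr}(\Sigma) = \sum_i \lambda_i$, $\|\Sigma\|_F^2 = \sum_i \lambda_i^2$, and $\|\Sigma\|_2 = \max_i \lambda_i$, so that the three quantities in the bound are exactly the $\ell_1$, $\ell_2$, and $\ell_\infty$ functionals of the eigenvalue vector.

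Next I would carry out the exponential-moment step. For $0 < s < (2\|\Sigma\|_2)^{-1}$ the moment generating function factorizes across coordinates, giving $E \exp(s\|X\|_2^2) = \prod_{i=1}^p (1 - 2 s \lambda_i)^{-1/2}$, and hence the centered cumulant generating function $\Psi(s) = -\tfrac12 \sum_i \log(1 - 2 s \lambda_i) - s\, \operatorname{tr}(\Sigma)$. Applying the elementary inequality $-\log(1-x) \le x + \tfrac{x^2}{2(1-x)}$, valid for $0 \le x < 1$, with $x = 2 s \lambda_i$, and bounding $\lambda_i \le \|\Sigma\|_2$ in the quadratic remainder only, I obtain the sub-gamma cumulant bound
\[
\Psi(s) \le \frac{s^2 \|\Sigma\|_F^2}{1 - 2 s \|\Sigma\|_2}, \qquad 0 < s < \frac{1}{2\|\Sigma\|_2}.
\]

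The final step is the generic sub-gamma tail lemma: if a centered variable satisfies $\log E e^{sY} \le \tfrac{s^2 \nu/2}{1 - c s}$ on $(0, 1/c)$, then $\text{pr}(Y > \sqrt{2 \nu t} + c t) \le e^{-t}$. Taking $Y = \|X\|_2^2 - \operatorname{tr}(\Sigma)$, $\nu = 2\|\Sigma\|_F^2$, and $c = 2\|\Sigma\|_2$ turns the threshold into $\sqrt{2\nu t} + c t = 2\|\Sigma\|_F t^{1/2} + 2\|\Sigma\|_2 t$, which is precisely the stated event. I expect this optimization to be the only delicate point: it amounts to computing the Legendre transform of $g(s) = \tfrac{s^2\nu/2}{1 - cs}$ and checking that $g^*\!\big(\sqrt{2\nu t} + ct\big) \ge t$, an elementary calculus computation that is most cleanly handled by substituting the explicit near-optimal $s$ rather than differentiating. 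For the corollary I would specialize to $\Sigma = \sigma^2 I_p$, for which $\operatorname{tr}(\Sigma) = p\sigma^2$, $\|\Sigma\|_F = \sigma^2 p^{1/2}$, and $\|\Sigma\|_2 = \sigma^2$; alternatively, and matching the crude exponential form stated, the pigeonhole implication $\|X\|_2 > t \Rightarrow \max_{i} |X_i| > t / p^{1/2}$ combined with a union bound over the $p$ coordinates and the scalar Gaussian tail $\text{pr}(|X_i| > u) \le 2 e^{-u^2/(2\sigma^2)}$ yields a deviation bound of the claimed exponential order $\exp\{-t^2/(2 p \sigma^2)\}$.
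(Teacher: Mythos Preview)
The paper does not supply its own argument here: the lemma is stated with attribution to Laurent and Massart and no proof is given. Your derivation of the first display is correct and is precisely the Laurent--Massart proof (diagonalize, bound the centered cumulant generating function of the weighted $\chi^2$ sum by the sub-gamma envelope, then invert via the Legendre transform).

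For the second display there is a small gap. Neither route you sketch recovers the stated prefactor $2$: the pigeonhole-and-union argument gives $2p\exp\{-t^2/(2p\sigma^2)\}$, and specializing the first display to $\Sigma=\sigma^2 I_p$ produces the threshold $\sigma^2(p+2\sqrt{pt}+2t)$, which does not invert cleanly to $t^2/(2p\sigma^2)$ with that constant. The simplest fix reuses the Chernoff step you already set up, but with a fixed suboptimal tilt: for $p\ge 2$ take $s=1/(2p)$ in $E\exp(s\|X\|_2^2/\sigma^2)=(1-2s)^{-p/2}$ to get
\[
\text{pr}(\|X\|_2>t)\le (1-1/p)^{-p/2}\,e^{-t^2/(2p\sigma^2)}\le 2\,e^{-t^2/(2p\sigma^2)},
\]
since $p\mapsto(1-1/p)^{-p/2}$ decreases from $2$ at $p=2$ to $e^{1/2}$; the case $p=1$ is the scalar Gaussian tail bound itself. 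That said, your hedge ``of the claimed exponential order'' is apt: in every downstream use in the paper (the bound on $\|\sum_i\eta_i\|_2$ and on $\|U_X^\top M_0\|_F$) only the exponential rate matters, and either of your arguments would already suffice there.
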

\begin{corollary} [Corollary of Lemma \ref{lemma:tail_norm_gaussian}]
If $\eta_i \sim N_k(0, I_k)$ independently, then
\begin{equation*}
     \text{pr} \bigg\{\bigg|\bigg|\sum_{i=1}^n \eta_i \bigg|\bigg|_2 > (2nk)^{1/2}\log^{1/2}(n) \bigg\} \leq \frac{2}{n}.
\end{equation*}

\end{corollary}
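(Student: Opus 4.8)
The plan is to recognize that $S_n := \sum_{i=1}^n \eta_i$ is again a spherical Gaussian vector and then feed it into the second tail bound of Lemma~\ref{lemma:tail_norm_gaussian}. First I would observe that, since the $\eta_i$ are independent $N_k(0, I_k)$, the sum satisfies $S_n \sim N_k(0, n I_k)$; that is, $S_n$ is an isotropic Gaussian in $\mathbb R^k$ with per-coordinate variance $\sigma^2 = n$. This reduces the claim to a single concentration inequality for the Euclidean norm of such a vector.

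Next I would invoke the bound $\text{pr}(\|X\|_2 > t) \le 2\exp\{-t^2/(2 k \sigma^2)\}$ from Lemma~\ref{lemma:tail_norm_gaussian}, applied with ambient dimension $k$ and variance $\sigma^2 = n$ (so the exponent denominator is $2 k \sigma^2 = 2 n k$). Choosing the threshold $t = (2 n k)^{1/2}\log^{1/2}(n)$ yields $t^2 = 2 n k \log(n)$, hence $t^2/(2 n k) = \log(n)$, and the right-hand side collapses to $2 e^{-\log n} = 2/n$, which is precisely the asserted inequality.

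There is essentially no substantive obstacle here; the only thing that needs care is matching the normalization in Lemma~\ref{lemma:tail_norm_gaussian}. In the second display of that lemma the exponent denominator is $2 p \sigma^2$ with $p$ the dimension of the Gaussian vector, which in the present application is $k$ (not $n$), and the variance parameter is $\sigma^2 = n$ coming from the sum of $n$ independent unit-variance vectors; one simply checks that these combine so that the chosen $t$ produces exactly $\log n$ in the exponent. As an alternative one could instead start from the first display of Lemma~\ref{lemma:tail_norm_gaussian} with $\Sigma = n I_k$, using $\mathrm{tr}(\Sigma) = nk$, $\|\Sigma\|_F = n\sqrt{k}$, $\|\Sigma\|_2 = n$, applied to $\|S_n\|_2^2$ with $t = \log n$; this also gives a bound of the form $2/n$ but with a slightly messier threshold, so routing through the clean spherical bound is the cleaner route.
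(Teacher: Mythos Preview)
Your proposal is correct and matches the paper's intended argument: the paper states this result as an immediate corollary of Lemma~\ref{lemma:tail_norm_gaussian} without a separate proof, and your derivation---noting $\sum_{i=1}^n \eta_i \sim N_k(0, nI_k)$ and plugging $p=k$, $\sigma^2=n$, $t=(2nk)^{1/2}\log^{1/2}(n)$ into the second display of the lemma---is exactly the computation that makes the corollary immediate.
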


\begin{lemma}[Singular Values of Matrix with independent Gaussian Entries]\label{lemma:sv_gaussian}
Consider a matrix $X \in \mathbb R^{n \times k}$, such that $[X]_{ij} \sim N(0,1)$ independently, then
    \begin{equation*}
        \text{pr} \left\{n^{1/2} -k^{1/2} -t \leq s_d(X) \leq s_1(X) \leq n^{1/2} +k^{1/2} + t\right\} \geq 1 - 2e^{-t^2/2}
    \end{equation*}
\end{lemma}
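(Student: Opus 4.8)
The plan is to follow the classical Davidson--Szarek route for the extreme singular values of a Gaussian matrix: reduce the two-sided deviation bound to control of the means $\mathbb{E}\, s_1(X)$ and $\mathbb{E}\, s_d(X)$ via Gaussian concentration, and then pin those means down with a Gaussian comparison inequality. Throughout I take $d=\min(n,k)$, so $X$ has exactly $d$ singular values $s_1(X)\ge\cdots\ge s_d(X)\ge 0$; in the applications $n\ge k$, so $d=k$.

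\emph{Step 1 (concentration about the mean).} Identify $X$ with a point of $\mathbb{R}^{nk}$ carrying the Euclidean norm, which coincides with $\|\cdot\|_F$. By Weyl's perturbation bound for singular values, $|s_\ell(X)-s_\ell(Y)|\le\|X-Y\|_2\le\|X-Y\|_F$ for every $\ell$, so both $X\mapsto s_1(X)$ and $X\mapsto s_d(X)$ are $1$-Lipschitz on $\mathbb{R}^{nk}$. The Gaussian concentration inequality for Lipschitz functions of a standard normal vector then gives, for every $t>0$,
\begin{equation*}
\text{pr}\{s_1(X)>\mathbb{E}\,s_1(X)+t\}\le e^{-t^2/2},\qquad \text{pr}\{s_d(X)<\mathbb{E}\,s_d(X)-t\}\le e^{-t^2/2}.
\end{equation*}
A union bound over these two one-sided events reduces the lemma to showing $\mathbb{E}\,s_d(X)\ge n^{1/2}-k^{1/2}$ and $\mathbb{E}\,s_1(X)\le n^{1/2}+k^{1/2}$.

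\emph{Step 2 (mean bounds via Gaussian comparison).} Write $s_1(X)=\max_{u\in\mathbb{S}^{k-1}}\max_{v\in\mathbb{S}^{n-1}}v^\top Xu$ and $s_d(X)=\min_{u\in\mathbb{S}^{k-1}}\max_{v\in\mathbb{S}^{n-1}}v^\top Xu$. The Gaussian process $G_{u,v}=v^\top Xu$ has increments $\mathbb{E}(G_{u,v}-G_{u',v'})^2=\|vu^\top-v'u'^\top\|_F^2=2-2\langle u,u'\rangle\langle v,v'\rangle$. Compare it to $H_{u,v}=g^\top u+h^\top v$ with $g\in\mathbb{R}^k$, $h\in\mathbb{R}^n$ independent standard normal, whose increments are $\|u-u'\|^2+\|v-v'\|^2=4-2\langle u,u'\rangle-2\langle v,v'\rangle$; with $a=\langle u,u'\rangle$, $b=\langle v,v'\rangle\in[-1,1]$ the inequality $2-2ab\le 4-2a-2b$ is just $(1-a)(1-b)\ge 0$, and it is an equality whenever $u=u'$. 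Hence, by the Sudakov--Fernique inequality for the outer maximum and by Gordon's min--max comparison for the min--max,
\begin{equation*}
\mathbb{E}\,s_1(X)\le\mathbb{E}\max_{u,v}H_{u,v}=\mathbb{E}\|g\|+\mathbb{E}\|h\|,\qquad \mathbb{E}\,s_d(X)\ge\mathbb{E}\min_u\max_vH_{u,v}=\mathbb{E}\|h\|-\mathbb{E}\|g\|.
\end{equation*}
Standard $\chi$-moment estimates finish the job: $\mathbb{E}\|g\|\le k^{1/2}$, and since $\phi(m)=m^{1/2}-\mathbb{E}\|\xi_m\|$ (with $\xi_m\sim N_m(0,I_m)$) is nonnegative and nonincreasing, $\mathbb{E}\|h\|-\mathbb{E}\|g\|=n^{1/2}-k^{1/2}+\phi(k)-\phi(n)\ge n^{1/2}-k^{1/2}$ when $k\le n$. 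Combining with Step 1 yields the claim.

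The only step that is not purely mechanical is the Gaussian comparison in Step 2 — verifying the increment condition and extracting the clean constants $n^{1/2}\pm k^{1/2}$ — and even that is textbook; alternatively, the lemma can simply be quoted from Davidson--Szarek or a standard reference on non-asymptotic random matrix theory.
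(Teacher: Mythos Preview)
Your proof is correct and is exactly the classical Davidson--Szarek argument (Lipschitz concentration plus Slepian/Gordon comparison) that underlies the result. The paper does not supply its own proof but merely cites Vershynin's lecture notes, where precisely this argument appears; you have simply written out what the citation points to.
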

\begin{proof}[Proof of Lemma \ref{lemma:sv_gaussian}]
    See chapter 1 of \citet{vershynin_08}.
\end{proof}

\begin{lemma}[Singular values of $\Psi$]\label{lemma:psi}
Define the matrix $\Psi \in \mathbb R^{n \times p}$, where $[\Psi]_{ij} = y_{ij} - h(x_i^\top \beta_j + \eta_i^\top \lambda_j)$, where $\eta_i \sim N_k(0, I_k)$ independently. Then, with probability at least $1-2e^{-ct^2}$,
\begin{equation*}
    s_1(\Psi) \leq n^{1/2} + C p^{1/2} +t,
\end{equation*}
where $c$ and $C$ are absolute constants.
\end{lemma}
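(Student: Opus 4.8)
The plan is to remove the randomness coming from the Bernoulli responses by conditioning on the latent factors. Condition on $\eta_1,\dots,\eta_n$ (equivalently on $M_0$), treating $X,\Lambda_0,B_0$ as fixed. Under Assumption \ref{assumption:dgm}, given $M_0$ the $y_{ij}$ are independent with $y_{ij}\mid\eta_i\sim\mathrm{Ber}\{h(z_{0ij})\}$, $z_{0ij}=x_i^\top\beta_{0j}+\eta_i^\top\lambda_{0j}$; hence, conditionally on $M_0$, the entries $\Psi_{ij}=y_{ij}-h(z_{0ij})$ are independent, mean zero, supported on $[-1,1]$, with conditional variance $h(z_{0ij})\{1-h(z_{0ij})\}\le 1/4$. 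In particular each $\Psi_{ij}$ is sub-Gaussian with $\psi_2$-norm bounded by a universal constant, uniformly in the realization of $M_0$. Since all probability bounds below will be uniform in that realization, integrating over $M_0$ at the end turns conditional statements into unconditional ones, so the conditioning is harmless.

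With the entries conditionally independent and sub-Gaussian, the quick route is to invoke a standard non-asymptotic bound on the spectral norm of a random matrix with independent, mean-zero sub-Gaussian entries — e.g.\ the relevant result in \citetSupp{vershynin_08}, or the underlying $\varepsilon$-net argument: cover $S^{n-1}\times S^{p-1}$ by a net of cardinality $\exp\{O(n+p)\}$, note that for fixed $(u,v)$ the scalar $u^\top\Psi v=\sum_{ij}u_iv_j\Psi_{ij}$ is sub-Gaussian with $\psi_2$-norm $\lesssim(\sum_{ij}u_i^2v_j^2)^{1/2}=1$ (rotation invariance of sub-Gaussians), and union bound. This gives, conditionally on $M_0$,
\begin{equation*}
s_1(\Psi)=\|\Psi\|_2\le C_0\left(n^{1/2}+p^{1/2}+t\right)
\end{equation*}
with probability at least $1-2e^{-t^2}$ for an absolute $C_0$; integrating over $M_0$ and relabelling $t$ yields the stated form up to an absolute constant in front of $n^{1/2}$, which is in fact all that is needed downstream, since only $\|\Psi\|_2\lesssim n^{1/2}+p^{1/2}$ is used in the proof of Theorem \ref{thm:recovery_Z}.

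To pin the leading constant to $1$ exactly as written, I would instead first control the conditional mean. The row and column $L^2$-norms satisfy $\max_i\{\sum_j\mathrm{Var}(\Psi_{ij}\mid M_0)\}^{1/2}\le p^{1/2}/2$ and $\max_j\{\sum_i\mathrm{Var}(\Psi_{ij}\mid M_0)\}^{1/2}\le n^{1/2}/2$, so a sharp Gaussian-comparison / matrix-concentration bound gives $\mathbb E(\|\Psi\|_2\mid M_0)\le n^{1/2}+p^{1/2}$ up to lower-order terms. Then, since $\Psi\mapsto\|\Psi\|_2$ is convex and $1$-Lipschitz in the Frobenius norm and the entries are conditionally independent and bounded, Talagrand's convex concentration inequality gives $\mathrm{pr}\{\|\Psi\|_2>\mathbb E(\|\Psi\|_2\mid M_0)+t\mid M_0\}\le Ce^{-ct^2}$. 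Combining the two bounds and integrating over $M_0$ yields $s_1(\Psi)\le n^{1/2}+Cp^{1/2}+t$ with probability at least $1-2e^{-ct^2}$.

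The main obstacle is not conceptual but one of sharpness: the routine net argument only delivers an absolute constant multiplying $n^{1/2}$, and squeezing that constant down to $1$ requires the refined expectation bound together with a genuine concentration inequality (Talagrand's, not bounded differences — McDiarmid is far too weak at the scale $n^{1/2}+p^{1/2}$). One must also check that every constant in play — the universal sub-Gaussian bound on the entries, the $1/4$ variance bound, and the constants in the concentration inequality — is uniform in $M_0$, so the conditional bounds integrate cleanly. If matching the constant $1$ proves awkward, the weaker statement $s_1(\Psi)\le C(n^{1/2}+p^{1/2})+t$ is sufficient for every later use.
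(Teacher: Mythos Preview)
Your proposal is correct, and your first ``quick route'' is in the same family as what the paper does (condition on $M_0$, independent bounded sub-Gaussian entries, $\varepsilon$-net plus union bound). The paper, however, refines the net argument differently from either of your two suggestions: instead of a net on $S^{n-1}\times S^{p-1}$ or invoking Talagrand plus a sharp expectation bound, it adapts Theorem~5.39 of \citetSupp{Vershynin_2012} by taking a $1/4$-net on $S^{p-1}$ only, writing $\|\Psi x\|^2=\sum_i Z_i^2$ with $Z_i=\Psi_i^\top x$, and applying a Bernstein-type inequality to $\frac{1}{n}\sum_i Z_i^2 - x^\top\bar V x$, where $\bar V=\frac{1}{n}\sum_i\mathrm{diag}(v_{i1},\dots,v_{ip})$ and $v_{ij}\le 1/4$. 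The leading constant $1$ in front of $n^{1/2}$ then comes for free from $x^\top\bar V x\le 1/4$, without any appeal to Talagrand or to sharp expectation bounds. Your Talagrand route would also work and is arguably cleaner conceptually, but it requires citing a separate sharp bound on $\mathbb E\|\Psi\|_2$, whereas the paper's single-sphere-net argument gets both concentration and the constant in one stroke. Your observation that only $\|\Psi\|_2\lesssim n^{1/2}+p^{1/2}$ is used downstream is accurate, so the distinction is cosmetic for the purposes of Theorem~\ref{thm:recovery_Z}.
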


\begin{proof}[Proof of Lemma \ref{lemma:psi}]
We modify the proof of Theorem 5.39 in \citet{Vershynin_2012} to matrices with independent rows and  non-common diagonal second moment. 
    We first condition on the realization of $M_0$ and consider it fixed. Next, we derive the conclusion since the desired result holds for every $M_0$.
     For $i=1, \dots, n$ and $j=1, \dots, p$, define $p_{ij} = h(x_i^\top \beta_j + \eta_i^\top \lambda_j)$ and $v_{ij} = p_{ij}(1- p_{ij})$, where, for simplicity, we dropped the dependence on the $\eta_i$'s. Recall that conditionally on $M_0$, the elements of $\Psi$ are independent, and
    \begin{equation*}
        E(\Psi_{ij} \mid M_0) = 0, \quad  var(\Psi_{ij} \mid M_0) = v_{ij}.
    \end{equation*}
    For $i=1, \dots, n$, define $V_i = diag\left(v_{i1}, \dots, v_{ip}\right)$ and $\bar V = \frac{1}{n} \sum_{i=1}^n V_i$.
    Our conclusion is equivalent to showing 
\begin{equation}\label{eq:norm_equivalent}
\left|\left| \frac{1}{n} \Psi^\top \Psi - \bar V \right| \right| \leq \max(\delta, \delta^2) = \epsilon, \quad \delta= C \frac{p^{1/2}}{n^{1/2}} + \frac{t}{n^{1/2}},   
    \end{equation} with high probability.
Indeed, if \eqref{eq:norm_equivalent} holds, then for any $x \in S^{p-1}$, where $S^{p-1}$ denotes the unit sphere in $R^p$, $\left|\left| \frac{1}{n^{1/2}}\Psi x\right| \right|^2\leq \epsilon + |x^\top \bar V x| \leq \epsilon + \frac{1}{4}$, which implies $\left|\left| \frac{1}{n^{1/2}}\Psi x\right| \right|\leq \delta + \frac{1}{4}$, and, consequently, $s_1(\Psi) \leq \frac{n}{4} +  C p^{1/2} +t$.
Denote by $N$ a $1/4$-net of $S^{p-1}$, then, by Lemma 5.4 in \citet{Vershynin_2012}, we have
\begin{equation*}
    \left|\left| \frac{1}{n} \Psi^\top \Psi - \bar V \right| \right| \leq 2 \max_{x \in N} \left|\left\langle \left( \frac{1}{n} \Psi^\top \Psi - \bar V\right)x ,x\right\rangle \right| = 2 \max_{x \in N} \left| \frac{1}{n}\left|\left| \Psi x\right| \right|^2 - x^\top\bar Vx  \right|
\end{equation*}
It remains to show $\max_{x \in N} \left| \frac{1}{n}\left|\left| \Psi x\right| \right|^2 - x^\top\bar Vx  \right| \leq \frac{\epsilon}{2}$.
By Lemma 5.2 in \citet{Vershynin_2012}, we can choose $N$ to be of cardinality at most $9^p$.
Fix a vector $x \in S^{p-1}$, and define $Z_i = \Psi_i x$, where $\Psi_i^\top$ is the $i$-th row of $\Psi$, then $||\Psi x||^2 = \sum_{i}^{n} Z_i^2$.  The $Z_i$'s are independent, sub-Gaussian random variables with $E(Z_i^2) = x^\top V_i x \leq 1/4$ and $||Z_i||_{\psi} \leq \max_{j}||\Psi_{ij}|| \leq \frac{1}{\log(2)}$, where $||X||_{\psi}$ denotes the sub-Gaussian norm of $X$.
Hence, we have
\begin{equation*}
    \text{pr}\left( \left|\frac{1}{n}\sum_{i=1}^n Z_i^2 - x^\top \bar{V} x\right| >\frac{\epsilon}{2}\right) \leq 2 \exp\left\{-c_1 \min(\epsilon, \epsilon^2)N\right\} \leq 2 \exp\left(-c_1 \delta^2 N\right) \leq  2 \exp\left\{-c_1 \left(C^2p + t^2\right)\right\}
\end{equation*}
where the first inequality follows from Corollary 5.17 in \citet{Vershynin_2012}, and $c_1 = \frac{1}{32e^2 \log(2)}$. Thus,
\begin{equation*}
    \text{pr}\left( \max_{x \in N} \left| \frac{1}{n}\left|\left| \Psi x\right| \right|^2 - x^\top\bar Vx  \right| >\frac{\epsilon}{2}\right) \leq 9^p 2\exp\left\{-c_1 \left(C^2p + t^2\right)\right\} \leq 2 \exp\left(-c_1 t^2\right)
\end{equation*}
where the last inequality follows from choosing $C = \sqrt{\frac{\log(9)}{c_1}}$.
\end{proof}

\subsection{Note on the Posterior Variance}\label{subsec:note_covariance}

Recall the posterior variance for $\theta_j$ is given by $\rho^2 \tilde V_j$, where
\begin{equation*}
    \tilde V_j = \left\{-\frac{\partial^2}{\partial \theta_j \partial \theta_j^\top}p( Y^{(j)} \mid X, \tilde M, \lambda_j, \beta_j) + \log p(\lambda_j \mid \hat \tau_{\lambda_j}) + \log p(\beta_j \mid \hat \tau_{\beta_j}) \mid_{\theta_j = \tilde \theta_j} \right\}^{-1}
\end{equation*}
In the following, we assume $\tau_{\beta_j} = \mathcal O(1)$,  $\tau_{\lambda_j} = \mathcal O(1)$,  and $\rho = \mathcal O(1)$. 
In particular, we have
\begin{equation*}
    \tilde V_j^{-1} = \begin{bmatrix}
        A + \tau_{\beta_j}^{-2}I_q & B\\
        B^\top & C+ \tau_{\lambda_j}^{-2}I_k
    \end{bmatrix},
\end{equation*}
where
\begin{equation*}
    \begin{aligned}
        A & =  \sum_{i=1}^n p_{ij} (1-p_{ij})  x_i x_i^\top = X^\top W_j X,\\
        B &= \sum_{i=1}^n p_{ij} (1-p_{ij})  x_i \tilde \eta_i^\top = X^\top W_j \tilde M,\\
        C & = \sum_{i=1}^n p_{ij} (1-p_{ij}) \tilde \eta_i \tilde \eta_i^\top= \tilde M^\top W_j  \tilde M,
    \end{aligned}
\end{equation*}
and $p_{ij} = h(\tilde z_{ij}) = \frac{1}{1+e^{-z_{ij}}}$, $W_j=\text{diag}(w_{j1}, \dots, w_{jn} )$, and $w_{ji} = p_{ij} (1-p_{ij})$.
Hence, \begin{equation*}
    V_j^{-1} = [X ~ \tilde{M}]^\top W_j [X ~ \tilde{M}] + \begin{bmatrix}
        \tau_{\beta_j}^{-2}I_q & 0\\
        0 &\tau_{\lambda_j}^{-2}I_k
    \end{bmatrix},
\end{equation*}
Moreover,
\begin{equation*}
   [X ~ \tilde{M}]^\top W_j [X ~ \tilde{M}]\succeq  w_{j,min} [X ~ \tilde{M}]^\top [X ~ \tilde{M}] = \begin{bmatrix}
       X^\top X & 0\\
       0 & n I_k
   \end{bmatrix},
\end{equation*}
where $w_{j,min} = \min\{w_{j1}, \dots, w_{jn}\}$, $w_{j,min} \geq \frac{1}{2} h(-\gamma_n) = \frac{1}{2} \frac{1}{1 + e^{\gamma_n}}$, and $\gamma_n \lesssim  \log^{1/2}\{(k \vee q)n\}$ is an upper bound to $\tilde z_{ij}$, and the equality follows from $\tilde M = n^{1/2} \tilde U$ with $\tilde U^\top \tilde U = I_k$ and $\tilde M^\top X = 0$.
Thus,
\begin{equation*}
    \begin{aligned}
        V_j  \preceq 2 (1 + e^{\gamma_n}) \begin{bmatrix}
            (X^\top X)^{-1} & 0\\
            0 &  \frac{1}{n} I_k
        \end{bmatrix} 
    \end{aligned}.
\end{equation*}
Recall that under Assumption \ref{assumption:X}, we have $(X^\top X)^{-1} \preceq \frac{C}{n}I_q$, where $C$ is a universal constant. 
This implies that for a sample $ \theta^{(s)} = \left(\theta_1^{(s)}, \dots,\theta_p^{(s)} \right)$ from $\tilde \Pi$ we have $\theta_j^{(s)} \overset{d}{=} \tilde \theta_j + \rho \tilde V_j^{1/2} \nu_j$, with $\nu_j \sim N_{k+q}(0, I_{k+q})$, where $\overset{d}{=}$ implies equality in distribution.
Hence, we have
\begin{equation}
    \left| \left| \theta_j^{s}  - \tilde\theta_j \right| \right|_F \lesssim e^{\gamma_n/2} \frac{\log(p)}{n^{1/2}}, 
\end{equation}
for all $j=1, \dots, p$, with posterior probability at least $1 - o(1)$.

\section{Extension to the Probit Model}\label{sec:extension_to_probit}
\subsection{Main Result}
It is interesting to extend the results above to other models for binary data, for instance using the probit link $\Phi^{-1}(\cdot)$, where $\Phi(\cdot)$ denotes the cumulative distribution function of a standard normal random variable.
The following Theorem shows that the joint maximum a posteriori estimates obtained under a probit link have asymptotic accuracy guarantees in approximating the true sample and outcome-specific probabilities in the large $p$ and $n$ regime.   
\begin{theorem}[Recovery of the distribution under the probit model]\label{thm:recovery_hellinger}
    Suppose Assumptions \ref{assumption:dgm} -- \ref{assumption:X} hold with $h(\cdot)$ replaced by $\Phi(\cdot)$ in equation \eqref{eq:gllvm}. Define $\hat Z = X \hat B^\top + \hat M \hat \Lambda^\top$, where $\left(\hat M, \hat \Lambda, \hat B\right)$ is the solution to \eqref{eq:cjmap}, with the probit likelihood replacing the logistic one, then, with probability at least $1- C/n $
    \begin{equation*}\label{eq:hellinger_cjmle}
       d_H^2\{\Phi \left(\hat Z\right), \Phi \left(Z_0\right)\} \lesssim  \log \{(k \vee q) n\}\left(\frac{1}{n^{1/2}} + \frac{1}{p_n^{1/2}}\right),
    \end{equation*}
    where $d_H^2\left\{\Phi \left(\hat Z\right), \Phi \left(Z_0\right)\right\} = \frac{1}{np}\sum_{i=1}^n \sum_{j=1}^{p_n} d_H^2\left\{\Phi \left(\hat z_{ij}\right), \Phi \left(z_{0ij}\right)\right\}$,  $d_H^2(f, g) = \left(\sqrt{f}-\sqrt{g}\right)^2 + \left(\sqrt{1-f}-\sqrt{1-g}\right)^2 $, and $C$ is a universal constant not depending on $n$ and $p_n$. 
\end{theorem}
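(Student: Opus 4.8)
\emph{Strategy.} I would prove this by running the classical Hellinger-consistency argument for penalized maximum-likelihood estimators (in the spirit of Wong--Shen and van de Geer), adapted to the rank-$(k+q)$ structure of $\hat Z$. The reason the probit conclusion is phrased in Hellinger distance rather than in the Frobenius loss of Theorem~\ref{thm:recovery_Z} is that the curvature $-\partial_z^2\log p_z(y)$ of the probit log-likelihood is bounded below on $[-\gamma_n,\gamma_n]$ only by a quantity of order $\gamma_n e^{-\gamma_n^2/2}$, which is polynomially (not subpolynomially) small once $\gamma_n\asymp\log^{1/2}\{(k\vee q)n\}$; replaying the argument of Theorem~\ref{thm:recovery_Z} verbatim would therefore incur a polynomial factor, whereas the likelihood-ratio argument below does not.

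\emph{Setup and basic inequality.} As in the proof of Theorem~\ref{thm:recovery_Z}, I would restrict to the event $A_1\cap A_2$, of probability at least $1-C/n$, on which $\|M_0\|_\infty\le 2\log^{1/2}(kn)$, $\|X\|_\infty\le c_x\log^{1/2}(qn)$, so that $\|Z_0\|_\infty\le\gamma_n$ and $\|\hat Z\|_\infty\le\gamma_n$ with $\gamma_n\asymp\log^{1/2}\{(k\vee q)n\}$, and $Z_0=XB_0^\top+M_0\Lambda_0^\top$ and $\hat Z=X\hat B^\top+\hat M\hat\Lambda^\top$ both lie in the sieve $\mathcal Z_n=\{Z\in\mathbb R^{n\times p_n}:\operatorname{rank}(Z)\le k+q,\ \|Z\|_\infty\le\gamma_n\}$. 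Writing $p_z$ for the $\mathrm{Ber}\{\Phi(z)\}$ mass function, the fact that $(\hat M,\hat\Lambda,\hat B)$ maximizes the penalized log-likelihood, combined with the change-of-prior bound already established in the proof of Theorem~\ref{thm:recovery_Z}, gives
\[
\prod_{i,j}\frac{p_{\hat z_{ij}}(y_{ij})}{p_{z_{0ij}}(y_{ij})}\ \ge\ e^{-2r_n},\qquad r_n\ \lesssim\ kn\log(kn)+c_\Lambda^2 kp_n/\tau_{\Lambda,\min}^2+c_B^2 qp_n/\tau_{B,\min}^2\ \lesssim\ n\log(kn)+p_n .
\]

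\emph{Core estimate.} Working conditionally on $(M_0,X)$ (so that the $y_{ij}$ are independent $\mathrm{Ber}\{\Phi(z_{0ij})\}$ and $Z_0\in\mathcal Z_n$, as in the proof of Lemma~\ref{lemma:psi}), the Hellinger-affinity identity $E_{y\sim p_{z_0}}\{(p_z(y)/p_{z_0}(y))^{1/2}\}=1-\tfrac12 d_H^2(\Phi(z),\Phi(z_0))$ together with independence and $1-x\le e^{-x}$ give, for any fixed $Z\in\mathcal Z_n$,
\[
E\Bigl[\prod_{i,j}\bigl(p_{z_{ij}}(y_{ij})/p_{z_{0ij}}(y_{ij})\bigr)^{1/2}\Bigr]=\prod_{i,j}\Bigl\{1-\tfrac12 d_H^2\bigl(\Phi(z_{ij}),\Phi(z_{0ij})\bigr)\Bigr\}\le e^{-S(Z)/2},\quad S(Z):=\sum_{i,j}d_H^2\bigl(\Phi(z_{ij}),\Phi(z_{0ij})\bigr).
\]
I would then take a $\delta_n$-net $\mathcal N$ of $\mathcal Z_n$ in the entrywise $\ell_\infty$ distance, with $\delta_n$ a sufficiently small negative power of $np_n$; covering the two factors of a rank-$(k+q)$ factorization separately gives $\log|\mathcal N|\lesssim (k+q)(n+p_n)\log(\gamma_n np_n/\delta_n)\lesssim (k+q)(n+p_n)\log\{(k\vee q)n\}$. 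For $Z\in\mathcal Z_n$ and its nearest net point $Z'$, a Lipschitz estimate---using the inverse-Mills-ratio bound $\sup_{|z|\le\gamma_n}\sup_y|\partial_z\log p_z(y)|\lesssim\gamma_n$ on the probit score, together with Lipschitzness of $d_H^2$---shows that the square-root likelihood ratios at $Z$ and $Z'$ differ multiplicatively by $e^{o(r_n)}$ and that $|S(Z)-S(Z')|=o(r_n)$. Applying Markov's inequality at each net point, combining with this discretization estimate and a union bound over $\mathcal N$, I would conclude that, whenever $t_n\gtrsim (k+q)(n+p_n)\log\{(k\vee q)n\}+r_n+\log n$, with probability at least $1-C/n$ every $Z\in\mathcal Z_n$ with $\prod_{i,j}\{p_{z_{ij}}(y_{ij})/p_{z_{0ij}}(y_{ij})\}^{1/2}\ge e^{-2r_n}$ satisfies $S(Z)\lesssim t_n$; since this bound holds for every admissible $(M_0,X)$ it holds unconditionally on $A_1\cap A_2$. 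By the basic inequality $\hat Z$ (or its nearest net point) qualifies, so $S(\hat Z)\lesssim t_n\asymp (n+p_n)\log\{(k\vee q)n\}$, and dividing by $np_n$,
\[
d_H^2\{\Phi(\hat Z),\Phi(Z_0)\}=\frac{1}{np_n}S(\hat Z)\ \lesssim\ \log\{(k\vee q)n\}\Bigl(\tfrac1{n}+\tfrac1{p_n}\Bigr)\ \le\ \log\{(k\vee q)n\}\Bigl(\tfrac1{n^{1/2}}+\tfrac1{p_n^{1/2}}\Bigr),
\]
which is the claim (in fact with room to spare, the argument delivering the faster rate $n^{-1}+p_n^{-1}$ up to the logarithmic factor).

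\emph{Main obstacle.} The delicate part is the uniform control of the log-likelihood-ratio process over the low-rank sieve: the metric-entropy estimate for entrywise-bounded rank-$(k+q)$ matrices and the check that $(k+q)(n+p_n)\log(\gamma_n np_n/\delta_n)$ is absorbed into the target rate (here one uses that $p_n$ grows at most polynomially, hence $\log p_n\lesssim\log\{(k\vee q)n\}$), and the accompanying discretization/Lipschitz step, which must track that on the sieve $\Phi(z)$ is only polynomially---not boundedly---away from $\{0,1\}$, so the probit score and the relevant Lipschitz constants are $O(\gamma_n)$ (or a polynomial) rather than $O(1)$; this is harmless but must be carried through the choice of $\delta_n$. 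The remaining ingredients---the basic inequality, the affinity identity, and Markov's inequality---are routine.
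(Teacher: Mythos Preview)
Your proposal is correct and follows a genuinely different route from the paper's proof. The paper, following \citetSupp{davenport_1_bit}, controls the KL divergence (and hence Hellinger via $d_H^2\le\mathcal D$) by writing $\mathcal L(\hat Z)-\mathcal L(Z_0)=-np\,\mathcal D\{\Phi(Z_0)\,\|\,\Phi(\hat Z)\}+\text{fluctuation}$ and bounding the centered empirical process $\sup_{Z\in\mathcal G}|\mathcal L(Z)-E\{\mathcal L(Z)\mid M_0\}|$ via symmetrization, a Rademacher contraction (using that $z\mapsto L_{\gamma_n}^{-1}\log\{\Phi(z)/\Phi(0)\}$ is a contraction with $L_{\gamma_n}\lesssim\gamma_n$), and the trace-duality bound $|\langle E,Z\rangle|\le\|E\|_2\|Z\|_*$; since $\|Z\|_*\lesssim\gamma_n\{np(k+q)\}^{1/2}$ on the sieve and $\|E\|_2\lesssim n^{1/2}+p^{1/2}$, this yields the stated rate $\log\{(k\vee q)n\}(n^{-1/2}+p_n^{-1/2})$. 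Your Wong--Shen/van~de~Geer style argument---Hellinger affinity plus Markov at net points, with metric entropy $\log|\mathcal N|\lesssim(k+q)(n+p_n)\log\{(k\vee q)n\}$ for the rank-$(k+q)$ sieve---is correct and in fact delivers the sharper rate $\log\{(k\vee q)n\}(n^{-1}+p_n^{-1})$, as you note. The paper's approach is shorter and avoids any explicit covering, leaning instead on off-the-shelf matrix concentration; yours requires the entropy/discretization machinery (and the care you flag about polynomial Lipschitz constants on the sieve) but gives a tighter bound and transfers more directly to other links whose score is only polynomially bounded on $[-\gamma_n,\gamma_n]$.
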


\begin{remark}
    The same bound can be derived for the Kullback–Leibler  divergence $\mathcal D\left\{\Phi \left(Z\right) \mid \mid \Phi \left(\hat Z\right)\right\} = \frac{1}{np}\sum_{i=1}^n \sum_{j=1}^p \left[\Phi(z_{ij}) \log\left\{\frac{\Phi(\hat z_{ij})}{\Phi(z_{ij})}\right\} + \left\{1-\Phi(z_{ij})\right\}\log\left\{\frac{1-\Phi(\hat z_{ij})}{1-\Phi(z_{ij})}\right\}\right]$ and the squared total variation distance    \\ $d_{TV}^2\left\{\Phi \left(\hat Z\right), \Phi \left(Z\right)\right\} = \frac{1}{np}\sum_{i=1}^n \sum_{j=1}^p\left|\Phi(\hat z_{ij}) - \Phi(z_{ij}) \right|^2$.
\end{remark}

It would be appealing to modify 
the result in Theorem \ref{thm:recovery_hellinger}
to bound the norm of $\hat Z - Z$, as in \eqref{eq:parameters_cjmap}.  However, this is not trivial due to the flatness of the probit likelihood in the tails.
 However, our preliminary numerical results show that our method performs extremely well in the probit case; formally justifying this performance theoretically including for broader classes of link functions is an interesting area for future research.

\subsection{Proof  of Theorem \ref{thm:recovery_hellinger}}
\begin{proof}[Proof of Theorem \ref{thm:recovery_hellinger}]
We follow the proof of Theorem 2 in \citet{davenport_1_bit} with three modifications: firstly, $M_0$, and, hence, $Z_0$ are not fixed but random, secondly, we consider a general design matrix $X$, and, thirdly, we consider the joint maximum a posteriori estimate under  truncated Gaussian priors instead of the joint maximum likelihood estimate. 
Consider the difference between the log-posterior computed at the joint maximum a posteriori estimate and true parameter respectively.
For a matrix $Z \in \mathbb R^{n \times p}$, where $[Z]_{ij} = z_{ij}$, with a slight abuse of notation, we redefine  $\mathcal L(Z) = \sum_{i=1}^n \sum_{j=1}^p y_{ij} \log\left\{\Phi(z_{ij})\right\} + (1-y_{ij})\log\left\{1-\Phi(z_{ij})\right\}$, which is the log-likelihood under the probit link function. Recall the decomposition \begin{equation*}
    \begin{aligned}
         \log p(\hat M, \hat \Lambda, \hat B \mid Y, X) &- \log p\left ( M_0, \Lambda_0, B_0 \mid Y, X \right) \\
         &= \mathcal L(\hat Z) - \mathcal L(Z_0) + \log p (\hat M) - \log p(M_0) + \log p(\hat \Lambda) - \log p(\Lambda_0)\\
         & \quad  \quad + \log p(\hat B) - \log p(B_0)\\
         &= \mathcal L(\hat Z) - \mathcal L(Z_0) + \frac{1}{2}\left( \left|\left| M_0 \right|\right|_F^2-  \left|\left| \hat M\right|\right|_F^2\right) + \frac{1}{2}\left\{ tr\left(\Lambda_0^\top \Sigma_{\Lambda}^{-1} \Lambda_0\right)- tr\left(\hat \Lambda^\top \Sigma_{\Lambda}^{-1} \hat \Lambda\right)\right\} \\
         & \quad \quad +\frac{1}{2}\left\{tr\left(B_0^\top \Sigma_{B}^{-1} B_0\right)- tr\left(\hat B^\top \Sigma_{B}^{-1} \hat B\right)\right\}.
    \end{aligned}
\end{equation*}
With the same steps of the Proof for Theorem \ref{thm:recovery_Z}, we obtain 
\begin{equation*}
    \begin{aligned}
      \mathcal L( Z_0) -   \mathcal L(\tilde Z) \lesssim n \log(kn) + (k+q)p
    \end{aligned}
\end{equation*}
with probability at least $1 - C/n$ for some absolute constant $C$. 
Define  $\mathcal {\bar L}( Z) = \mathcal {L}( Z) - \mathcal { L}( 0)$, and consider the following expectation
\begin{equation*}
    \begin{aligned}
       E\left(\mathcal L(\hat Z) - \mathcal L(Z_0) \mid M_0  \right) &=  E\left(\bar{\mathcal L}(\hat Z) - \bar{\mathcal L}(Z_0) \mid M_0  \right)   \\ &=
        \sum_{i=1}^n \sum_{j=1}^p E\left(y_{ij} \log\left\{\frac{\Phi(\hat z_{ij})}{\Phi(\hat z_{0ij})}\right\} - (1 - y_{ij}) \log\left\{\frac{1-\Phi(\hat z_{ij})}{1-\Phi(\hat z_{0ij})}\right\}\mid M_0  \right) \\
        &=\sum_{i=1}^n \sum_{j=1}^p  \left[ \Phi( z_{0ij}) \log\left\{\frac{\Phi(\hat z_{ij})}{\Phi(\hat z_{0ij})}\right\}  - \{1 - \Phi(\hat z_{0ij})\} \log\left\{\frac{1-\Phi( z_{ij})}{1-\Phi(\hat z_{0ij})}\right\}\right]\\
        &= - np \mathcal{D}\left\{\Phi\left(Z_0\right) \mid \mid \Phi\left(\hat Z\right)\right\},
    \end{aligned}
\end{equation*}
where $\mathcal{D} \left\{\Phi\left(Z_0\right) \mid \mid \Phi\left(\hat Z\right)\right\} = \frac{1}{np} \sum_{i,j}\Phi( z_{0ij}) \log\left\{\frac{\Phi(\hat z_{ij})}{\Phi(\hat z_{0ij})}\right\}  - \{1 - \Phi(\hat z_{0ij})\} \log\left\{\frac{1-\Phi( z_{ij})}{1-\Phi(\hat z_{0ij})}\right\}$ denotes the average KL divergence across rows and columns.
Next, consider the following decomposition
\begin{equation*}
    \begin{aligned}
        \mathcal L(\hat Z) - \mathcal L(Z_0) &=  E\left(\mathcal L(\hat Z) - \mathcal L(Z_0) \mid M_0  \right) +  \mathcal L(\hat Z) - E\left\{\mathcal L(\hat Z) \mid M_0  \right\} - \left[\mathcal L( Z_0) - E\left(\mathcal L( Z_0) \mid M_0  \right) \right]\\
        &\leq  - np \mathcal{D}\left\{\Phi\left(Z_0\right) \mid \mid \Phi\left(\hat Z\right)\right\} + 2 \sup_{Z \in \mathcal G}\left| \mathcal L( Z)- E\left(\mathcal L( Z) \mid M_0  \right) \right|.
    \end{aligned}
\end{equation*}
where 
    $\mathcal{G} =  \left\{Z :~ \text{rank}(Z) = k+q \quad ||Z||_{\infty} \leq \gamma_n \right \}$,
since $Z_0, \hat Z \in \mathcal G$.
Combining all the above, we have
\begin{equation*}
    \begin{aligned}
         \log p(\hat \mu, \hat M, \hat \Lambda\mid Y) &- \log p\left ( \mu_0, M_0, \Lambda_0\mid Y\right ) \\
         &\leq - np \mathcal{D}\left\{\Phi\left(Z_0\right) \mid \mid \Phi\left(\hat Z\right)\right\} + 2 \sup_{Z \in \mathcal G}\left| \mathcal L( Z)- E\left(\mathcal L( Z) \mid M_0  \right) \right| +C_2\left\{kn\log(kn)+ (k+q)p\right\} 
    \end{aligned}
\end{equation*}
for some absolute constant $C_2$, and, since $\log p(\hat \mu, \hat M, \hat \Lambda\mid Y) - \log p\left ( \mu_0, M_0, \Lambda_0\mid Y\right ) \geq 0$, we obtain
\begin{equation}\label{eq:D_cjmap_1}
    \begin{aligned}
        \mathcal{D}\left\{\Phi\left(Z_0\right) \mid \mid \Phi\left(\hat Z\right)\right\} \leq \frac{1}{np} 2 \sup_{Z \in \mathcal G}\left| \mathcal L( Z)- E\left(\mathcal L( Z) \mid M_0  \right) \right| +C_2\left\{kn\log(kn)+ (k+q)p\right\}.  
    \end{aligned}
\end{equation}
To bound the first term on the right hand side of \eqref{eq:D_cjmap_1}, we rely on Lemma \ref{lemma:sup_cjmap}. In particular, we first define $A_{4} = \left\{\sup_{Z \in \mathcal G}\left| \mathcal L( Z)- E\left\{\mathcal L( Z)  \right\} \right| \geq C_0   \log\{(k \vee q)n\}\{np(k+q)\}^{1/2}\left( n^{1/2} + p^{1/2}\right)\right\}$. By Lemma \ref{lemma:sup_cjmap}, conditionally on the realization of $M_0$, we have $$\text{pr} \left[\sup_{Z \in \mathcal G}\left| \mathcal L( Z)- E\left(\mathcal L( Z) \mid M_0  \right) \right| \geq C_0  \log\{(k \vee q)n\}\{np(k+q)\}^{1/2}\left( n^{1/2} + p^{1/2}\right) \mid M_0 \right] \leq \frac{1}{n+p}.$$ Importantly, $C_0$ is an absolute constant not depending on the realization of $M_0$. This implies that $\text{pr}(A_4) \leq 1- 1/(n+p)$.
On the event $A_1 \cap A_2 \cap A_4$, we have  
\begin{equation*}\label{eq:D_cjmap_1}
    \begin{aligned}
        \mathcal{D}\left\{\Phi\left(Z_0\right) \mid  \Phi\left(\hat Z\right)\right\} \lesssim\log\{(k \vee q)n\}(k+q)^{1/2}\left( \frac{1}{n^{1/2}}+ \frac{1}{p^{1/2}}\right) +  \frac{4}{p}k\log(kn)+ \frac{2}{n} c_\lambda^2 k. 
    \end{aligned}
\end{equation*}
To conclude note that $\text{pr}[A_1 \cap A_2 \cap A_4] \geq 1 -\frac{2}{n} - \frac{C_1}{n+p} - \frac{1}{n+p}$ and recall $d_H^2(p , q) \leq \mathcal D(p \mid \mid q)$ and $d_{TV}^2(p , q) \leq \mathcal D(p \mid \mid q)$, where $d_H^2\{\Phi \left(\hat Z\right), \Phi \left(Z_0\right)\} = \frac{1}{np}\sum_{i=1}^n \sum_{j=1}^{p_n} d_H^2\{\Phi \left(\hat z_{ij}\right), \Phi \left(z_{0ij}\right)\}$, with  $d_H^2(f, g) = \left(\sqrt{f}-\sqrt{g}\right)^2 + \left(\sqrt{1-f}-\sqrt{1-g}\right)^2$, is the average Hellinger distance squared, and $d_{TV}^2\{\Phi \left(\hat Z\right), \Phi \left(Z\right)\} = \frac{1}\sum_{i=1}^n \sum_{j=1}^p\left|\Phi(\hat z_{ij}) - \Phi(z_{ij}) \right|^2$ is the average squared total variation distance.

\end{proof}

\begin{lemma}\label{lemma:sup_cjmap}
Consider the following set
\begin{equation*}
    \mathcal{G} =  \left\{Z \in \mathbb R^{n \times p}:~ \text{rank}(Z) = k+q, \quad ||Z||_{\infty} \leq \gamma_n  \right \}.
\end{equation*}
Then,
\begin{equation*}
    \text{pr} \left[\sup_{Z \in \mathcal G}\left| \mathcal L( Z)- E\left(\mathcal L( Z) \mid M_0  \right) \right| \geq C_0  \log\{(k \vee q)n\}\{np(k+q)\}^{1/2}\left( n^{1/2} + p^{1/2}\right) \mid M_0 \right] \leq \frac{1}{n+p},
\end{equation*}
where $C_0$ is an absolute constant.

\end{lemma}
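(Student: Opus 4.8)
The plan is to treat $\sup_{Z\in\mathcal G}|\mathcal L(Z)-E(\mathcal L(Z)\mid M_0)|$ as the supremum of a conditionally centered empirical process indexed by the low-rank, bounded set $\mathcal G$, and to bound it by symmetrization, Talagrand's contraction principle, and nuclear-norm/operator-norm duality, then upgrade the in-expectation bound to a high-probability one via bounded differences. Everything is carried out conditionally on $M_0$, using that $y_{ij}\mid M_0\sim\mathrm{Bernoulli}(\Phi(z_{0ij}))$ are independent; note the radius $\gamma_n$ and the index set $\mathcal G$ are deterministic, which will be important for keeping the constant $C_0$ free of $M_0$. This mirrors the strategy of \citetSupp{davenport_1_bit}, adapted to the probit log-likelihood and to random, non-identically-distributed entries.

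First I would put the process in a convenient form. With $g(z)=\log\{\Phi(z)/(1-\Phi(z))\}$ (odd, so $g(0)=0$), for each fixed $Z$
\[
\mathcal L(Z)-E(\mathcal L(Z)\mid M_0)=\sum_{i=1}^n\sum_{j=1}^p \xi_{ij}\,g(z_{ij}),\qquad \xi_{ij}:=y_{ij}-\Phi(z_{0ij}),
\]
where the $\xi_{ij}$ are independent, mean zero given $M_0$, and bounded by $1$. By the symmetrization inequality, $E\sup_{Z\in\mathcal G}|\sum_{ij}\xi_{ij}g(z_{ij})|\le 2\,E\sup_{Z\in\mathcal G}|\sum_{ij}\epsilon_{ij}\xi_{ij}g(z_{ij})|$ with i.i.d.\ Rademacher $\epsilon_{ij}$. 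Conditioning on $\{\xi_{ij}\}$ and applying the Ledoux--Talagrand contraction principle for suprema of absolute values to the maps $z\mapsto\xi_{ij}g(z)$, which on $[-\gamma_n,\gamma_n]$ are $L_{\gamma_n}$-Lipschitz and vanish at $0$ (using $|\xi_{ij}|\le1$ and $L_{\gamma_n}:=\sup_{|z|\le\gamma_n}|g'(z)|$), this is at most $4\,L_{\gamma_n}\,E_\epsilon\sup_{Z\in\mathcal G}|\langle\mathcal E,Z\rangle|$, $\mathcal E=[\epsilon_{ij}]$. The crucial analytic input is $L_{\gamma_n}\lesssim 1+\gamma_n$: since $g'(z)=\phi(z)/\{\Phi(z)(1-\Phi(z))\}$, the Mills-ratio bounds $\tfrac{z}{1+z^2}\phi(z)\le 1-\Phi(z)\le\tfrac{\phi(z)}{z}$ for $z>0$ give $g'(z)\le 2(1+z^2)/z\lesssim 1+|z|$ for $|z|\ge1$ and a constant bound on $[-1,1]$. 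This \emph{linear} growth of $L_{\gamma_n}$ in $\gamma_n$, rather than exponential, is precisely what makes the probit case tractable.

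Next I would bound the Rademacher term by duality: any $Z\in\mathcal G$ has $\|Z\|_*\le\sqrt{\mathrm{rank}(Z)}\,\|Z\|_F\le\sqrt{k+q}\,\sqrt{np}\,\gamma_n$, so $|\langle\mathcal E,Z\rangle|\le\|\mathcal E\|_2\,\|Z\|_*\le\|\mathcal E\|_2\sqrt{(k+q)np}\,\gamma_n$, and using the standard estimate $E\|\mathcal E\|_2\lesssim\sqrt n+\sqrt p$ for a Rademacher matrix (as in the $\epsilon$-net arguments of \citetSupp{Vershynin_2012}, cf.\ Lemma \ref{lemma:sv_gaussian}) yields
\[
E\sup_{Z\in\mathcal G}\big|\mathcal L(Z)-E(\mathcal L(Z)\mid M_0)\big|\;\lesssim\;L_{\gamma_n}\,\gamma_n\,\sqrt{(k+q)np}\,(\sqrt n+\sqrt p)\;\lesssim\;\gamma_n^2\,\sqrt{(k+q)np}\,(\sqrt n+\sqrt p),
\]
and since $\gamma_n^2\asymp\log\{(k\vee q)n\}$ (as $k,q$ are fixed constants), the right-hand side is $\lesssim\log\{(k\vee q)n\}\{np(k+q)\}^{1/2}(\sqrt n+\sqrt p)$. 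To promote this to a high-probability bound, I would invoke McDiarmid: flipping a single $y_{ij}$ changes $\mathcal L(Z;\cdot)-E(\mathcal L(Z)\mid M_0)$ by $\pm g(z_{ij})$ with $|g(z_{ij})|\le L_{\gamma_n}\gamma_n\lesssim\gamma_n^2$ uniformly in $Z$, hence changes the supremum by at most $c\gamma_n^2$; the bounded-differences inequality over the $np$ coordinates then gives, for $t=c'\gamma_n^2\sqrt{np\log(n+p)}$, that the supremum exceeds its mean by $t$ with probability at most $1/(n+p)$. Because $\sqrt{\log(n+p)}\le\sqrt{k+q}\,(\sqrt n+\sqrt p)$ for $n,p$ large, this fluctuation is dominated by the in-expectation bound, giving the claim with $C_0$ absolute (symmetrization, contraction, the Rademacher operator-norm bound, and McDiarmid are all distribution-free, so $C_0$ does not depend on $M_0$).

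The main obstacle is the pair of sharp estimates feeding the exponent: one needs $L_{\gamma_n}\lesssim 1+\gamma_n$ (resting on the two-sided Mills-ratio bounds for $1-\Phi$) so that the product of the Lipschitz constant of $g$ and the $\ell_\infty$-radius of $\mathcal G$ is only $O(\gamma_n^2)=O(\log\{(k\vee q)n\})$; a naive bound would produce an extra polynomial or even exponential factor in $\gamma_n$ and break the statement. The remaining work --- verifying the contraction step with absolute values, the operator-norm bound for a Rademacher matrix, and the bookkeeping that every constant is independent of $M_0$ --- is standard but must be done with care.
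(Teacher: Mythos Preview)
Your argument is correct and hits the same three pillars as the paper --- symmetrization, Ledoux--Talagrand contraction with the probit Lipschitz constant $L_{\gamma_n}\lesssim 1+\gamma_n$, and the nuclear/operator-norm duality $|\langle \mathcal E,Z\rangle|\le\|\mathcal E\|_2\|Z\|_*$ --- but you package it differently in two places. First, you exploit the identity $\mathcal L(Z)-E(\mathcal L(Z)\mid M_0)=\sum_{ij}\xi_{ij}g(z_{ij})$ with $g(z)=\log\{\Phi(z)/(1-\Phi(z))\}$ odd, which lets you run a single contraction step; the paper instead subtracts $\mathcal L(0)$, splits into the $y_{ij}=1$ and $y_{ij}=0$ pieces, and contracts each function $\log\{\Phi(z)/\Phi(0)\}$ and $\log\{(1-\Phi(z))/(1-\Phi(0))\}$ separately. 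Your route is slightly cleaner and uses the same Lipschitz constant $L_{\gamma_n}=\sup_{|z|\le\gamma_n}\phi(z)/\{\Phi(z)(1-\Phi(z))\}$. Second, for the high-probability upgrade you use McDiarmid (bounded differences $\lesssim\gamma_n^2$, fluctuation $\lesssim\gamma_n^2\sqrt{np\log(n+p)}$), whereas the paper bounds the $h$-th moment of the supremum and applies Markov with $h=\log(n+p)$, invoking Seginer's bound $E(\|\mathcal E\|_2^h)\lesssim n^{h/2}+p^{h/2}$. Your approach is more elementary in that it only needs $E\|\mathcal E\|_2\lesssim\sqrt n+\sqrt p$ rather than all moments; the paper's moment method stays closer to \citetSupp{davenport_1_bit}. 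Either way the constant $C_0$ is absolute, as all inequalities used are distribution-free.
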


\begin{proof}[Proof of Lemma \ref{lemma:sup_cjmap}]
The proof is similar to the one of Lemma 1 in \citet{davenport_1_bit}. We start by a straightforward application of the Markov inequality:
\begin{equation*}
\begin{aligned}
     &\text{pr} \left[\sup_{Z \in \mathcal G}\left| \mathcal L( Z)- E\left(\mathcal L( Z) \mid M_0  \right) \right| \geq C_0 \log\{(k \vee q)n\}\{np(k+q)\}^{1/2}\left( n^{1/2} + p^{1/2}\right) \mid M_0 \right] \\
     &= \text{pr} \left[\sup_{Z \in \mathcal G}\left| \mathcal L( Z)- E\left(\mathcal L( Z) \mid M_0  \right) \right|^h\geq \left[C_0  \log\{(k \vee q)n\}\{np(k+q)\}^{1/2}\left( n^{1/2} + p^{1/2}\right)\right]^h  \mid M_0\right]\\
     & \leq \frac{E \left(\sup_{Z \in \mathcal G}\left| \mathcal L( Z)- E\left(\mathcal L( Z) \mid M_0  \right) \right|^h \mid M_0\right)}{\left[C_0 \log\{(k \vee q)n\}\{np(k+q)\}^{1/2}\left( n^{1/2} + p^{1/2}\right)\right]^h }
\end{aligned}
\end{equation*}
Note that $ E \left(\sup_{Z \in \mathcal G}\left| \mathcal L( Z)- E\left(\mathcal L( Z) \mid M_0  \right) \right|^h \mid M_0 \right) =   E \left(\sup_{Z \in \mathcal G}\left| \mathcal{\bar L}( Z)- E\left(\mathcal {\bar L}( Z) \mid M_0  \right) \right|^h \mid M_0 \right)$, and, by a symmetrization argument, 
\begin{equation*}
  \begin{aligned}
        E &\left(\sup_{Z \in \mathcal G}\left| \mathcal{\bar L}( Z)- E\left(\mathcal {\bar L}( Z) \mid M_0  \right) \right|^h \mid M_0 \right)  \\
      & \leq 2^h   E \left(\sup_{Z \in \mathcal G}\left| \sum_{ij} \varepsilon_{ij}\left[1_{\{y_{ij} = 1\}}
      \log\left\{\frac{\Phi(Z_{ij})}{\Phi(0)}\right\}  - 1_{\{y_{ij} = 0\}}\log\left\{\frac{1-\Phi( Z_{ij})}{1-\Phi(0)}\right\} \right] \right|^h \mid M_0\right) 
    \end{aligned}  
\end{equation*}
where the $\varepsilon_{ij}$'s are independent Rademacher random variables and now the expectation is taken over $Y$ and also the $\varepsilon_{ij}$'s. For $|z| \in \gamma_n$, the functions $\frac{1}{L_{\gamma_n}}\log\left\{\frac{\Phi(z)}{\Phi(0)}\right\}$ and $\frac{1}{L_{\gamma_n}}\log\left\{\frac{1-\Phi(z)}{1-\Phi(0)}\right\}$ are contractions vanishing at $0$, where $L_{\gamma} =  \sup_{|z| \leq \gamma} \frac{|\Phi'(z)|}{\Phi(z) \{1- \Phi(z)\}} \leq 8(\gamma +1)$. Hence,
\begin{equation*}
\begin{aligned}
      E \left(\sup_{Z \in \mathcal G}\left| \mathcal L( Z)- E\left(\mathcal L( Z) \mid M_0  \right) \right|^h \mid M_0 \right) & \leq2^h \left(2L_{\gamma_n}\right)^h   E \left( \sup_{Z \in \mathcal G}\left|\sum_{i=1}^{n}\sum_{j=1}^{p} \varepsilon_{ij}\left(1_{\{y_{ij} = 1\}}  Z_{ij} - 1_{\{y_{ij} = 0\}} Z_{ij}\right)\right|^h\mid M_0 \right)\\
     & \leq2^h \left(2L_{\gamma_n}\right)^h   E\left( \sup_{Z \in \mathcal G}  \left| \langle E, Z \rangle\right|^h \mid M_0 \right) \\
     &=  \left(4 L_{\gamma_n}\right)^h   E\left( \sup_{Z \in \mathcal G}  \left| \langle E, Z \rangle\right|^h \right),
\end{aligned}
    \end{equation*}
    where $[E]_{ij} = \varepsilon_{ij}$.
Moreover, since $ \left| \langle A,  B \rangle\right| \leq ||A|| ||B||_{*}$ , 
    \begin{equation*}
    \begin{aligned}
      E\left( \sup_{Z \in \mathcal G}  \left| \langle E, Z \rangle\right|^h \right) &\leq  E\left( \sup_{Z \in \mathcal G}  ||E||^h ||Z||_{*}^h\right)\\
      &\leq \sup_{Z \in \mathcal G} ||Z||_* E\left(  ||E||^h \right)\\
      & \lesssim\left[ np(k+q) \log\{(k \vee q)n\}\right]^{h/2}\left(n^{\frac{h}{2}} + p^{\frac{h}{2}}\right)
    \end{aligned}
\end{equation*}
 where we used, \begin{equation*}
    E \left(||E||^h \right) \leq C\left(n^{\frac{h}{2}} + p^{\frac{h}{2}}\right)
\end{equation*}
by Theorem 1 of \citet{seginer_20},
and
\begin{equation*}
    ||Z||_{*}  \leq (np)^{1/2} \text{rank}^{1/2}(Z) ||Z||_{\infty} \leq \{np(k+q)\}^{1/2} \gamma_n \lesssim  [np (k+q) \log\{(k \vee q)n\}]^{1/2}
\end{equation*}
for sufficiently large $n$.
Thus, we get
\begin{equation*}
    \begin{aligned}
         E \left(\sup_{Z \in \mathcal G}\left| \mathcal L( Z)- E\left(\mathcal L( Z) \mid M_0  \right) \right|^h \mid M_0 \right)^{1/h} & \lesssim \log\{(k \vee q)n\}\{np(k+q)\}^{1/2}\left( n^{1/2} + p^{1/2}\right).
    \end{aligned}
\end{equation*}
Taking $C$ sufficiently large, and $h = \log(n+p)$ delivers
\begin{equation*}
    \begin{aligned}
        \frac{E \left(\sup_{Z \in \mathcal G}\left| \mathcal L( Z)- E\left(\mathcal L( Z) \mid M_0  \right) \right|^h \mid M_0\right)}{\left[C  \log\{(k \vee q)n\}\{np(k+q)\}^{1/2}\left( n^{1/2} + p^{1/2}\right)\right]^h } &\leq   \tilde C^{\log(n+p)} \leq \frac{1}{n+p}.
    \end{aligned}
\end{equation*}
\end{proof}

\section{Details on the Optimization Procedure}\label{subsec:optimization}
\subsection{Singular Value Decomposition based Initialization}\label{subsec:svd_init}
We initialize the optimization routine with the values obtained adapting the estimation procedure in \citep{chen_jmle} to the general design matrix $X$ case:
\begin{enumerate}
    \item We compute the singular value decomposition of $Y = U_{\tilde k} D_{\tilde k} V_{\tilde k}^\top + U_{-\tilde k} D_{-\tilde k} V_{-\tilde k}^\top$, where $\tilde k = k+q$, and let $\hat Y = U_{\tilde k} D_{\tilde k} V_{\tilde k}^\top $ be the $\tilde k$-rank approximation to $Y$.
    \item We compute $\tilde Y = [\tilde y_{ij}]_{ij}$ by applying the following thresholding operator entry-wise to $\hat Y = [\hat y_{ij}]_{ij}$, 
    \begin{equation*}
        \tilde y_{ij} = \begin{cases}
            \epsilon_{n,p} \quad &\text{if }  \hat y_{ij}< \epsilon_{n,p},\\
             \hat y_{ij} \quad &\text{if }  \hat y_{ij} \in [\epsilon_{n,p}, 1-\epsilon_{n,p}],\\
             1-\epsilon_{n,p} \quad &\text{otherwise.}  
        \end{cases}
    \end{equation*}
    \item We compute the matrix $\hat Z$ by applying the function $h^{-1}(\cdot)$ entry-wise to $\tilde Y$,
    \item We initialize $B$ via $\hat B =Z^\top X( X^\top  X)^{-1}$,
    \item We define $\hat{Z}^c$ as $\hat{Z}^c = \hat{Z} - X \hat B^\top$, 
     compute the singular value decomposition of $\hat{Z}^c = L_k S_k R_k^\top + L_{-k} S_{-k} R_{-k}^\top$, and initialize the factors via $\hat M = n^{1/2} L_k$ and loadings via $\hat \Lambda = \frac{1}{n^{1/2}} R_k S_k$.
\end{enumerate}

This method was initially proposed in \citet{chen_jmle} for the case where $X=1_n$, and \citet{ifa_svd} showed this produces consistent estimates for the loadings. Here, we report a heuristic argument:
\begin{enumerate}
    \item $\hat Y \approx  E(Y) = P = [pr( y_{ij}=1  \mid \eta_i, \lambda_{0j}, x_i, \beta_j)]_{ij} = [h \left(x_i^\top \beta_j +\eta_i^\top \lambda_{0j} \right)]_{ij}$  in a mean-squared sense as $n,p\to \infty$, hence $\hat y_{ij} \approx P_{ij} =h \left( x_i^\top \beta_j + \eta_i^\top \lambda_{0j} \right)$ on average across rows and columns.
    \item We ensure $\tilde y_{ij} \in [\epsilon_{n,p}, 1-\epsilon_{n,p}]$, so that we can treat $\tilde y_{ij}$ as a valid estimate of a probability and \say{invert} it in Step 3.
    \item From Step 1 and 2, we have $\tilde y_{ij} \approx h \left(x_i^\top \beta_j + \eta_i^\top \lambda_{0j} \right)$, hence, $ \hat z_{ij} = h^{-1} (\tilde Y_{ij})\approx  x_i^\top\beta_j + \eta_i^\top \lambda_{0j} $, thus, $\hat X = h^{-1} (\tilde Y) \approx X B_0^\top + M_{0} \Lambda_{0j}^\top$.
    \item Since $\hat X \approx  X B_0^\top+ M_0 \Lambda_0^\top$ and $(I-P_{\tilde X})\hat X =  L_k S_k R_k^\top + L_{-k} S_{-k} R_{-k}^\top$, then $\sqrt n L_k \approx M_0$ and $\frac{1}{n^{1/2}} R_k S_k \approx \Lambda_0$ up to orthogonal rotations,
    \item Similarly, since $\hat Z \approx XB_0^\top +  M_0\Lambda_0^\top$ and $E (X^\top M_0) = 0$, $\hat B^\top \approx (X^\top X)^{-1} X^\top (X B_0^\top + M_0\Lambda_0^\top) \approx B_0$.
\end{enumerate}
This choice for the initialization of the optimization routine guarantees that the initial values for $\left( M, \Lambda, B\right)$ are close to a local mode speeding up the convergence. 
When $n$ and $p$ are very large, we optionally replace the singular value decomposition with a randomized version \citep{rsvd1, rsvd2}. This considerably speeds-up  initialization with negligible impact on the final solution of the algorithm.

\subsection{Projected Newton-Raphson Ascent}\label{subsec:newton}
Each subproblem of \eqref{eq:optimization_params} and \eqref{eq:optimization_factors} is solved via projected Newton ascent. More specifically, for \eqref{eq:optimization_params}, we maximize the log-posterior for $\theta_j = (\beta_j, \lambda_j)$ for $j=1, \dots, p$, given the previous estimate for $M$, $\hat M$. This is equivalent to finding
maximum a posterior estimate 
for the regression coefficients of $p$ logistic regressions using the concatenation of $X$ and $\hat M$ and the $Y^{(j)}$'s as the outcome, where $Y^{(j)}$ denotes the $j$-th column of $Y$. 
This is solved via projected Newton ascent, that is, given the current value of the parameter $\theta_j^{(t)}$
we update via
\begin{equation}\label{eq:theta_j_t_plus_1}
    \theta_j^{(t+1)*} \gets \theta_{j}^{(t)} - \nu \nabla^2_{\theta_j \theta_j^\top} \log p(\hat M, \hat \Lambda, \hat B \mid Y) \mid_{\theta_j = \theta_j^{(t)}} ^{-1} \nabla_{\theta_j} \log p(\hat M, \hat \Lambda, \hat B \mid Y) \mid_{\theta_j = \theta_j^{(t)}},
\end{equation}
and
\begin{equation}
    \theta_j^{(t+1)} \gets T_1(\theta_j^{(t+1)*} ),
\end{equation}
where $T_1(x)$ projects $x$ to the constraint set for the $\theta_j$'s defined by the problem. We set the step-size to the default value of $\nu=0.3$, which worked well in both simulated and real data examples. 

Similarly, for \eqref{eq:optimization_factors}, we maximize the log-posterior over $\eta_i$, given the estimates for $\Lambda$ and $B$, $\hat \Lambda$, $\hat B$ for $i=1, \dots, n$. Similarly as above, this operation can be parallelized across rows and at each iteration let 
\begin{equation}\label{eq:eta_i_t_plus_1}
    \eta_i^{(t+1)*} \gets \eta_i^{(t)} - \nu \nabla^2_{\eta_i \eta_i^\top} \log p(\hat M, \hat \Lambda, \hat B \mid Y) \mid_{\eta_i = \eta_i^{(t)}} ^{-1} \nabla_{\eta_i} \log p(\hat M, \hat \Lambda, \hat B \mid Y) \mid_{\eta_i = \eta_i^{(t)}},
\end{equation}
and
\begin{equation}
    \eta_i^{(t+1)} \gets T_2(\eta_i^{(t+1)*} )
\end{equation}
where $T_2(x)$ projects $x$ to the constraint set for the $\eta_i$'s defined by the problem. We set the step-size to the default value of $\nu=1$, which worked well in both simulated and real data examples. 

We stop the algorithm when the Euclidean norm of the update is smaller than some small threshold, which we set to the default value of $0.001$. To optimize \eqref{eq:log_joint_posterior}, we alternate between \eqref{eq:optimization_params} and  \eqref{eq:optimization_factors} until the relative increase in the log-posterior is smaller than a small threshold, which we set by default to $0.001$.

Our current implementation uses \texttt{for loops} to iterate steps in \eqref{eq:theta_j_t_plus_1} and \ref{eq:eta_i_t_plus_1}, but these steps could be parallelized across columns ($j=1, \dots, p$) and rows ($i=1, \dots, n$) of $Y$ to produce substantial gains in computational speed. Not all of the competitors are similarly parallelizable.

\subsection{Post-Processing Procedure}\label{subsec:postprocessing}
We borrow the post-processing procedure from \citet{chen_jmle} and adapt it to the case of a general design matrix $X$. In particular, we transform the $\big(\hat M, \hat \Lambda, \hat B\big)$ solution from \eqref{eq:cjmap} to $\left(\tilde M, \tilde \Lambda, \tilde B \right)$, through the following operations.
\begin{enumerate}
    \item Compute $\hat M^c = \hat M - P_X \hat M$, where $P_{A} = A(A^\top A)^{-1}A^\top$, and apply the singular value decomposition to $\hat M^c = U D V^\top$,
    \item Set $\tilde M = n^{1/2} U$ and $\tilde \Lambda = \frac{1}{n^{1/2}}\hat \Lambda V $,
    \item Set $\tilde B = \hat B + \hat \Lambda \hat M^\top X ( X^\top  X)^{-1}$.
\end{enumerate}
It is easy to verify that $\left( \tilde M, \tilde \Lambda, \tilde B\right)$ satisfies the following properties
\begin{enumerate}
    \item $\hat M\hat \Lambda^\top + X \hat B^\top = \tilde M\tilde \Lambda^\top + X\tilde B^\top$,
    \item $\tilde M^\top \tilde M = nI_k$,
    \item $\tilde M^\top X = 0$.
\end{enumerate}

\section{Hyperparameter Selection}\label{subsec:hyperparams}
\subsection{Variance Parameters}
We highlight a data-driven strategy to select the hyperparameters $\tau_B =(\tau_{\beta_1}, \dots, \tau_{\beta_p})$ and $\tau_\Lambda =(\tau_{\lambda_1}, \dots, \tau_{\lambda_p})$. 
In particular, given the initial values of $(\hat \Lambda, \hat B)$ obtained with the procedure described in Section \ref{subsec:svd_init}, we set $\tau_{\lambda_j}$ to $\hat \tau_{\lambda_j} = \mathcal{T}\left( k^{-1/2}\left|\left|\hat \lambda_j\right|\right|\right)$ and $\tau_{\beta_j}$ to $\hat \tau_{\beta_j} = \mathcal{T}\left(  k^{-1/2}\left|\left|\hat \beta_j\right|\right|\right)$, where $\mathcal{T}$ is a hard-truncation operator such that $\mathcal{T}(x) = x 1_{\{x \in (l,u) \}} + l1_{\{x \leq l \}} + u 1_{\{x \geq u \}}$ and $l = 0.5, u=20$. 
This strategy is appealing since it does not require any manual tuning and showed good performance across a wide range of examples. 
Clearly, other choices including posterior predictive checks or cross-validation maximizing the likelihood on a hold out set are equally viable but can be computationally more expensive. .
\subsection{Calibration of $\rho$}\label{subsec:rho}
We calibrate the variance inflation factor $\rho$ by choosing $\rho = \max_{1 \leq j,j' \leq p} b_{jj'}$ with 
\begin{align}\label{eq:b_jj}
\begin{split}
 b_{jj'} = \begin{cases}
        \big\{1 + \frac{||\tilde \lambda_j||_2^2||\tilde \lambda_{j'}||_2^2 + \left(\tilde \lambda_j^\top \tilde \lambda_{j'}\right)^2}{\tilde \sigma_{j'}^2||\tilde \lambda_j||_2^2 + \tilde \sigma_j^2||\tilde \lambda_{j'}||_2^2} \big\}^{1/2} \quad & \text{if } j \neq j'\\
         \big(1 + \frac{||\tilde\lambda_j||_2^2}{2\tilde \sigma_j^2} \big)^{1/2} \quad & \text{otherwise}
         \end{cases}
\end{split}
\end{align}
where $\tilde \sigma_j^2 = 1.702^2 + \frac{n}{\sum_{i=1}^nh \left(x_i^\top \tilde\beta_j + \tilde\eta_i^\top \tilde \lambda_j \right)\left\{1-h \left(x_i^\top \tilde\beta_j + \tilde \eta_i^\top \tilde\lambda_j \right)\right\}}$.
This choice is inspired by 
\citet{fable}. Here, we provide a heuristic justification. 
Considering $Y$ as the dichotomization of a latent continuous matrix $Y^*$, we have
    $y_{ij} = 1_{\{y_{ij}^* >0\}}$, $i=1, \dots, n, j=1, \dots, p$,
where the matrix $Y^* = [y_{ij}^*]_{ij}$ is 
\begin{equation*}
    Y^* = X B^\top + M\Lambda^\top + E, \quad E = [\epsilon_{ij}]_{ij}, \quad \epsilon_{ij} \sim f,  
\end{equation*}
and $f$ is a logistic density, that is $f(x) = h'(x)$ with $h(\cdot)$ being the logistic cumulative density function. Since $\sup_{x \in \mathbb R}\left|h(x) - \Phi(x/1.702)\right| \leq 0<0.0095$ \citep{logistic_normal_approx_09}, we approximate the $\epsilon_{ij}$'s via a normally distributed random variable as  $\epsilon_{ij} \approx 1.702 \varepsilon_{ij}$ where $\varepsilon_{ij}\sim  N(0,1)$ independently. Defining $\tilde E = \tilde Z  - Y^*$, we have
\begin{equation*}
    \tilde Z \approx X B^\top + M\Lambda^\top  + 1.702 E + \tilde E, \quad E = [\varepsilon_{ij}]_{ij},\quad \varepsilon_{ij} \sim N(0,1).
\end{equation*}
\citet{fable} consider the case of Gaussian distributed data without covariates and estimate the latent factors as the leading left singular vectors of the data scaled by $n^{1/2}$. The authors showed that defining the coefficients $\{b_{jj'}\}$ as \eqref{eq:b_jj}, where $\tilde \lambda_j$ is the posterior mean of $\lambda_j$, $\tilde \sigma_j$ estimates the standard deviation of the columns of the residual matrix, and choosing $\rho = \max_{j,j'} b_{jj'}$ guarantees asymptotic correct frequentist coverage. 
Thus, considering the analogy developed above, since $\tilde M$ corresponds to the left singular values of $\tilde Z$, after regressing out the covariate effects, scaled by $n^{1/2}$ (up to rotation), we apply the coverage correction strategy from \citet{fable}, treating the $\tilde \lambda_j$'s as if they were the regression coefficient from regressing $\tilde M$ on $\tilde Z$.  
As a final step, we approximate the elements $\tilde E$ as independent normal random variables. Consider the negative Hessian of the log-likelihood for the $i$-th row and $j$-th column with respect to $\theta_j$
\begin{equation*}
    -\nabla^2_{\theta_j} \log p(y_{ij} \mid -) =  h \left(\tilde z_{ij}\right)\left\{1-h\left(\tilde z_{ij}\right)\right\} \tilde x_i \tilde x_i^\top,
\end{equation*}
where $\tilde x_i = (x_i^\top ~ \tilde \eta_i^\top)^\top$ and $\tilde z_{ij} = x_i^\top \tilde \beta_j + \tilde \eta_i^\top \tilde \lambda_j$. Thus, we can interpret $ h \left(\tilde z_{ij}\right)\left\{1-h \left(\tilde z_{ij}\right)\right\}$ as a proxy for the precision for the estimator from the $i$-th observation for the $j$-th outcome. Hence, we estimate the variance of the $j$-th column of $\tilde E$ as the inverse of the average of the precision proxies, that is $\frac{n}{\sum_{i=1}^nh \left(\tilde z_{ij}\right)\left\{1-h \left(\tilde z_{ij}\right)\right\}}$,  and set
$\tilde \sigma_j^2 = 1.702^2 + \frac{n}{\sum_{i=1}^nh \left(x_i^\top \tilde\beta_j + \tilde\eta_i^\top \tilde \lambda_j \right)\left\{1-h \left(x_i^\top \tilde\beta_j + \tilde \eta_i^\top \tilde\lambda_j \right)\right\}}$. 

Our derivation for $\rho$ is based on the representation of binary variables as truncated latent continuous variables with logistic density. Similarly, an analogous approach could be derived for Bernoulli responses with the probit link, whereas extensions to other GLLVMs are less straightforward. 

\section{Additional Experiments}\label{sec:additional_experiments}
\subsection{Lower Dimensional Scenarios}

\begin{table}[H]
\centering
{	\begin{tabular}{crrrcrrr}
		&  \multicolumn{7}{c}{$p=50$}  \\
& \multicolumn{3}{c}{$n=100$}& &\multicolumn{3}{c}{$n=500$} \\
		Method & $\Lambda \Lambda^\top$ & $B$ & time (s) & & $\Lambda \Lambda^\top$ & $B$ & time (s)  \\
		\texttt{GMF - Newton} &$>100$ & $39.62^{2.29}$&  $0.31^{0.09}$ & &$61.16^{8.71}$& $20.40^{8.33}$& $0.79^{0.01}$\\ 
  		\texttt{GMF - Airwls} &$>100$  & $>100$ &$6.21^{1.72}$ & &$>100$ &$>100$& $13.69^{2.79}$\\
    \texttt{GLLVM - LA}  &$>100$  & $44.48^{2.02}$ & $13.47^{1.12}$ & & $65.38^{0.29}$ & $40.86^{0.17}$& $74.14^{2.49}$ \\
        \texttt{GLLVM - EVA}  &$>100$  & $54.38^{4.55}$ & $4.97^{0.21}$ & & $27.07^{0.54}$& $14.76^{0.23}$& $61.05^{1.11}$\\
		\texttt{FLAIR} & $70.92^{1.70}$& $31.74^{0.49}$& $0.08^{0.01}$ & & $48.69^{1.38}$ & $16.44^{0.27}$ & $0.36^{0.01}$ \\
&  \multicolumn{7}{c}{$p=100$}  \\
& \multicolumn{3}{c}{$n=100$}& &\multicolumn{3}{c}{$n=500$} \\
 Method & $\Lambda \Lambda^\top$ & $B$ & time (s) & & $\Lambda \Lambda^\top$ & $B$ & time (s) \\
		\texttt{GMF - Newton} &$75.59^{7.30}$ & $43.20^{1.38}$ & $0.38^{0.07}$ & &$43.53^{6.03}$ & $22.20^{0.89}$ & $0.95^{0.05}$ \\ 
  		\texttt{GMF - Airwls} & $>100$ & $>100$ & $13.67^{2.64}$&  & $>100$ & $>100$ & $103.55^{10.55}$ \\
        \texttt{GLLVM - LA}  & $80.18^{15.82}$ & $43.89^{1.59}$ & $44.20^{2.81}$& & $65.70^{0.31}$& $42.26^{0.11}$& $250.44^{17.52}$\\
                \texttt{GLLVM - EVA}  & $87.82^{7.83}$  & $39.96^{0.82}$ & $21.12^{1.09}$ & & $25.11^{0.56}$& $14.88^{0.18}$& $128.93^{7.52}$\\
		\texttt{FLAIR} & $53.55^{0.81}$ & $32.01^{0.41}$ & $0.22^{0.01}$ & & $27.23^{0.71}$ & $15.09^{0.19}$ & $0.56^{0.01}$ \\
  &  \multicolumn{7}{c}{$p=200$}  \\
& \multicolumn{3}{c}{$n=100$}& &\multicolumn{3}{c}{$n=500$} \\
 Method & $\Lambda \Lambda^\top$ & $B$ & time (s) & & $\Lambda \Lambda^\top$ & $B$ & time (s) \\
		\texttt{GMF - Newton} &$53.65^{2.25}$ & $33.61^{0.66}$& $1.96^{0.55}$ & & $34.01^{6.04}$ &$16.81^{0.40}$& $2.09^{0.10}$\\ 
  		\texttt{GMF - Airwls} &$>100$  & $>100$ & $16.62^{5.37}$& &$31.15^{5.39}$ &$18.36^{0.72}$& $170.25^{27.89}$\\
        \texttt{GLLVM - LA}  & $79.07^{13.96}$ & $46.69^{3.21}$ & $132.78^{8.89}$ & & $66.08^{0.26}$& $42.18^{0.07}$ & $848.70^{106.34}$ \\
        \texttt{GLLVM - EVA}  & $73.29^{2.67}$ & $40.54^{0.83}$ & $26.40^{1.32}$& & $23.80^{0.35}$ & $15.05^{0.16}$& $267.89^{18.46}$ \\
		\texttt{FLAIR} & $50.49^{0.58}$ & $32.02^{0.33}$& $0.75^{0.14}$ & & $22.21^{0.21}$ & $14.69^{0.15}$& $0.94^{0.01}$ 
	\end{tabular}}
	\label{tablelabel}
	\caption{Comparison of the methods in terms of estimation accuracy. Root normalized squared error for $\Lambda \Lambda^\top$ and $B$, and running time.  We report mean and standard error over 50 replications. Estimation errors have been multiplied by $10^2$. 
 \texttt{GMF - Newton} and \texttt{GMF - Airwls} denote \citet{gmf}'s method fitted via the quasi Newton method and via alternating iteratively reweighted least square algorithm respectively. 
\texttt{GLLVM - LA} and \texttt{GLLVM - EVA} denote the generalized linear latent variable model fitted via the Laplace approximation and extended variational approximation respectively.
	}
	\label{tab:accuracy_low_dimensional}
\end{table}

We conducted a simulation study to test the performance of \texttt{FLAIR} in a lower dimensional scenario. In particular, we simulate data from model \eqref{eq:gllvm}, where parameters are generated as follows
\begin{equation*}\begin{aligned}
\lambda_{0jl} & \sim TN(0, \sigma^2, -5, 5), \quad \beta_{0jl'}\sim TN(0,\sigma^2, -5, 5)\\ %
\end{aligned}  \end{equation*} for $j=1, \dots, p$, $l=1, \dots, k$, $l'=1, \dots, q$, and $TN(\mu, \sigma^2, a, b)$ denotes a truncated normal distribution with mean $\mu$, variance $\sigma^2$, and support $(a, b)$. We let the sample and outcome sizes be $(n, p) \in \{100, 500\} \times \{50, 100, 200\}$, and we set $k=q=2$ and $\sigma^2=1$,
For each configuration, we replicate the experiment 50 times. 
We compare our model to \texttt{GMF} and to a generalized linear latent variable model fitted by the Laplace approximation (\cite{gllvm_va}, \texttt{GLLVM-LA}) and an extended variational approximation (\cite{gllvm_eva}, \texttt{GLLVM-EVA}) as implemented in the \texttt{gllvm} R package. The \texttt{gllvm} package provides an estimate of the covariance of the model estimates. For $\Lambda \Lambda^\top$, we obtained Monte Carlo estimates of the confidence intervals. For each method, we set the tuning parameters equal to their default values. The number of latent factors was estimated using the information criterion discussed in Section \ref{subsec:choice_k}, which always picked the correct value.
Table \ref{tab:accuracy_low_dimensional} reports a comparison in terms of estimation accuracy and computational time. Even in lower dimensional examples \texttt{FLAIR} compares favorably to competitors in many scenarios. Table \ref{tab:uq_additional} provides additional evidence of the frequentist validity of \texttt{FLAIR} credible intervals; these intervals had valid frequentist coverage on average over entries of $B$ and $\Lambda \Lambda^\top$ for $p \geq 100$ while suffering only a mild under-coverage for $p=50$. In contrast, \texttt{GLLVM-LA} suffers from undercoverage, and \texttt{GLLVM-EVA} provides valid uncertainty quantification for $B$ but not for $\Lambda \Lambda^\top$.

\begin{table}
\centering
{   
    \begin{tabular}{crrcrr}
   & \multicolumn{5}{c}{$p=50$} \\
    & \multicolumn{2}{c}{$n=100$} & & \multicolumn{2}{c}{$n=500$} \\
    Method & $\Lambda \Lambda^\top$ & $B$ & & $\Lambda \Lambda^\top$ & $B$\\ 
    \texttt{GLLVM - LA}& $76.48^{4.92}$ & $78.54^{0.36}$ & & $66.60^{4.46}$ & $58.68^{3.07}$ \\ 
       \texttt{GLLVM - EVA} & $79.52^{4.54}$ & $97.27^{0.30}$ & & $67.18^{4.72}$& $97.60^{0.29}$ \\ 
    \texttt{FLAIR} & $96.43^{0.32}$ & $92.14^{0.46}$ & & $93.60^{0.28}$ & $89.90^{0.53}$ \\
    vanilla \texttt{FLAIR} ($\rho=1$) & $89.96^{0.54}$ & $84.22^{0.61}$ & & $86.16^{0.39}$ & $81.32^{0.69}$ \\
   & \multicolumn{5}{c}{$p=100$} \\
        & \multicolumn{2}{c}{$n=100$} & & \multicolumn{2}{c}{$n=500$} \\
    Method & $\Lambda \Lambda^\top$ & $B$ & & $\Lambda \Lambda^\top$ & $B$\\ 
    \texttt{GLLVM - LA} & $67.39^{5.75}$ & $79.18^{0.35}$ & & $86.20^{2.43}$ & $56.62^{0.25}$ \\ 
    \texttt{GLLVM - EVA}&  $71.55^{5.50}$ & $97.50^{0.24}$ & & $90.67^{2.63}$  & $97.74^{0.19}$  \\ 
    \texttt{FLAIR} & $97.59^{0.19}$ & $97.20^{0.23}$ & & $97.71^{0.12}$ & $96.69^{0.24}$ \\
    vanilla \texttt{FLAIR} ($\rho=1$) & $92.13^{0.31}$ & $93.20^{0.39}$ & & $92.03^{0.22}$ & $92.62^{0.40}$ \\
    &\multicolumn{5}{c}{$p=200$} \\
        & \multicolumn{2}{c}{$n=100$} & & \multicolumn{2}{c}{$n=500$} \\
    Method & $\Lambda \Lambda^\top$ & $B$ & & $\Lambda \Lambda^\top$ & $B$\\ 
    \texttt{GLLVM - LA}& $70.48^{3.07}$ & $79.16^{0.21}$ & & $34.37^{3.23}$  &  $56.08^{0.22}$\\ 
    \texttt{GLLVM - EVA} & $78.00^{3.62}$ & $97.63^{0.18}$ & & $39.47^{3.60}$  & $97.70^{0.56}$ \\ 
\texttt{FLAIR} & $98.44^{0.14}$ & $96.41^{0.22}$ & & $97.70^{0.15}$ & $96.15^{0.22}$ \\
    vanilla \texttt{FLAIR} ($\rho=1$) & $93.15^{0.27}$ & $90.31^{0.40}$ & & $92.44^{0.26}$ & $90.59^{0.38}$ \\
    \end{tabular}}
    \label{tablelabel}
    \caption{Comparison of the methods in terms of uncertainty quantification. Average frequentist coverage for entries of $\Lambda \Lambda^\top$ and $B$. We report mean and standard deviation over 50 replications. Coverage values have been multiplied by $10^2$.
    We report mean and standard error over 50 replications.  Coverage values have been multiplied by $10^2$. \texttt{GLLVM - LA} and \texttt{GLLVM - EVA} denote the generalized linear latent variable model fitted via the Laplace approximation and extended variational approximation respectively. For \texttt{FLAIR}, we report results with and without applying the correction factor $\rho$ to the posterior variance.}
    \label{tab:uq_additional}
\end{table}

\subsection{Longitudinal Scenarios}\label{subsec:longitudinal_experiments}
We present some experiments in longitudinal scenarios.
In particular, we compare with \citet{lee24} (\texttt{LVHML}, henforth), implemented using the code  at \href{https://github.com/Arthurlee51/LVHML}{https://github.com/Arthurlee51/LVHML}. As discussed in the introduction, \texttt{LVHML} assumes latent factors to be fixed constants. Therefore, we cannot directly estimate the latent covariance between outcomes with $\hat \Lambda \hat \Lambda^\top$. 
In fact, the latent covariance between species implied by model \eqref{eq:gllvm} is given by $\Lambda cov(\eta_i) \Lambda^\top$, which reduces to $\Lambda \Lambda^\top$ in \eqref{eq:model_methodology_integrated}, since $cov(\eta_i) = I_k$ as we assumed $\eta_i \sim N_k(0, I_k)$. To remedy this and estimate the latent covariance $\Lambda \Lambda^\top$ in \eqref{eq:model_methodology_integrated} for \texttt{LVHML}, we use $\hat \Lambda \big(\hat M^\top \hat M) /n \hat \Lambda^\top $, which is obtained by replacing $cov(\eta_i)$ by the empirical covariance of estimates for latent factors, where $\hat M$ and $\hat \Lambda$ are estimates for latent factors and factor loadings from \texttt{LVHML}.

Even though \texttt{LVHML} can be seen as a more general version of the model presented in \texttt{FLAIR}, its current implementation does not allow model fitting if only one time point is used (that is when $T=1$), which prevents a comparison when data is generated according to model \eqref{eq:gllvm}. Instead, we consider two longitudinal scenarios, both with $T=2$: in the first one (scenario (a)), for each $i,j$, we observe the outcome at both time points, in the second one (scenario (b)), we observe all the outcomes at the first time point, and only outcomes from the first sample at the second time point. We generate the parameters as in Section \ref{sec:simulation} with the same values of $\sigma$, $k$, and $q$, and $(n, p) \in \{500, 1000\} \times \{1000, 10000\}$.

For \texttt{FLAIR}, we neglect the longitudinal structure and fit the methodology concatenating data from different time points and considering samples of the same unit $i$ from two time points as different (independent) samples. In comparing the estimation accuracy of the regression coefficients, we excluded the intercept, which is not comparable between the two models.
For uncertainty quantification of \texttt{LVHML} estimates, we construct confidence intervals for the $\beta_j$'s using Theorem 3 in \citet{lee24}. For the $\lambda_j$'s, we sample $n_{MC}$ samples from a normal distribution with variances implied by Theorem 6 in \citet{lee24}, obtain the corresponding samples for  $ \Lambda \big(\hat M^\top \hat M) /n  \Lambda^\top $, and estimate Monte Carlo confidence intervals.

\begin{table}[H]
\centering
{	\begin{tabular}{crrrcrrr}
		&  \multicolumn{7}{c}{$p=1000$}  \\
            		&  \multicolumn{7}{c}{Scenario (a)}  \\
& \multicolumn{3}{c}{$n=500$}& &\multicolumn{3}{c}{$n=1000$} \\
		Method & $\Lambda \Lambda^\top$ & $B$ & time (s) & & $\Lambda \Lambda^\top$ & $B$ & time (s)  \\
				\texttt{LVHML} & $ 33.92^{0.23}$ & $ 12.26^{0.06}$  & $35.25^{2.32}$ & &$ 23.11^{0.14}$ & $ 8.38^{0.03}$ &$73.51^{4.62}$  \\
		\texttt{FLAIR} & $ 29.32^{0.12}$ & $11.48^{0.05}$  & $ 25.28^{1.41}$ & &$ 20.77^{0.08}$ & $ 8.10^{0.03}$ &$ 56.87^{3.91}$  \\
    &  \multicolumn{7}{c}{Scenario (b)}  \\
& \multicolumn{3}{c}{$n=500$}& &\multicolumn{3}{c}{$n=1000$} \\
		Method & $\Lambda \Lambda^\top$ & $B$ & time (s) & & $\Lambda \Lambda^\top$ & $B$ & time (s)  \\
    \texttt{LVHML} &  $ 54.22^{0.32}$  &$16.72^{0.05}$ & $15.28^{1.43}$ & &$35.00^{0.19}$ & $ 11.09^{0.03}$ &$45.77^{4.01}$  \\
		\texttt{FLAIR} & $38.82^{0.12}$ & $ 14.41^{0.05}$ & $10.24^{0.56}$ & &$ 27.25^{0.07}$ & $ 10.19^{0.03}$ &$20.83^{1.98}$  \\
&  \multicolumn{7}{c}{$p=10000$}  \\
            		&  \multicolumn{7}{c}{Scenario (a)}  \\
& \multicolumn{3}{c}{$n=500$}& &\multicolumn{3}{c}{$n=1000$} \\
		Method & $\Lambda \Lambda^\top$ & $B$ & time (s) & & $\Lambda \Lambda^\top$ & $B$ & time (s)  \\
				\texttt{LVHML} &  $ 33.53^{0.29}$ & $11.73^{0.06}$ & $2743.91^{78.22}$ & &$21.68^{0.01}$ & $ 8.21^{0.01}$  &$4756.29^{92.19}$  \\
		\texttt{FLAIR} &  $ 29.36^{0.17}$ & $ 11.38^{0.05}$ & $180.53^{13.39}$ & &$20.70^{0.08}$ & $ 8.02^{0.03}$ &$ 432.56^{27.21}$  \\
    &  \multicolumn{7}{c}{Scenario (b)}  \\
& \multicolumn{3}{c}{$n=500$}& &\multicolumn{3}{c}{$n=1000$} \\
		Method & $\Lambda \Lambda^\top$ & $B$ & time (s) & & $\Lambda \Lambda^\top$ & $B$ & time (s)  \\
				\texttt{LVHML} & $ 46.85^{0.23}$ & $ 16.04^{0.05}$ & $584.26^{37.68}$ &  & $30.00^{0.01}$ & $10.70^{0.03}$ & $ 754.70^{37.30}$  \\
		\texttt{FLAIR}&$39.51^{0.12}$ & $14.29^{0.04}$ &$121.15^{3.61}$ & &$ 27.44^{0.07}$ & $10.10^{0.03}$ &$154.46^{10.75}$  \\
	\end{tabular}}
	\label{tablelabel}
\caption{Comparison of the methods in terms of estimation accuracy in the longitudinal simulation experiments. Root normalized squared error for $\Lambda \Lambda^\top$ and $B$, and running time.  We report mean and standard deviation over 50 replications.  Estimation errors have been multiplied by $10^2$. \texttt{LVHML} denote \citet{lee24}'s method. For one dataset with $n=1000$, $p=1000$, and two datasets with $n=1000$ and $p=10000$, both in scenario (b), \texttt{LVHML} ran into numerical error. }
	\label{tab:accuracy_longitudinal}
\end{table}

Table \ref{tab:accuracy_longitudinal} reports the relative estimation error for $\Lambda \Lambda^\top$ and $B$ and the running times. In all scenarios, \texttt{FLAIR} offers better estimation accuracy and shorter running times, even by a factor $>10$ in high-dimensional scenarios. Table \ref{tab:accuracy_beta_longitudinal} shows the median and maximum root mean squared error in estimating individual $\beta_j$'s (the rows of $B$). For both metrics, \texttt{FLAIR} outperforms the competitor with notable decreases in
maximum error. Table \ref{tab:uq_longitudinal} reports the average coverage over entries of $B$ and $\Lambda \Lambda^\top$.  
\texttt{LVHML} intervals suffer from under-coverage in all cases. In contrast, \texttt{FLAIR} obtains valid average coverage in all experiments, except for a mild under-coverage of $B$ in the Scenario (a), that is when \texttt{FLAIR} is more misspecified. \texttt{FLAIR} without the coverage correction (that is when $\rho=1$) has higher coverage than \texttt{LVHML} with intervals that are shorter or of comparable length. 

 \texttt{LVHML} showed numerical instabilities, encountering numerical errors for one dataset with $n=1000$, $p=1000$, and two datasets with $n=1000$ and $p=10000$, both in scenario (b).

\begin{table}
\centering
{	\begin{tabular}{crrcrr}
		&  \multicolumn{5}{c}{$p=1000$}  \\
            		&  \multicolumn{5}{c}{Scenario (a)}  \\
& \multicolumn{2}{c}{$n=500$}& &\multicolumn{2}{c}{$n=1000$} \\
		Method & Median & Max & & Median & Max\\
				\texttt{LVHML} & $ 11.12^{0.05}$&$ 30.72^{0.53}$  &  &$ 7.66^{0.03}$ &$ 20.39^{0.33}$ \\
		\texttt{FLAIR} & $ 10.62^{0.05}$&$ 25.48^{0.30}$  &  &$ 7.48^{0.02}$ &$18.36^{0.24}$ \\
         		&  \multicolumn{5}{c}{Scenario (b)}  \\
& \multicolumn{2}{c}{$n=500$}& &\multicolumn{2}{c}{$n=1000$} \\
		Method & Median & Max & & Median & Max\\
				\texttt{LVHML} & $ 14.93^{0.05}$&$ 50.16^{1.22}$  &  &$ 10.10^{0.03}$ &$ 28.84^{0.52}$ \\
		\texttt{FLAIR} & $ 13.35^{0.04}$&$ 30.42^{0.38}$  &  &$9.53^{0.02}$ &$21.50^{0.26}$ \\        		
        &  \multicolumn{5}{c}{$p=10000$}  \\
        &  \multicolumn{5}{c}{Scenario (a)}  \\
        & \multicolumn{2}{c}{$n=500$}& &\multicolumn{2}{c}{$n=1000$} \\
		Method & Median & Max & & Median & Max\\
				\texttt{LVHML} & $10.76^{0.05}$&$ 33.89^{0.51}$  &  & $ 7.55^{0.03}$& $ 22.77^{0.28}$ \\
		\texttt{FLAIR} & $ 10.55^{0.05}$&$29.11^{0.34}$  &  &$ 7.42^{0.03}$ &$ 20.25^{0.21}$ \\
         		&  \multicolumn{5}{c}{Scenario (b)}  \\
& \multicolumn{2}{c}{$n=500$}& &\multicolumn{2}{c}{$n=1000$} \\
		Method & Median & Max & & Median & Max\\
				\texttt{LVHML} & $ 14.51^{0.04}$&$ 54.90^{0.11}$  &  &$ 9.85^{0.02}$ &$ 32.00^{0.49}$ \\
		\texttt{FLAIR} & $ 13.41^{0.04}$&$ 33.89^{0.31}$  &  &$ 9.47^{0.02}$ &$ 24.53^{0.27}$ \\ 
	\end{tabular}}
	\label{tablelabel}
	\caption{Comparison of the methods in terms of estimation accuracy for the individual $\beta_{j}$'s in the longitudinal experiments. Median and Maximum root mean squared error for the individual $\beta_{j}$'s. We report mean and standard deviation over 50 replications. Estimation errors have been multiplied by $10^2$. \texttt{LVHML} denote \citet{lee24}'s method.}
	\label{tab:accuracy_beta_longitudinal}
\end{table}

\begin{table}[h]
\centering
{   
    \begin{tabular}{crrrrcrrrr}
    &  \multicolumn{9}{c}{$p=1000$}  \\
            		&  \multicolumn{9}{c}{Scenario (a)}  \\
                    & \multicolumn{4}{c}{$n=500$}& &\multicolumn{4}{c}{$n=1000$}\\
& \multicolumn{2}{c}{Coverage}& \multicolumn{2}{c}{Length} & & \multicolumn{2}{c}{Coverage}& \multicolumn{2}{c}{Length} \\
    Method & $\Lambda \Lambda^\top$ & $B$ &  $\Lambda \Lambda^\top$ & $B$ & & $\Lambda \Lambda^\top$ & $B$ &  $\Lambda \Lambda^\top$ & $B$\\ 
    \texttt{LVHML}& $89.95^{0.21}$ &$86.80^{0.17}$  &$ 0.88$ &$0.36 $ & &$89.56^{0.20}$ & $87.02^{0.15}$ &$0.60$ & $0.25$\\ %
    \texttt{FLAIR} &$95.35^{0.12}$ &$93.15^{0.14}$  &$ 0.96$  & $0.43 $ & &$95.10^{0.14}$ & $93.23^{0.11}$ &$0.67$ & $0.30$ \\
    vanilla \texttt{FLAIR} ($\rho=1$) & $90.68^{0.17}$ &$88.12^{0.16}$  &$0.80 $ & $ 0.36$  & &$90.15^{0.19}$ & $88.01^{0.14}$&$0.56$ & $0.25$\\
    &  \multicolumn{9}{c}{Scenario (b)}  \\
                    & \multicolumn{4}{c}{$n=500$}& &\multicolumn{4}{c}{$n=1000$}\\
& \multicolumn{2}{c}{Coverage}&\multicolumn{2}{c}{Length} & & \multicolumn{2}{c}{Coverage} &\multicolumn{2}{c}{Length} \\
    Method & $\Lambda \Lambda^\top$ & $B$  & $\Lambda \Lambda^\top$ & $B$ & & $\Lambda \Lambda^\top$ & $B$  & $\Lambda \Lambda^\top$ & $B$\\ 
    \texttt{LVHML}& $90.16^{0.18}$ &$89.71^{0.11}$  &$1.39$ & $0.52$& &$89.90^{0.19}$ & $90.24^{0.09}$ &$0.90$ & $0.35$\\ %
    \texttt{FLAIR} &$96.88^{0.10}$ &$95.75^{0.08}$  &$1.39$ & $0.61$& &$96.66^{0.09}$ & $95.75^{0.07}$& $0.96$ & $0.43$\\
    vanilla \texttt{FLAIR} ($\rho=1$) & $92.81^{0.13}$ &$91.59^{0.10}$  &$1.15$ & $0.50$& &$92.46^{0.13}$ & $91.50^{0.09}$& $0.80$ & $0.35$\\
    &  \multicolumn{9}{c}{$p=10000$}  \\
&  \multicolumn{9}{c}{Scenario (a)}  \\
                    & \multicolumn{4}{c}{$n=500$}& &\multicolumn{4}{c}{$n=1000$}\\
& \multicolumn{2}{c}{Coverage}&\multicolumn{2}{c}{Length} & & \multicolumn{2}{c}{Coverage} &\multicolumn{2}{c}{Length} \\
    Method & $\Lambda \Lambda^\top$ & $B$  & $\Lambda \Lambda^\top$ & $B$ & & $\Lambda \Lambda^\top$ & $B$  & $\Lambda \Lambda^\top$ & $B$\\ 
    \texttt{LVHML}& $87.48^{0.46}$ &$87.43^{0.16}$  &$0.74$ & $0.35$& &$90.83^{0.17}$ & $87.30^{0.14}$& $0.52$ & $0.24$\\ %
    \texttt{FLAIR} &$95.02^{0.11}$ &$92.45^{0.14}$  &$0.84$ & $0.42$& &$94.99^{0.10}$ & $92.36^{0.12}$& $0.59$ & $0.29$\\
    vanilla \texttt{FLAIR} ($\rho=1$) & $91.25^{0.14}$ &$88.43^{0.16}$  &$0.73$ & $0.36$ & &$91.29^{0.12}$ & $88.34^{0.15}$& $0.51$ & $0.25$\\
&  \multicolumn{9}{c}{Scenario (b)}  \\
                    & \multicolumn{4}{c}{$n=500$}& &\multicolumn{4}{c}{$n=1000$}\\
& \multicolumn{2}{c}{Coverage}&\multicolumn{2}{c}{Length} & & \multicolumn{2}{c}{Coverage} &\multicolumn{2}{c}{Length} \\
    Method & $\Lambda \Lambda^\top$ & $B$  & $\Lambda \Lambda^\top$ & $B$ & & $\Lambda \Lambda^\top$ & $B$  & $\Lambda \Lambda^\top$ & $B$\\ 
    \texttt{LVHML}& $92.69^{0.09}$ &$90.33^{0.09}$  &$1.17$ & $0.51$& &$92.58^{0.11}$ & $90.70^{0.08}$& $0.77$ & $0.35$\\ %
    \texttt{FLAIR} &$96.55^{0.08}$ &$95.22^{0.07}$  &$1.23$ & $0.58$& &$96.24^{0.08}$ & $95.06^{0.08}$& $0.84$ & $0.41$\\
    vanilla \texttt{FLAIR} ($\rho=1$) & $93.31^{0.09}$ &$91.82^{0.10}$  &$1.06$ & $0.51$ & &$92.98^{0.11}$ & $91.70^{0.09}$& $ 0.73$ & $ 0.36$\\
    \end{tabular}}
    \label{tablelabel}
   \caption{Comparison of the methods in terms of uncertainty quantification in longitudinal simulation examples. 
    Average frequentist coverage for entries of a random $100\times 100$ submatrix of $\Lambda \Lambda^\top$ and $B$ and length of the corresponding intervals. We report mean and standard deviation over 50 replications. Coverage values have been multiplied by $10^2$. All the standard errors for the length of intervals were smaller than $0.01$ and omitted. \texttt{LVHML} denote \citet{lee24}'s method. For \texttt{FLAIR}, we report results with and without applying the correction factor $\rho$ to the posterior variance.}
    \label{tab:uq_longitudinal}
\end{table}

\subsection{Without Covariates scenarios}\label{subsec:no_covs}

We also consider examples without covariates and compare \texttt{FLAIR} to \citet{chen_jmle}'s method (\texttt{JMLE}, henceforth) using the R package \texttt{mirtjml}. \texttt{JMLE} considers model \eqref{eq:gllvm} with only the intercept and no additional covariates. Hence, \texttt{JMLE} can be considered as a special case of \texttt{LVHML} with $T=1$ and without covariates. 
As for \texttt{LVHML}, \texttt{JMLE} assumes the latent factors to be fixed constants and therefore, we make the same adjustment discussed in Section \ref{subsec:longitudinal_experiments} to estimate the latent covariance. 

We simulate data from the following model
\begin{eqnarray}
\mbox{pr}(y_{ij}=1|\eta_i) = 
h(\mu_{0j} + \lambda_{0j}^\top \eta_i),\quad 
\eta_i \sim N_k(0, I_k),\quad  (i=1,\ldots,n),\label{eq:gllvm}
\end{eqnarray}
where loadings and intercepts are generated as follows
\begin{equation*}
\lambda_{0jl} \sim 0.5 \delta_0  + 0.5 TN(0, \sigma^2, [-5, 5]), \quad \mu_{0j}\sim TN(0, \sigma^2, [-5, 5]) , \quad (l=1, \dots, k; j=1, \dots, p).
 \end{equation*}
We let the sample and outcome sizes be $(n, p) \in \{500, 1000\} \times \{1000, 10000\}$, and set $k=10$ and $\sigma^2 = 0.5$. 
 \texttt{JMLE} imposes a constraint on the Frobenius norm of the parameters and we find the results to be highly sensitive to this choice. To select this hyperparameter, we perform a $80\%/20\%$ train and test split, fit \texttt{JMLE} for each value in $\{1, 2, \dots, 10\}$ on the training set, and pick the value maximizing the area under the curve on the test data. Next, we refit \texttt{JMLE} on the entire data using the chosen hyperparameter. For each run, the optimal value of the hyperparameter was never on the boundary.
 
Table \ref{tab:accuracy_no_covariates} reports estimation accuracy and running times for both methods. We focus on the normalized Frobenius error for $\Lambda \Lambda^\top$, the root mean squared error for $\mu$, and the median and maximum absolute error for the entries of $\mu$. As in the longitudinal experiments, \texttt{FLAIR} obtains better estimation accuracy, which is particularly evident for $\Lambda \Lambda^\top$ and the maximum absolute error for entries of $\mu$. Moreover, \texttt{FLAIR} is remarkably faster, with a gain of a factor of $\approx 5$ when $p=10000$. This gain is for a single model fit that ignores the time taken to repeatedly run \texttt{JMLE} to tune the hyperparameters; \texttt{FLAIR} is run a single time on each data set without the need for such tuning. Table \ref{tab:uq_no_cov} provides further evidence on the validity of credible intervals of \texttt{FLAIR} with only minor under-coverage for $\mu$ when $p=10000$ and precise coverage for $\Lambda \Lambda^\top$.

\begin{table}[h]
\centering
{	\begin{tabular}{crrrrr}
		&  \multicolumn{5}{c}{$p=1000$, $n=500$}  \\
		Method & $\Lambda \Lambda^\top$ & $\mu$ & Median $ \mu_{j}'s$ & Max $\mu_j$'s & time (s)   \\
    \texttt{JMLE} &  $ 44.75^{0.38}$ & $ 14.55^{0.14}$& $9.20^{0.09}$&$64.23^{2.10}$  &  $17.27^{0.89}$  \\
		\texttt{FLAIR} &  $37.08^{0.14}$ & $ 13.51^{0.14}$& $8.80^{0.09}$&$52.27^{1.50}$ & $12.74^{0.58}$ \\
		&  \multicolumn{5}{c}{$p=1000$, $n=1000$}  \\
		Method & $\Lambda \Lambda^\top$ & $\mu$ & Median $ \mu_{j}'s$ & Max $\mu_j$'s & time (s)   \\
        \texttt{JMLE}  &  $ 31.15^{0.21}$ & $ 10.18^{0.09}$ & $6.46^{0.06}$  &$ 44.64^{1.08}$ & $ 37.84^{3.59}$   \\
		\texttt{FLAIR}  &  $ 25.68^{0.07}$ & $ 9.64^{0.09}$ & $6.24^{0.06}$ &$37.32^{0.63}$ & $ 20.35^{1.08}$   \\
        &  \multicolumn{5}{c}{$p=10000$, $n=500$}  \\
Method & $\Lambda \Lambda^\top$ & $\mu$ & Median $ \mu_{j}'s$ & Max $\mu_j$'s & time (s)   \\
        \texttt{JMLE}  & $ 39.78^{0.19}$  &$ 14.00^{0.13}$  & $9.07^{0.08}$  &$ 67.38^{0.95}$ & $ 190.44^{1.24}$   \\
		\texttt{FLAIR}  &  $ 38.31^{0.15}$ & $ 13.41^{0.13}$ & $8.77^{0.08}$ &$ 62.92^{1.22}$ & $ 30.77^{0.46}$   \\
        & \multicolumn{5}{c}{$p=10000$, $n=1000$} \\
Method & $\Lambda \Lambda^\top$ & $\mu$ & Median $ \mu_{j}'s$ & Max $\mu_j$'s & time (s)   \\
        \texttt{JMLE}  &  $ 27.08^{0.10}$ & $9.73^{0.08}$ & $6.32^{0.05}$  &$46.34^{0.71}$ & $ 286.96^{15.86}$   \\
		\texttt{FLAIR}  &  $26.18^{0.08}$ & $ 9.50^{0.08}$ & $6.20^{0.05}$ &$43.49^{0.71}$ & $ 45.37^{0.72}$   \\
	\end{tabular}}
	\caption{Comparison of the methods in terms of estimation accuracy in the simulation experiments without covariates. 
     Root normalized squared error for $\Lambda \Lambda^\top$ and $\mu$, median and maximum absolute error for entries of $\mu$, and running time. We report mean and standard deviation over 50 replications. Estimation errors have been multiplied by $10^2$. \texttt{JMLE} denote \citet{chen_jmle}'s method.}
	\label{tab:accuracy_no_covariates}
\end{table}

\begin{table}[h]
\centering
{	\begin{tabular}{crrcrr}
&  \multicolumn{5}{c}{$p=1000$}  \\
& \multicolumn{2}{r}{$n=500$}& &\multicolumn{2}{r}{$n=1000$} \\
		Method & $\Lambda \Lambda^\top$ & $\mu$ & & $\Lambda \Lambda^\top$ & $\mu$ \\ 
		\texttt{FLAIR} &$ 96.85^{0.09}$ &$ 94.52^{0.23}$ &  & $ 96.47^{0.08}$&$ 94.17^{0.25}$  \\ 
		vanilla \texttt{FLAIR} ($\rho=1$) & $ 92.65^{0.16}$ & $ 89.51^{0.37}$ & & $ 92.13^{0.12}$ & $ 88.99^{0.31}$\\ 
   &  \multicolumn{5}{c}{$p=10000$}  \\
& \multicolumn{2}{r}{$n=500$}& &\multicolumn{2}{r}{$n=1000$}\\
		Method & $\Lambda \Lambda^\top$ & $\mu$ & & $\Lambda \Lambda^\top$ & $\mu$ \\ 
\texttt{FLAIR} &$ 96.50^{0.07}$ &$ 92.01^{0.29}$ &  & $ 96.40^{0.08}$&$ 91.50^{0.31}$  \\ 
		vanilla \texttt{FLAIR} ($\rho=1$) &$92.77^{0.12}$ &$ 86.94^{0.35}$ &  &$ 92.28^{0.12}$ & $ 86.60^{0.36}$ \\  
	\end{tabular}}
	\caption{Frequentist coverage for \texttt{FLAIR} with and without applying the correction factor $\rho$ to the posterior variance in the simulation experiments without covariates. 
    Average frequentist coverage for entries of a random $100\times 100$ submatrix of $\Lambda \Lambda^\top$ and $\mu$ for equi-tailed 95\% credible intervals for  \texttt{FLAIR} in simulation studies for varying $n$ and $p$ with and without applying the correction factor $\rho$ to the posterior variance.
We report mean and standard error over 50 replications. All values have been multiplied by $10^2$. }
	\label{tab:uq_no_cov}
\end{table}

\subsection{Additional Details for the Numerical Experiments}
In the experiments in the main article, to select the hyperparameters of \texttt{GMF} controlling the $L_2$ penalty on $B$ and $\Lambda$, $\gamma_B$ and $\gamma_\Lambda$, we estimate the test sample predictive accuracy measured via the area under the curve for each combination of $(\gamma_B, \gamma_\Lambda) \in \{0, 0.5, 1, 5, 10\} \times \{0, 0.5, 1, 5, 10\}$ and refit the model on the entire dataset with the configuration maximizing the area under the curve. The number of latent factors was estimated using via the information criterion discussed in Section \ref{subsec:choice_k}, which always picked the correct value. 
For \texttt{FLAIR}, we always set $C_\Lambda$ and $C_B$ to 10. 
The code to implement the \texttt{FLAIR} methodology and replicate the experiments is available at \url{https://github.com/maurilorenzo/FLAIR/}. All experiments were run on a Laptop with 11th Gen Intel(R) Core(TM) i7-1165G7 @ 2.80GHz and 16GB RAM.

\subsection{Additional Details for the Application to Madagascar Arthropods Data}
For \texttt{GMF}, we test each configuration of $(\gamma_B, \gamma_\Lambda) \in \{0, 0.5, 1, 5, 10, 20, 50\} \times \{0, 0.5, 1, 5, 10, 20, 50\}$. For  \texttt{FLAIR}, we set $C_\Lambda$ and $C_B$ to 10. 
The initialization strategy of \texttt{FLAIR} described in Section \ref{subsec:svd_init} requires the application of a singular value decomposition to the original data matrix. 
To avoid using the hold out data in the procedure, we imputed each element of the held out data set. In particular, if the observation $y_{ij}$ was included in the hold out set, we replaced it by the product of the empirical means of the $i$-th row and the $j$-th column of $Y$.  We considered other imputation strategies but noticed negligible dependence of the final solution on this choice. Moreover, the \texttt{FLAIR} procedure was trivially modified to include in the calculation of the likelihood for the joint maximum {\em a posteriori} computation only observations in the training set.


\end{document}